\documentclass[sigconf]{acmart}
\AtBeginDocument{%
  }

\setcopyright{acmlicensed}
\copyrightyear{2026}
\acmYear{2026}
\acmDOI{XXXXXXX.XXXXXXX}
\acmConference[SIGMOD '27]{the ACM SIGMOD/PODS International Conference on Management of Data}{June 13--19,
  2027}{Huntington Beach, CA}
\acmISBN{978-1-4503-XXXX-X/2018/06}

\usepackage[linesnumbered,ruled,vlined]{algorithm2e}
\usepackage{amsmath}
\usepackage{amsthm}
\usepackage{dsfont}
\usepackage[inline]{enumitem}
\usepackage{makecell}
\usepackage{MnSymbol}
\usepackage{multirow}
\usepackage{subcaption}
\usepackage{xcolor}

\makeatletter
\newcommand{\bparagraph}{%
  \@startsection{paragraph}{4}{\z@}%
    {0.3em}
    {-0.3em}%
    {\normalfont\normalsize\bfseries}
}
\makeatother

\newtheorem{definition}{Definition}
\newtheorem{lemma}{Lemma}

\SetKwComment{Comment}{$\triangleright$\ }{}

\SetCommentSty{mycommfont}

\DeclareMathOperator*{\argmin}{arg\,min}

\begin{document}

\title[Fast Label-Filtering Approximate Nearest Neighbor Search via Progressive Label Set Stratification]{Fast Label-Filtering Approximate Nearest Neighbor Search via Progressive Label Set Stratification}

\author{Ziqi Wang}
\orcid{0009-0003-2027-3524}
\affiliation{%
    \department{State Key Laboratory for Novel Software Technology}
    \institution{Nanjing University \country{China}}
}
\email{ziqiw.nju@gmail.com}

\author{Jingzhe Zhang}
\orcid{0009-0001-7062-9096}
\affiliation{%
    \department{State Key Laboratory for Novel Software Technology}
    \institution{Nanjing University \country{China}} 
}
\email{jzzhang.nju@gmail.com}

\author{Shuo Shen}
\orcid{0009-0004-4035-7570}
\affiliation{%
    \department{State Key Laboratory for Novel Software Technology}
    \institution{Nanjing University \country{China}} 
}
\email{sshen.nju@gmail.com}

\author{Wei Hu}
\orcid{0000-0003-3635-6335}
\authornote{Corresponding author}
\affiliation{
    \department{State Key Laboratory for Novel Software Technology}
    \department{National Institute of Healthcare Data Science}
    \institution{Nanjing University \country{China}}
}
\email{whu@nju.edu.cn}


\begin{abstract}
Approximate nearest neighbor search (ANNS) retrieves the most similar vectors to a query vector in high-dimensional space.
Label-filtering ANNS (LFANNS) extends ANNS with a label filter that the labels of base vectors must satisfy a set relation (e.g., equality, containment, or overlap) with the query labels.
Existing LFANNS indices suffer from inconsistent performance across different filter types and degraded scalability under varying label scale and distribution.
In this paper, we define label-stratified similarity graph (LSSG), where edges connect neighboring vectors whose label sets fall within stratified similarity thresholds.
To implement LSSG efficiently, we design an incremental insertion algorithm to prune redundant edges in both vector and label spaces, and leverage a MinHash structure to ensure scalability for large-scale labels.
We analyze stepwise probabilities under explicit label models and explain why stricter label tiers reduce ineffective in-filtering expansions.
Benchmark experiments show that LSSG achieves ideal optimality for equality queries, and 1.06x--92.9x and 1.08x--84.1x faster than the best competing index for containment and overlap, respectively, in query speed with identical accuracy and 0.35x index size.
\end{abstract}

\begin{CCSXML}
<ccs2012>
   <concept>
       <concept_id>10002951.10003227.10003351.10003445</concept_id>
       <concept_desc>Information systems~Nearest-neighbor search</concept_desc>
       <concept_significance>500</concept_significance>
       </concept>
   <concept>
       <concept_id>10002951.10003317.10003371</concept_id>
       <concept_desc>Information systems~Specialized information retrieval</concept_desc>
       <concept_significance>500</concept_significance>
       </concept>
 </ccs2012>
\end{CCSXML}

\ccsdesc[500]{Information systems~Nearest-neighbor search}
\ccsdesc[500]{Information systems~Specialized information retrieval}
\keywords{approximate nearest neighbor search, graph-based index, vector database, label set similarity}


\received{17 April 2026}
\received[revised]{20 August 2026}
\received[accepted]{12 September 2026}

\maketitle

\section{Introduction}
\label{sec:intro}

Approximate nearest neighbor search (ANNS) is a fundamental operation for semantic query answering, used in vector database management systems (VDBMS)~\cite{milvus-vdb,adbv-vdb,pase-vdb,faiss,vdb-survey}, search engines~\cite{spann,spfresh,jdvisual,infinity}, retrieval-augmented generation (RAG) applications~\cite{rag-survey1,rag-survey2,alayaDB}, etc.
Enormous high-dimensional vectors are generated from texts, images, and audios by modern embedding models~\cite{clip,siglip,glove}.
Given a query vector, rather than scanning all base vectors to obtain the exact $k$ nearest neighbors, ANNS indices enable to retrieve most of these neighbors with significantly lower latency.
Particularly, graph-based ANNS indices~\cite{hnsw,nsg,nssg,filtered-diskann,hscg,almg} are popular for their fast query speed and high accuracy.

With growing demands for customized queries, label-filtering ANNS (LFANNS) has become prominent for semantic-constrained vector search beyond basic similarity retrieval~\cite{nhq-nips23,unifilter,acorn,infinity}.
Label set predicates are commonly organized around three canonical \emph{positive} semantics~\cite{ung,rwalks}.
For example, in a medical image retrieval system~\cite{eg-imgcontent}, a physician may query a CT image labeled ``lung'' and ``nodule''.
An \emph{equality} filter strictly matches images labeled only ``lung'' and ``nodule'', excluding cases with additional findings or comorbidities.
This filter is useful for scenarios requiring precise and controlled subsets~\cite{eg-imgretrieval}.
A \emph{containment} filter retrieves images whose label sets include both query terms (e.g., also annotated with ``CT'', ``solid'', or ``cancer''), ensuring a comprehensive coverage of relevant cases.
An \emph{overlap} filter returns images sharing at least one query label (e.g., lung-related images without nodules, or nodule images from other regions), supporting association analysis.

There are three mainstream solutions for LFANNS~\cite{filterbench_sigmod}.
\begin{enumerate*}
    \item \emph{Pre-filtering} scans all data objects and only computes vector similarity between the query vector and the filtered subset.
    Since no vector index is constructed for the subset, the linear scan becomes inefficient when the subset is large (e.g., tens of thousands of vectors).
    
    \item \emph{Post-filtering} operates on an ANN index constructed over the entire vector dataset.
    It first retrieves the most similar vectors using a retrieval size larger than $k$, and then retains $k$ vectors whose label sets satisfy the filter.
    It can be ineffective when the filter is highly selective and fewer than $k$ vectors are returned.
    
    \item \emph{In-filtering} performs filter checks during index scanning.
    It is commonly adopted by graph-based indices~\cite{nsg,hnsw,filtered-diskann}, with filter checks executed when selecting next-hop candidates.
    Similar to post-filtering, graph traversal may become less connected when few neighbors satisfy the filter and may terminate early at local optima~\cite{acorn}.
\end{enumerate*}
To overcome the drawbacks of pre-filtering and post-filtering, as well as the early termination of in-filtering graph traversal, novel LFANNS indices~\cite{nhq-nips23,filtered-diskann,caps,ung,rwalks,tfanns,eli,unifilter,curator} are proposed with label partition or label enhancement to improve query performance.
However, two critical challenges in LFANNS remain largely unsolved.

\emph{Challenge 1: Inconsistent support for diverse query semantics.}
It is difficult to design a unified index due to fundamentally different label set relations: equality queries require exact matches, containment queries require super/sub-sets, and overlap queries require partial intersections.
An LFANNS index that is effective for one filter type often does not transfer well to others.
For example, indices based on set containment~\cite{eli} can support equality filters only by maintaining isolated auxiliary structures for each unique label set, which leads to severe index fragmentation and space overhead.
Likewise, overlap queries are often handled by iterative containment searches, whose cost grows quickly as the query label set expands~\cite{eli,ung}.
As far as we know, no existing index provides a unified solution to ensure consistent efficiency and theoretical guarantees across all semantics.

\begin{figure}[t]
    \centering
    \includegraphics[width=.95\linewidth]{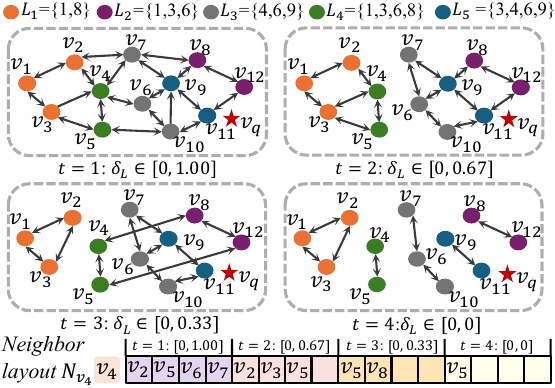}
    \caption{Label-stratified similarity graph (LSSG)}
    \Description{Label-stratified similarity graph (LSSG)}
    \label{fig:LSSG_structure}
\end{figure}

\emph{Challenge 2: Vulnerable to varying label characteristics.}
Another key challenge in multi-label indexing is to remain robust because the number of distinct label sets grows exponentially as the label space expands.
When this number grows to millions, existing indices often face a critical scalability bottleneck: either index size increases dramatically~\cite{eli,rwalks}, or query latency becomes prohibitively high~\cite{ung,tfanns}.
The problem is further aggravated by label distribution.
Many existing indices implicitly assume that labels follow skewed distributions (e.g., power-law)~\cite{tfanns,ung,eli,rwalks,unifilter}, where a large number of data objects share a small set of common labels.
This assumption enables efficient label sharing and therefore more compact index structures.
However, when the distribution goes toward uniform, such sharing opportunities diminish.
As a result, indices that rely on label sharing~\cite{eli} tend to become fragmented or require extensive duplication to remain effective~\cite{ung}.
This creates a rigid trade-off between space consumption and query efficiency, limiting robustness under real-world settings.

We propose \emph{Label-Stratified Similarity Graph} (LSSG), a multi-tier proximity graph that supports equality, containment, and overlap filters within a single index.
The key idea is to interpret label set relations through a \emph{distance} view: different query semantics correspond to searching among vertices with different degrees of label set similarity.
Accordingly, LSSG organizes all vertices into tiers stratified by label set distance.
Each tier enforces a label distance bound, and edges are created only between vertices whose label sets satisfy that bound.
Upper tiers use tighter bounds, producing nested similarity ranges (e.g., $[0,1]\supset[0,0.67]\supset[0,0.33]\supset[0,0]$ in Fig.~\ref{fig:LSSG_structure}).
The bottom tier ($[0,1]$) is label-agnostic and preserves global navigability, while the top tier ($[0,0]$) connects only identical label sets for exact matching. 
Intermediate tiers progressively tighten label consistency for containment and overlap.
Within each tier, vector similarity picks the final neighbors among label-eligible candidates, enabling progressive exploration at query.

Let us consider a containment query with label set $\{3,6\}$ and query vector $v_q$ (red star in Fig.~\ref{fig:LSSG_structure}).
Starting from a random \emph{valid} entry (e.g., $v_4$ whose label set $\{1,3,6,8\}$ contains $\{3,6\}$), a bottom-tier traversal behaves like a vanilla proximity graph. 
It greedily follows vector-similarity edges and may move toward $v_5$, but later stalls because the most promising next hops (e.g., $v_6,v_7,v_{10}$) are \emph{label-incompatible} and cannot be expanded.
The search is therefore trapped in a local minima over the filtered subset.
LSSG mitigates this by \emph{escalating to a stricter tier} when progress stalls.
Higher tiers contain more label-consistent edges, which can reveal an in-filter bridge that is absent (or far less likely) in the label-agnostic tier.
For example, LSSG can take $v_5\rightarrow v_{12}$ at tier 3 and reach the true nearest neighbor via $v_{12}\rightarrow v_{11}$.
Overall, lower tiers preserve global reachability, while higher tiers provide label-consistent shortcuts to escape local minima in selective regions.

In summary, our main contributions are outlined below:
\begin{itemize}[left= 0pt]
    \item
    We propose LSSG, a unified multi-tier proximity graph for equality, containment, and overlap filters.
    The progressive label set stratification unifies the semantics of the three filters in a single graph through tier-ordered in-filtering traversal.
    
    \item
    We introduce a stepwise metric for in-filtering traversal and analyze filter-valid expansion probabilities.
    Equality is preserved exactly at the top tier, while containment and overlap admit lower bounds under explicit label models.
    These bounds strengthen monotonically with label set similarity, explaining why intermediate tiers improve filtered navigation.
    We also derive a conditional logarithmic bound on expected query cost with respect to the filtered subset size.
    
    \item We conduct experiments on eight real-world datasets.
    LSSG improves index size by 16\% and indexing time by 22\% over the most build-efficient baseline~\cite{ung}.
    At matched accuracy, it is up to 1.06x--92.9x (avg.\ 19.3x) faster for containment and up to 1.08x--84.1x (avg.\ 15.5x) faster for overlap than the strongest competitor~\cite{eli}, while using 0.35x index size and 0.87x indexing time.
    Evaluation also validates scalability to hundreds of thousands of labels and millions of label sets, robustness to label distribution drift, effective tier utilization, and competitive performance against popular ANNS systems.
    \emph{Source code, experiment data, and appendix are available at \url{https://github.com/nju-websoft/LSSG}}.
\end{itemize}

\section{Preliminaries}
\subsection{Problem Formulation} \label{sec:problem}

\begin{definition}[ANNS]
Given a $d$-dimensional vector dataset $\mathcal{V}$ measured by a distance function $\delta_v$, for a query vector $v_q\in \mathbb{R}^d$, ANNS seeks to find a result set $\mathcal{P}\subset \mathcal{V}$ of $k$ neighbors ($k\ll |\mathcal{V}|$) to minimize $\sum_{i=1}^{k}\delta_v(v_i,v_q), \forall v_i\in \mathcal{P}$.
\end{definition}

The main distinction between ANNS and exact $k$ nearest neighbor search is that non-closest vectors are allowed in trade for higher efficiency (e.g., queries per second (QPS)).
The approximation accuracy is usually measured by recall@$k$, defined by $\frac{|\mathcal{P}\cap\mathcal{T}|}{k}$, where $\mathcal{T}$ denotes the set of $k$ exact nearest neighbors.

\begin{table}
\centering
\caption{Frequently-used notations}
\label{tab:notation}
{\small
\setlength{\tabcolsep}{3pt}
\begin{tabular}{cp{6.5cm}}
\toprule
    \textbf{Notations} & \textbf{Descriptions} \\
\midrule
    $\mathcal{D} =\{\mathcal{V},\mathcal{L}\}$ & Hybrid dataset $\mathcal{D}$ consisting of vector dataset $\mathcal{V}$ and label dataset $\mathcal{L}$, with mapping function $l:\mathcal{V}\rightarrow\mathcal{L}$\\
    $\delta_v(v_1,v_2)$    & Distance between vectors $v_1$ and $v_2$ \\
    $\delta_L(L_1, L_2)$   & Distance between label sets $L_1$ and $L_2$ \\
    $f$           & Label filter. $f_e$: equality; $f_c$: containment; $f_o$: overlap\\
    $N_v^t$       & Neighbor set of $v$ at tier $t$\\
    $c_i^t,n_i^t$ & $i^\text{th}$ candidate ($c$) and neighbor ($n$) in tier~$t$\\
    $T$           & Number of tiers\\
    $\omega_c$    & Index parameter: beam width in insertion\\   
    $m$           & Index parameter: maximum out-degree\\
    $\phi$        & Index parameter: IVF scanning budget\\
    $\tau,\beta$  & Index parameter: MinHash signature length and bands\\
\bottomrule
\end{tabular}}
\end{table}

\begin{definition}[Label Set and Filter Type] \label{def:labelfilter}
Given a universe of labels $A=\{a_1, a_2, \ldots, a_{|A|}\}$, where $a_j$ is a basic type (e.g., integer, text), a label set dataset is defined by $\mathcal{L}=\{L_1,L_2, \ldots,L_{|\mathcal{L}|}\}, \forall L_i \subseteq A$.
Given a query label set $L_q\subseteq A$, there are three label filter ($f$) types:
\begin{itemize}
\item Equality filter: $f_e(L_i\,|\,L_q)=\mathds{I}(L_q=L_i)$, 
\item Containment filter: $f_c(L_i\,|\,L_q)=\mathds{I}(L_q \subseteq L_i)$, 
\item Overlap filter: $f_o(L_i\,|\,L_q)=\mathds{I}(L_q\cap L_i \neq \emptyset)$,
\end{itemize}
where $\mathds{I}(\cdot)$ is an indicator that equals 1 if true and 0 otherwise.
\end{definition}

Label-filtering ANNS is to find the most similar vectors whose corresponding label sets satisfy a given label filter.

\begin{definition}[LFANNS]
Given a hybrid dataset $\mathcal{D}=\{\mathcal{V}, \mathcal{L}\}$, $l(\cdot):\mathcal{V}\rightarrow \mathcal{L}$ denotes a mapping function for the corresponding label set of a vector in $\mathcal{L}$. 
Given a query vector $v_q\in \mathbb{R}^d$ and a query label set $L_q\subseteq A$, LFANNS aims to find a result set $\mathcal{R}\subset \mathcal{V}$ of $k$ neighbors ($k\ll|\mathcal{V}|$) to minimize $\sum_{i=1}^k \delta_v(v_i, v_q)$, $\forall v_i \in \mathcal{R} \wedge f(l(v_i)\,|\,L_q)=1$, where $f \in \{f_e,f_c,f_o\}$.
\end{definition}

Like ANNS, the accuracy of LFANNS is measured by $\text{recall}@k$ as well.
The \emph{filter ratio} is defined as the number of passed items over the size of the dataset, formally $\frac{\sum_{v_i\in \mathcal{V}}f(l(v_i)\,|\,L_q)}{|\mathcal{V}|}$.
Query workloads with high selectivity typically exhibit a small filter ratio.

Frequently used notations are present in Table~\ref{tab:notation}.

\begin{table}
    \centering
    \caption{Comparison of LFANNS indices}
    {\small
    \setlength{\tabcolsep}{0pt}
    \begin{tabular*}{\columnwidth}{@{\hspace{3pt}\extracolsep{\fill}}lccccccc@{\hspace{3pt}}}
    \toprule
       \textbf{Indices} & \textbf{Mech.} & \textbf{Sem.} & \textbf{Robt.} & \textbf{Guar.} & \textbf{Vers.} & \textbf{Fidl.} & \textbf{Cmpl.} \\
    \midrule
       NHQ & Part. & $f_e$ & $\times$ & $\times$ & $\times$ & $\times$ & $\times$\\
       Vamana & Enhc. & $f_c$ & $\times$ & $\times$ & $\times$ & $\times$ & $\times$\\
       CAPS & Part. & $f_e$ & $\times$ & $\times$ & $\times$ & $\times$ & $\checkmark$\\
       UniFilter & Part. & $f_c,f_o$ & $\checkmark$ & $\checkmark$ & $\times$ & $\times$ & $\checkmark$ \\
       Curator & Part. & $f_c,f_o$ & $\checkmark$ & $\times$ & $\times$ & $\times$ & $\checkmark$\\
       Packing & Enhc. & $f_c$ & $\checkmark$ & $\times$ & $\checkmark$ & $\times$ & $\times$\\
       RWalks & Enhc. & $f_c$ & $\times$ & $\checkmark$ & $\checkmark$ & $\times$ & $\times$\\
       UNG & Enhc. & $f_e,f_c$ & $\times$ & $\times$ & $\checkmark$ & $\checkmark$ & $\checkmark$\\
       ELI & Part. & $f_c,f_o$ & $\times$ & $\checkmark$ & $\checkmark$ & $\times$ & $\times$\\
       \midrule
       LSSG & Enhc. & $f_e,f_c,f_o$ & $\checkmark$ & $\checkmark$ & $\checkmark$ & $\checkmark$ & $\checkmark$\\
    \bottomrule
    \end{tabular*}}

    \vspace{2pt}
    \parbox{\linewidth}{
    \small
    \textbf{Notes.}
    Mech.: mechanism (\emph{Part.}: label partition; \emph{Enhc.}: label enhancement).
    Sem.: query \emph{semantics} supported consistently by design
    (cf.\ Challenge~1).
    Robt.: robustness (cf.\ Challenge~2).
    Guar.: whether theoretical guarantees are provided.
    Vers./Fidl./Cmpl.: the three challenges in \cite{ung}:
    \emph{Versatility}: base and query label sets contain an arbitrary number of labels;
    \emph{Fidelity}: comparisons are only made to vectors whose label sets pass the filter;
    \emph{Completeness}: the index returns sufficient $k$ vectors.
    }
    \label{tab:lfann-comp}
\end{table}

\subsection{Related Work}

\bparagraph{ANNS and Filtering ANNS.}
There are mainly four categories of ANNS indices: tree-based~\cite{tree-based}, LSH-based~\cite{lsh,lsh-lcc,db-lsh2,det-lsh}, clustering-based~\cite{cracking-ivf,ada-ivf,pq,rbq,exrbq}, and graph-based~\cite{hnsw,nsg,nssg,filtered-diskann,hscg,almg}.
Graph-based indices are popular for fast and accurate query performance~\cite{graph-survey,almg}, and their efficiency mainly relies on the monotonic search property of relative neighborhood graphs (RNG)~\cite{nsg,hnsw,graph-survey}, where edge pruning preserves this property by eliminating redundant connections based on neighborhood dominance~\cite{cng,tau-mg}.

Filtering ANNS includes four categories:
\begin{enumerate*}
    \item Vector databases and some ANNS libraries~\cite{vbase-vdb,pase-vdb,milvus-vdb,faiss,sieve,hqi} employ cost estimation to select between pre-filtering and post-filtering strategies.
    \item Predicate-agnostic indices~\cite{acorn,bench-agnostic,navix} support arbitrary filters without requiring attribute information during index construction.
    \item Range-filtering ANNS indices~\cite{serf,irange,dsg,digra,wow} are specifically designed for numerical attributes with filters that denote numeric ranges.
    \item LFANNS indices~\cite{nhq-nips23,filtered-diskann,caps,ung,rwalks,tfanns,eli,unifilter,curator} achieve high efficiency for filtering data objects whose label sets contain query-specified labels.
\end{enumerate*}
Specialized indices for range and label filters show superiority over the first two categories~\cite{irange,eli}.
This paper focuses on LFANNS and discusses recent work as follows.
 
\bparagraph{LFANNS} indices can be divided into two categories.
\emph{Label-partition indices} group vectors with identical labels or label sets, and groups are arranged by a tree or an inverted file (IVF) structure.
Representatives include CAPS~\cite{caps}, ELI~\cite{eli}, UniFilter~\cite{unifilter}, and Curator~\cite{curator}.
This category is effective when filters are simple and label sets are small (e.g., base/query label sets contain fewer than five labels on average), since the index can focus on frequent labels~\cite{caps,curator} or manageable label combinations~\cite{unifilter,eli}.
However, as label sets grow, the number of possible combinations explodes, which either limits support for complex queries~\cite{caps,unifilter,curator} or incurs substantial space overhead~\cite{eli}.
Consequently, when a query does not match any pre-built (shared) index, it must fall back to a less compatible index (typically much larger than the true filtered subset), which can significantly reduce efficiency.

\emph{Label-enhancement indices} leverage label-enhanced edges to connect two vertices in a graph index via fused distance or label sharing relationships.
Representatives are NHQ~\cite{nhq-nips23}, Vamana (filtered and stitched variants)~\cite{filtered-diskann}, RWalks~\cite{rwalks}, Packing~\cite{tfanns}, and UNG~\cite{ung}.
This category can handle label sets with many labels, but existing algorithms perform well only at one end of the selectivity spectrum.
NHQ and RWalks are efficient when the filter is non-selective (i.e., when many vectors can pass the filter).
They combine the vector distance and the label set distance using a user-defined weight.
When the filter is selective, the graph traversal becomes less connected.
Vamana, Packing, and UNG are powerful on query workloads with high selectivity.
The graph is segmented into groups with identical labels or label sets, and enhanced by cross-group edges that connect only to a limited number of label sets satisfying overlap or containment relationships. 
Consequently, the traversal becomes less connected for non-selective queries~\cite{ung,unifilter}.

As summarized in Table~\ref{tab:lfann-comp}, LSSG is the only index satisfying all listed criteria. 
Its performance consistency across all three semantics is empirically validated in Sec.~\ref{sec:eval}.

\section{Label Stratified Similarity Graph}
This section follows the LSSG workflow.
We first define the index topology, then construct it through similar label set selection and multi-tier insertion, and introduce tier-ordered in-filtering search.

\subsection{Graph Topology Formalization} \label{sec:lssg}

\begin{definition}[Label-Stratified Similarity Graph, LSSG] \label{def:lssg}
    Let $G=(V,E)$ be a label-stratified similarity graph, where $V$ and $E$ are sets of vertices and directed edges, respectively.
    Each vertex in $V$ denotes a vector, and $E$ is partitioned into $T$ subsets.
    The $t^\text{th}$ ($1\leq t\leq T$) subset and $V$ compose a subgraph of $G$ in tier $t$: $G_t=(V,E_t)$.
    An edge $v_i\rightarrow v_j \in E_t$ is subject to the following two constraints:
    \begin{enumerate}[left= 0pt, label=C\arabic*.]
        \item $\forall v_k \in V\backslash\{v_i,v_j\}, \delta_v(v_i,v_j) < \delta_v(v_i,v_k) \,\vee\, \delta_v(v_i,v_j) < \delta_v(v_j, v_k)$;
        \item $\delta_L(l(v_i),l(v_j)) \leq \theta_t$ and $\forall t\in(1,T), \forall v_k \in N_u^t \backslash \{v_i\} \wedge \delta_v(v_i,v_k)$ $> \delta_v(v_i,v_j), \delta_L(l(v_j), l(v_k)) \ge \theta_t$, where $\theta_t = 1-\frac{t-1}{T-1}$ and $N_{v_i}^t$ is the neighbor of $v_i$ in $G_t$.
    \end{enumerate}
\end{definition}

Unlike traditional ANNS graphs relying solely on vector distance $\delta_v$, LSSG couples vector proximity with label set distance $\delta_L$.
We use Euclidean distance for $\delta_v$ and Jaccard distance for $\delta_L$, which are widely adopted in LFANNS \cite{ung,eli,nhq-nips23}. 
Other choices can also be incorporated \cite{hnsw,nsg,hscg,psp}.
The triangle inequality of Jaccard distance can benefit effective label-based pruning.
Constraint~C1 follows the standard RNG-style diversification rule used in graph-based ANNS indices like NSG~\cite{nsg} and HNSW~\cite{hnsw}.
Applied within each tier, it removes redundant neighbors in vector space and preserves directional diversity.
Constraint~C2 is the core of LSSG’s tiered design and has two roles:
(1) \emph{Stratification}: the condition $\delta_L(l(v_i),l(v_j))\leq\theta_t$ makes upper tiers (larger $t$, smaller $\theta_t$) increasingly strict, so edges gradually concentrate among vertices with more similar (and ultimately identical) label sets.
(2) \emph{Diversification}: when multiple candidates appear in a similar direction in vector space, C2 prefers neighbors that are label-wise diverse.
Once a closer neighbor $v_j$ is selected, a farther neighbor $v_k$ is retained only if it is sufficiently different from $v_j$ in label space (i.e., $\delta_L(l(v_j),l(v_k))\ge\theta_t$), preventing intermediate tiers from being dominated by a single label cluster and improving reachability across different label combinations.

Fig.~\ref{fig:LSSG_structure} illustrates the resulting hierarchy.
Tier~1 ($\theta_1=1$) serves as the backbone. 
Since all label set distances are $\leq 1$, C2 is effectively relaxed and the graph behaves like a standard RNG-based ANNS structure, ensuring global navigability.
In intermediate tiers ($1<t<T$), edges are selectively admitted or removed according to $\theta_t$, which steers traversal toward regions that better match a query’s label requirements.
For example, in tier~2 ($\theta_2=0.67$), an edge $v_6 \rightarrow v_9$ can be kept because $\delta_L(L_3, L_5)$ falls in the threshold, while a spatially close candidate $v_5$ is excluded if $\delta_L(L_3, L_4) > 0.67$.
Tier~$T$ ($\theta_T=0$) connects only vertices with identical label sets, forming exact-match subgraphs that remain reachable through lower tiers.

\subsection{Similar Label Set Selection}
\label{sec:labelsel}

\begin{algorithm}[t]
\SetNoFillComment
\caption{SelectSimilarLabelSets}
\label{alg:sel_labelsets}
\SetKw{KwBreak}{break}
\SetKwInOut{HyperIn}{Index parameters}
\SetKwProg{myproc}{procedure}{}{}
\KwIn{$L_x$: inserted label set}
\KwOut{$\mathcal{L}_C$: similar label set dataset}
\HyperIn{$\phi,\tau,\beta$ (see Table~\ref{tab:notation})}
\lIf{using IVF scanning}{\Return $\mathrm{IVFUnion}(L_x, \phi)$}
\lElse{\Return $\mathrm{MinHashProbe(\mathcal{L}_x, \tau, \beta)}$}
\BlankLine
\myproc{$\mathrm{IVFUnion}(L_x, \phi)$}{
    $\mathcal{L}_C \leftarrow$ empty bitmap s.t. $|\mathcal{L}_C|=|\mathcal{L}_I|$\;
    $L'_x\leftarrow$ sort $L_x$ by ascending label frequency\;
    \ForEach{label $a\in L'_x$}{
        \lIf(\Comment*[f]{$\veedot$: bitwise OR}){$|\mathcal{L}_C|< \phi$}{$\mathcal{L}_C\leftarrow\mathcal{L}_C \veedot\mathrm{IVF}_a$}
    }
    \Return $\mathcal{L}_C$\;
}

\BlankLine
\myproc{$\mathrm{MinHashProbe}(L_x, \tau, \beta)$}{
    $\mathcal{L}_C\leftarrow\{\},r\leftarrow \frac{\tau}{\beta}$\Comment*{calculate MinHash rows per band}
    \For{band $B_i \leftarrow B_1,\ldots,B_\beta$}{
        \Comment{$B_i$: band is a hash table; hash: FNV-1a hashing~\cite{fnv}}
        $\mathcal{L}_C \leftarrow \mathcal{L}_C \cup B_i(\mathrm{hash}(S_x[r\times(i-1), r\times i]))$\;
    }
    \Return $\mathcal{L}_C$\;
}
\end{algorithm}

To build LSSG, we need to frequently identify candidate label sets that are likely to fall in a tier-specific threshold (i.e., $\delta_L(\cdot,\cdot)\le \theta_t$), after which we compute $\delta_L$ to verify and filter them.
A naive approach is to precompute and store $\delta_L$ for all label set pairs, requiring $\mathrm{O}(1)$ lookup but $\mathrm{O}(|\mathcal{L}|^2)$ space and preprocessing.
This is infeasible at scale.
We use two candidate-selection strategies to avoid quadratic storage and compute exact $\delta_L$ only for a bounded candidate set.

\bparagraph{Progressive IVF scanning}
reduces memory overhead by indexing label-to-labelset membership.
For each label $a$, we maintain a bitmap (or posting list) $\mathrm{IVF}_a$ indicating which label sets contain $a$.
Given an inserted label set $L_x$, $\mathrm{IVFUnion}$ in Alg.~\ref{alg:sel_labelsets} initializes an empty bitmap (Line~4), sorts labels in $L_x$ by ascending label frequency (Line~5), and progressively merges their bitmaps via bitwise OR (Lines~6--7) until a strict candidate budget $\phi$ (in terms of the number of 1-bits) is reached.
Processing low-frequency labels first improves selectivity by activating fewer label sets, thereby reducing the candidate space and subsequent $\delta_L$ computations.

\bparagraph{Progressive MinHash probing.}
While IVF scanning is accurate and space-efficient, its probing cost can grow with the label universe size $|A|$ and the number of label sets $|\mathcal{L}|$ (especially under high-frequency labels), and it treats all retrieved candidates uniformly, even though candidates with higher Jaccard similarity are more likely to satisfy $\delta_L\le \theta_t$.
To address both issues, we use the standard MinHash LSH banding technique~\cite{minhash} to prioritize label sets with high Jaccard similarity.
MinHash probing is insensitive to $|A|$ during lookup and tends to retrieve more similar label sets with higher probability.
We target label sets within a tier-specific distance threshold $\theta_t$ (equivalently, Jaccard similarity $J\ge 1-\theta_t$), with a collision probability characterized by Theorem~\ref{theorem:minhash}.

\begin{theorem} \label{theorem:minhash}
Let $\mathcal{H}=\{h_1,h_2,\dots, h_\tau\}$ be $\tau$ independent hash functions used to form MinHash signatures.
For a label set $L\subseteq A$, its signature is $s_L=\big(h_1^{\min}(L), h_2^{\min}(L), \dots, h_\tau^{\min}(L)\big)$, where $h_i^{\min}(L)=\min_{a\in L}h_i(a)$.
Partition $s_L$ into $\beta$ bands, each containing $r=\frac{\tau}{\beta}$ rows.
For two label sets $L_1,L_2$, if $J(L_1,L_2)\ge 1-\theta_t$, the probability that they match in at least one band satisfies:
\begin{align} \label{eq:p_minhash_colli}
    P_{\mathrm{collide}} \ge 1-\left(1-\left(1-\theta_t\right)^r\right)^\beta.
\end{align}

Equality holds at the similarity boundary $J(L_1,L_2)=1-\theta_t$.
\end{theorem}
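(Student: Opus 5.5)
The plan follows the standard MinHash LSH argument in three steps.

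\emph{Step 1: single-row collision.} Treat each $h_i$ as a random permutation, or as a random injective map of $A$ into a totally ordered set, so that ties occur with probability zero and every element of a fixed finite set is equally likely to attain the minimum. Fix two sets $L_1,L_2$ and consider $U=L_1\cup L_2$. The minimiser of $h_i$ over $U$ is uniformly distributed on $U$. Moreover, $h_i^{\min}(L_1)=h_i^{\min}(L_2)$ holds exactly when this minimiser lies in $L_1\cap L_2$. Indeed, if it lies in the intersection, both minima equal it. If instead it lies in only one set, say $L_1$, then $h_i^{\min}(L_1)$ is strictly smaller than every value $h_i$ takes on $L_2$, so the two minima differ. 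Hence
\[
\Pr\bigl[h_i^{\min}(L_1)=h_i^{\min}(L_2)\bigr]=\frac{|L_1\cap L_2|}{|L_1\cup L_2|}=J(L_1,L_2)=:s .
\]

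\emph{Step 2: bands.} The $\tau$ hash functions are independent, so the row collision events are independent Bernoulli$(s)$ variables. A band of $r$ rows matches only if all $r$ rows match, which happens with probability $s^r$. The $\beta$ bands use disjoint sets of rows, so their match events are also independent. The probability that no band matches is therefore $(1-s^r)^\beta$, giving
\[
P_{\mathrm{collide}}=1-(1-s^r)^\beta .
\]

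\emph{Step 3: monotonicity.} Define $g(s)=1-(1-s^r)^\beta$ on $[0,1]$. The map $s\mapsto s^r$ is nondecreasing, then $1-(\cdot)$ reverses the order, raising to the power $\beta$ preserves it on $[0,1]$, and the final $1-(\cdot)$ reverses it again. So $g$ is nondecreasing. The assumption $s\ge 1-\theta_t$ then yields $P_{\mathrm{collide}}=g(s)\ge g(1-\theta_t)$, which is \eqref{eq:p_minhash_colli}. At the boundary $s=1-\theta_t$ the two sides coincide, so equality holds.

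\emph{Main obstacle.} The only delicate point is Step 1, namely the idealised hashing assumptions. We need min-wise independence, meaning each element of $U$ is equally likely to be the minimiser, and we need ties to have probability zero. I would state these explicitly as the model behind the theorem's phrase ``independent hash functions.'' In practice FNV-based hashing only approximates this. One further point matters for the bands: the band-level hash-table lookup in Alg.~\ref{alg:sel_labelsets} is assumed to be collision-free, so that ``match in a band'' means the signature slices are actually equal. Spurious hash-table collisions could only increase $P_{\mathrm{collide}}$, so the lower bound survives them. Equality at the boundary, however, requires exact slice equality.
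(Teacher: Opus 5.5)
Your proposal is correct and follows the same standard MinHash-banding argument as the paper's proof: the row collision probability equals $J(L_1,L_2)$, rows within a band are independent, the $\beta$ bands use disjoint rows, and the bound comes from the complement. The differences are minor: you derive the MinHash property instead of citing it, and you compute the exact value $1-(1-s^r)^\beta$ and then use monotonicity instead of substituting $s\ge 1-\theta_t$ row by row, which makes the equality case at $J(L_1,L_2)=1-\theta_t$ explicit.
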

The proof is given in Appendix~\ref{app:proof_th31}.
Since $P_{\mathrm{collide}}$ increases with Jaccard similarity (and thus decreases with $\theta_t$), more similar label sets are more likely to be retrieved.
This property aligns with LSSG's tiering: higher tiers impose smaller $\theta_t$ and stricter label constraints, so candidate selection increasingly prioritizes highly similar label sets.
In practice, $(\tau,\beta)$ controls the recall--efficiency trade-off.
A larger signature improves recall but also increases probing cost.

Based on this guarantee, $\mathrm{MinHashProbe}$ (Lines~11--12 in Alg.~\ref{alg:sel_labelsets}) gathers candidate label sets $\mathcal{L}_C$ by probing $\beta$ LSH tables using the band hashes of $L_x$'s signature. 
We then compute $\delta_L$ only on $\mathcal{L}_C$ for the later selection of which candidates satisfy $\delta_L\le \theta_t$.

\begin{figure}
    \centering
    \includegraphics[width=\linewidth]{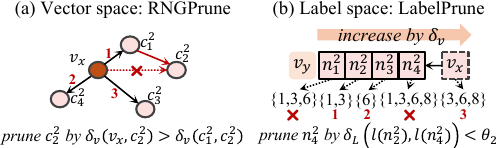}
    \caption{Dual-space pruning diversification for tier~2}
    \Description{Dual-space pruning diversification for tier~2}
    \label{fig:dualpruning}
\end{figure}
\subsection{Multi-tiered Insertion} \label{sec:method_insertion}

\begin{algorithm}[t]
\SetNoFillComment
\caption{Insert}
\label{alg:insert}
\SetKw{KwBreak}{break}
\SetKwInOut{HyperIn}{Index parameters}
\KwIn{$v_x$: vector to insert; $L_x$: label set of $v_x$}
\KwOut{$I$: index after $v_x$ is inserted}
\HyperIn{$m,\omega_c,T$ (see Table~\ref{tab:notation})}
\Comment{Phase 1: label set registration}
\If{$L_x$ $\notin$ inserted label set dataset $\mathcal{L}_I$}{
    insert $L_x$ into label IVF bitmaps of $\mathcal{L}_I$\;
    calculate signature $S_x$ and insert it into MinHash LSH\;
}
\Comment{Phase 2: similar label set selection}
$\mathcal{L}_C \leftarrow \mathrm{SelectSimilarLabelSets}(L_x)$\;
$ \mathcal{J}\leftarrow\{\delta_L(L_x,L_y)\,|\,L_y \in \mathcal{L}_C\}$\;
$C\leftarrow\{\}$\Comment*{pool for cross-tier sharing candidates}
\For{tier $t \leftarrow 1,\ldots,T$}{
    \Comment{Phase 3: filtered candidate selection}
    $\mathcal{L}_t \leftarrow \{ L_y \,|\, \delta_L(L_x, L_y) \in \mathcal{J} \cap [0, \theta_t] \}$\Comment*{$\theta_t = 1 - \frac{t-1}{T-1}$}
    $C\leftarrow \{c\,|\,c \in C \wedge l(c)\in \mathcal{L}_t\}$\Comment*{retain similar label sets}
    \lIf{$|C|<m$}{$C\leftarrow C\cup\mathrm{BeamSearch}(v_x, \mathcal{L}_t, \omega_c)$}
    \Comment{Phase 4: dual-space pruning and edge update}
    $N_{v_x}^t\leftarrow \mathrm{RNGPrune}(C, \frac{m}{2})$\;
    \ForEach{vector $v_y \in N_{v_x}^t$}{
        $N_{v_y}^t \leftarrow N_{v_y}^t \cup \{v_x\}$\Comment*{skip Line~14 if $|N_{v_y}^t| \le m$}
        $N_{v_y}^t \leftarrow \mathrm{RNGPrune}(N_{v_y}^t,m) \cap \mathrm{LabelPrune}(N_{v_y}^t, \theta_t)$\;
    }
}
\Return $I$\;
\end{algorithm}

Alg.~\ref{alg:insert} inserts a vector $v_x$ with label set $L_x$ in four phases.

\bparagraph{Phase~1 (label set registration).}
We first check whether $L_x$ has appeared before.
If $L_x$ is new, we register it in the label-to-labelset inverted structures (e.g., IVF bitmaps).
When MinHash is enabled, we additionally compute the MinHash signature of $L_x$ and insert it into the corresponding LSH band tables.

\bparagraph{Phase~2 (similar label set selection).}
We then invoke Alg.~\ref{alg:sel_labelsets} to retrieve a candidate set of label sets likely to be close to $L_x$.
For each retrieved label set $L$, we compute $\delta_L(L_x,L)$ and cache the results in a map $\mathcal{J}$.
It is later reused to filter candidates by tier thresholds $\theta_t$ without recomputing $\delta_L$.

\bparagraph{Phase~3 (vector-space candidate collection).}
For each tier $t$, we collect candidate vertices by combining vector proximity with the tier-specific label constraint.
Specifically, we restrict candidates to vertices whose label sets satisfy $\delta_L(L_x, l(v))\in [0,\theta_t]$, and perform beam search (Alg.~\ref{alg:beamsearch}) to obtain a vector-close candidate pool.
When the candidate pool from Phase~2 is already sufficiently large, we optionally skip beam search to accelerate indexing, a common strategy in hierarchical graph indices~\cite{wow,dsg}.

\bparagraph{Phase~4 (dual-space pruning and edge update).}
Finally, we construct outgoing edges for $v_x$ by enforcing the two constraints in Def.~\ref{def:lssg}.
We select up to $\frac{m}{2}$ neighbors from the filtered candidates and keep the remaining capacity for future insertions~\cite{hnsw}, following an RNG-style pruning strategy ($\mathrm{RNGPrune}$).
Fig.\ref{fig:dualpruning}(a) iteratively considers candidates in ascending $\delta_v$ order ($c_1^2\rightarrow c_4^2 \rightarrow c_3^2 \rightarrow c_2^2$) and removes geometrically redundant edges ($c_2^2$ in this example).

For each picked neighbor $v_y$, we also attempt to insert the back-edge into $v_y$'s neighbor list.
If $v_y$'s list is full, we apply $\mathrm{RNGPrune}$ to reduce vector-space redundancy. 
If the remaining neighbors still satisfy the RNG constraint, we further apply $\mathrm{LabelPrune}$ to increase label set diversity.
This dual-space pruning reserves capacity for neighbors with more diverse label sets, and reduces local graph density, which speeds up traversal convergence and, in turn, lowers the total indexing cost of candidate preparation in Phase~3.

Fig.~\ref{fig:dualpruning}(b) shows $\mathrm{LabelPrune}$ in tier~2, where $\theta_2 = 0.67$ (rounded from $\theta_2=\frac{2}{3}$, hereafter).
Consider diversifying the neighbor list of a selected neighbor $v_y$ with $l(v_y)=\{1,3,6\}$, and candidate neighbors
$n_1^2:\{1,3,6\}$, $n_2^2:\{1,3\}$, $n_3^2:\{6\}$, $n_4^2:\{1,3,6,8\}$, as well as the incoming vertex $v_x$ with $l(v_x)=\{3,6,8\}$.
$\mathrm{LabelPrune}$ processes candidates in ascending $\delta_v$ order and compares each candidate only against the \emph{already kept} neighbors.
A candidate is pruned if (1) its label set is identical to $l(v_y)$, or (2) its label set distance to any kept neighbor is smaller than the tier threshold $\theta_2$.
Specifically, $n_1^2$ is pruned by (1) due to $l(n_1^2)=l(v_y)$.
$n_2^2$ is kept as the first accepted neighbor.
$n_3^2$ is also kept since $\delta_L(l(n_2^2),l(n_3^2))=1 \not< 0.67$.
$n_4^2$ is pruned by (2) because it is label-redundant w.r.t.\ the kept neighbor $n_2^2:\delta_L(l(n_2^2),l(n_4^2))=1-\frac{2}{4}=0.5 < 0.67$.
Finally, $v_x$ is kept since it is sufficiently different from the kept neighbors:
$\delta_L(l(n_2^2),l(v_x))=0.75>0.67$ and
$\delta_L(l(n_3^2),l(v_x))= 0.67 \not< 0.67$.

\subsection{In-Filtering Traversal and LFANNS Query}

\begin{algorithm}[t]
\SetNoFillComment
\caption{BeamSearch}
\label{alg:beamsearch}
\SetKw{KwBreak}{break}
\SetKwInOut{HyperIn}{Index parameters}
\KwIn{$v_q$: query vector; $\mathcal{L}_q$: selected label set dataset; $\omega$: beam width}
\KwOut{$W=\{v_i\,|\, l(v_i) \in \mathcal{L}_q\}$: neighbor set}
\HyperIn{$m,T$ (see Table~\ref{tab:notation})}
$\mathcal{V}_r\leftarrow$ selected vectors s.t. $\forall v_r\in\mathcal{V}_r, l(v_r)\in \mathcal{L}_q$\;
$C \leftarrow \mathcal{V}_r, W\leftarrow \mathcal{V}_r$\Comment*{min-heap $C$ and max-heap $W$ sorted by $\delta_v$}
\While{$C\neq \emptyset$}{
    $c\leftarrow$ pop the nearest vertex to $v_q$ from $C$\;
    \lIf{$\delta_v(c,v_q)> \max_{w\in W} \delta_v(w,v_q)$}{\KwBreak}
    \For{tier $t \leftarrow 1, \ldots, T$}{
        $U\leftarrow U\cup\{u\,|\,u\in N_c^t \wedge l(u)\in \mathcal{L}_q\}$\;
        \If{number of unvisited vertices in $U > m$}{
            keep the first $m$ unvisited vertices and \KwBreak\;
        }
    }
    \ForEach{unvisited neighbor $u\in U$}{
        \If{$|W| < \omega$ or $\delta_v(u,v_q) < \max_{w\in W} \delta_v(w,v_q)$}{
            mark $u$ as visited and push it to $C,W$\;
        }
        \lWhile{$|W|>\omega$}{pop $W$}
    }
}
\Return $W$\;
\end{algorithm}

Alg.~\ref{alg:beamsearch} performs beam search on the \emph{filtered} subgraph induced by the label set scope $\mathcal{L}_q$ (i.e., only vertices with $l(v)\in\mathcal{L}_q$ are eligible).
It starts by selecting a small set of entry vertices $\mathcal{V}_r$ from the eligible vertices (Line~1), and initializes both the candidate min-heap $C$ and the result max-heap $W$ with $\mathcal{V}_r$ (Line~2).
Selecting entry vertices across label sets helps preserve \emph{completeness}. 
Even when cross-label connectivity is sparse in lower tiers, the search can still start from multiple valid label set regions rather than being trapped in a single region.
This issue is inevitable if the index is not built on the fully filtered graph for every possible query, yet such a practice is impractical for LFANNS as to be discussed in Sec.~\ref{sec:analysis}.

The traversal iteratively expands candidates (Lines~3--13).
At each iteration, it pops the current closest vertex $c$ from $C$ (Line~4) and terminates early when $c$ is already farther than the worst element in $W$ (Line~5), a standard beam-search stopping criterion.
For next-hop expansion, it scans neighbors of $c$ tier by tier and only collects those whose labels satisfy $l(u)\in\mathcal{L}_q$ (Line~7).
This in-filtering expansion enforces \emph{fidelity} by preventing the search from spending steps on unqualified vertices.
To align the expansion cost with the graph topology, Lines~8--9 cap the number of newly considered (unvisited) candidates to at most $m$, avoiding excessive $\delta_v$ evaluations that are unlikely to improve the current beam.
Lines~10--13 maintain the two heaps so that the search continues to prioritize vertices closer to $v_q$ while bounding $W$ by the beam width $\omega$.

Alg.~\ref{alg:searchknn} shows the end-to-end LFANNS query procedure on LSSG.
It first transforms the query filter $f$ and query label set $L_q$ into a label set scope $\mathcal{L}_q$ (Lines~1--3), which specifies the admissible label sets for the query.
For equality queries ($f_e$), $\mathcal{L}_q$ is simply $\{L_q\}$.
For containment ($f_c$) and overlap ($f_o$), $\mathcal{L}_q$ is obtained via bitmap intersection or union over the inverted structures induced by the labels in $L_q$.
This design supports arbitrary query label sets without assuming a fixed label cardinality, thereby satisfying \emph{versatility}.
Given $\mathcal{L}_q$, the algorithm runs in-filtering beam search (Line~4) and finally retains the closest $k$ vectors in $\mathcal{R}$ (Line~5).

\section{Stepwise Analysis of Filter-Valid Expansion}
\label{sec:analysis}

\subsection{Stepwise Oracle Proxy}
The key effect of LSSG is to leverage tiered label stratification to increase the probability that an expansion from a filter-valid vertex reaches another filter-valid vertex.
This reduces ineffective in-filtering expansions compared with a label-agnostic graph, while avoiding the infeasible cost of building a separate graph for every possible query filter.
We use the \emph{oracle label-filtered RNG} as a conceptual reference for filter-valid navigation.

\begin{definition}[Oracle Label-Filtered RNG] \label{def:oracle}
    Given a query label set $L_q$ and a label filter $f$ on a hybrid dataset $\mathcal{D}=\{\mathcal{V}, \mathcal{L}\}$, an oracle label-filtered RNG, denoted by $O_{f,L_q}$, is built on $\mathcal{V}_{f,L_q}\subseteq \mathcal{V}$, where $\forall v_i\in\mathcal{V}_{f,L_q}, v_j\in\mathcal{V}-\mathcal{V}_{f,L_q}, f(l(v_i)\,|\,L_q)=1, f(l(v_j)\,|\,L_q)=0$.
    Edges in $O_{f,L_q}$ satisfy C1 in Def.~\ref{def:lssg}.
\end{definition}

\begin{algorithm}[t]
\SetNoFillComment
\caption{LFANNS}
\label{alg:searchknn}
\SetKw{KwBreak}{break}
\KwIn{$v_q$: query vector; $L_q$: query label set; $f$: query filter; $\omega_s$: search beam width; $k$: number of neighbors}
\KwOut{$\mathcal{R}$: $k$ filtered approximate nearest neighbors}
\lIf{$f=f_e$}{
    $\mathcal{L}_q\leftarrow \{L_q\}$
}
\lElseIf{$f=f_c$}{
    $\mathcal{L}_q\leftarrow \mathrm{IVFIntersetion}(L_q,|L_q|)$
}
\lElseIf{$f=f_o$}{
    $\mathcal{L}_q\leftarrow \mathrm{IVFUnion}(L_q,|L_q|)$
}
$\mathcal{R} \leftarrow \mathrm{BeamSearch}(v_q,\mathcal{L}_q, \omega_s)$\;
\lWhile{$|\mathcal{R}| > k$}{pop $R$}
\Return $\mathcal{R}$\;
\end{algorithm}

Constructing $O_{f,L_q}$ for all possible $(f,L_q)$ is infeasible because the query label set is not known in advance.
LSSG therefore keeps a single multi-tier graph and approximates the oracle behavior only at the level of one expansion step.
Let $c$ be the current hop selected by best-first search (Alg.~\ref{alg:beamsearch}, Line~4), and let $u'\in N_c^t$ be a neighbor of $c$ at tier $t$.
Denote $L_1=l(c)$ and $L_2=l(u')$.
By LSSG construction, an edge in tier $t$ satisfies $\delta_L(L_1,L_2)=1-J(L_1,L_2)\le \theta_t = 1-\epsilon_t$, equivalently $J(L_1,L_2)\ge \epsilon_t$, where $J(L_1,L_2)=\frac{|L_1\cap L_2|}{|L_1\cup L_2|}$ and $\epsilon_t\in[0,1]$.
The stepwise filter-valid expansion probability is
\begin{align}
\resizebox{.91\columnwidth}{!}{$
    \Pr(u'\in U)
    =\Pr\big(f(L_2\,|\,L_q)=1 \,\big|\, f(L_1\,|\,L_q)=1 \wedge J(L_1,L_2)\ge \epsilon_t\big),
$}
\end{align}
where $L_q$ is the query label set and $f$ is the filter in Alg.~\ref{alg:searchknn}.
For the oracle graph $O_{f,L_q}$ (Def.~\ref{def:oracle}), all vertices satisfy $f(\cdot\,|\,L_q)=1$ by construction, hence $\Pr(u'\in U)=1$ trivially.
For LSSG, it measures the local probability of an expansion inside the filtered vertex set, and is also a stepwise proxy for filtered navigation quality.
For a fixed tier $t$, we write $\epsilon$ instead of $\epsilon_t$ for simplicity.
Considering the three filter types in Def.~\ref{def:labelfilter}, $\Pr(u'\in U)$ has corresponding forms:
\begin{definition}[Target Probabilities for Three Filters]
For label sets $L_1,L_2$ and query label set $L_q$, define:
\begin{itemize}
    \item Equality: $P_e=\Pr\big(L_q=L_2\,|\,L_q=L_1 \wedge J(L_1,L_2)\ge \epsilon\big)$;
    \item Containment: $P_c=\Pr\big(L_q\subseteq L_2\,|\,L_q\subseteq L_1 \wedge J(L_1,L_2)\ge\epsilon\big)$;
    \item Overlap: $P_o=\Pr\big(L_q\cap L_2 \neq \emptyset \,|\, L_q \cap L_1 \neq \emptyset \wedge J(L_1,L_2)\ge\epsilon\big)$.
\end{itemize}
\end{definition}

For tier~$T$ of LSSG, $\theta_T=0$ and thus $J(L_1,L_2)\ge \epsilon_T=1$, which implies $L_1=L_2$.
Hence $P_e=1$ in tier~$T$, which means once an equality-valid vertex is reached, a top-tier expansion preserves equality validity at the label level.
The containment and overlap cases require additional model assumptions, stated explicitly below.

\subsection{Containment Target Probability}

We analyze containment under an independent Bernoulli model.
All proofs are in Appendices~\ref{app:proof_containment} and~\ref{app:proof_overlap}.

\begin{definition}[Independent Bernoulli inclusion model]\label{def:label_iid}
For each label $a_i\in A$, let $p_i\in[0,1]$ be its inclusion probability, sorted so that $p_i\ge p_{i+1}$.
Labels are included independently in each generated label set, i.e., $\Pr(a_i\in L)=p_i$.
We assume $L_1$, $L_2$, and $L_q$ are generated independently from this model unless stated otherwise.
\end{definition}

We first give a distribution-agnostic lower bound for $P_c$.

\begin{theorem}[Lower Bound of $P_c$] \label{theorem:pc_gen}
    For $\forall L_1, L_2 \in \mathcal{L}$ that satisfy $J(L_1, L_2) \ge \epsilon$, the worst-case lower bound of $P_c$ is
    \begin{align} \label{eq:worst_pc}
        P_c\ge (1-p_{\max})^{(1-\epsilon)|A|},\quad p_{\max} = \max_{a_i\in A}p_i.
    \end{align}
    
    Further, $\mu=\mathbb{E}[|L_1|] = \sum_{a_i\in A}p_i, \sigma^2=\mathbb{V}[|L_1|]=\sum_{p_i\in A}p_i(1-p_i)$,
    \begin{align} \label{eq:prob_pc}
        P_c \ge (1-p_{\max})^{(1-\epsilon)(\mu+\frac{\sigma}{\sqrt{\xi}})},\quad \text{with probability }1-\xi.
    \end{align}
\end{theorem}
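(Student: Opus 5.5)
The plan is to condition on the realized pair $(L_1,L_2)$ and reduce containment of $L_q$ in $L_2$ to the event that $L_q$ avoids the labels lying in $L_1$ but not in $L_2$. Given $L_q\subseteq L_1$, we have $L_q\subseteq L_2$ exactly when $L_q\cap(L_1\setminus L_2)=\emptyset$. We therefore write
\begin{align*}
P_c=\Pr\big(L_q\cap (L_1\setminus L_2)=\emptyset \,\big|\, L_q\subseteq L_1\big).
\end{align*}
Under the independent Bernoulli model (Def.~\ref{def:label_iid}), conditioning on $L_q\subseteq L_1$ forces $a_i\notin L_q$ for every $a_i\notin L_1$. It leaves the indicators of the labels inside $L_1$ independent, each still with marginal $p_i$. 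Hence
\begin{align*}
P_c=\prod_{a_i\in L_1\setminus L_2}(1-p_i)\ \ge\ (1-p_{\max})^{|L_1\setminus L_2|}.
\end{align*}
This is the key structural step: it turns the conditional probability into a product over the "leaking" labels only.

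Next I will bound $|L_1\setminus L_2|$ using the Jaccard constraint. Let $U$ denote $L_1\cup L_2$. Then
\begin{align*}
|L_1\setminus L_2|\le |U|-|L_1\cap L_2| = (1-J)\,|U| \le (1-\epsilon)\,|U|.
\end{align*}
For the worst-case bound I use $|U|\le|A|$, which gives \eqref{eq:worst_pc}, since $1-p_{\max}\le 1$ makes the right-hand side decreasing in the exponent. For the probabilistic version I want the exponent in terms of $|L_1|$ rather than $|U|$. I would sharpen the bound to $|L_1\setminus L_2|\le(1-\epsilon)|L_1|$ via the symmetric-difference identity
\begin{align*}
|L_1\setminus L_2|+|L_2\setminus L_1| = (1-J)\,|U|.
\end{align*}
This step is the main obstacle. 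The general inequality only gives $|L_1\setminus L_2|\le (1-J)|U|$, and $|U|$ can exceed $|L_1|$.

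My first attempt will use $|U|\le |L_1|/J$, which follows from $|L_1\cap L_2|\le|L_1|$. That gives
\begin{align*}
|L_1\setminus L_2|\le |L_1|-J|U| \le |L_1| - J|L_1| = (1-J)|L_1|.
\end{align*}
The middle inequality uses $|U|\ge|L_1|$, and it is valid. So $|L_1\setminus L_2|\le(1-\epsilon)|L_1|$ holds cleanly after all.

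Finally, I apply Chebyshev to $|L_1|=\sum_i \mathds{I}(a_i\in L_1)$. It has mean $\mu=\sum p_i$ and, by independence, variance $\sigma^2=\sum p_i(1-p_i)$. This yields
\begin{align*}
\Pr\big(|L_1|\ge \mu+\sigma/\sqrt{\xi}\big)\le\xi.
\end{align*}
On the complementary event, which has probability at least $1-\xi$, monotonicity of $x\mapsto(1-p_{\max})^{(1-\epsilon)x}$ gives \eqref{eq:prob_pc}. For the worst case I will also note that $|L_1|\le|A|$ recovers \eqref{eq:worst_pc} directly.
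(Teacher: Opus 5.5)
Your proof is correct and follows the paper's argument. The steps match: the per-pair product kernel $\prod_{a_i\in L_1\setminus L_2}(1-p_i)$ (the paper's Lemma~\ref{lemma:iid_prod}), the bound $|L_1\setminus L_2|\le(1-\epsilon)|L_1|$ obtained from $|L_1\cup L_2|\ge|L_1|$ (Lemma~\ref{lemma:jac_cond}), then $|L_1|\le|A|$ for the worst case and Chebyshev on $|L_1|$ for the high-probability version. The remaining differences are minor: the paper states the final averaging of this pointwise bound over the conditional law of $(L_1,L_2)$ to reach $P_c$, which you leave implicit, and your aside $|U|\le|L_1|/J$ is never used and can be dropped.
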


Theorem~\ref{theorem:pc_gen} is most informative for uniform or near-uniform label distributions.
For example, suppose $\mu=5$ and $|A|=100$, i.e., each label set contains five labels on average and the universe has 100 labels.
Under a uniform distribution, $p_i=\frac{\mu}{|A|}=0.05$.
For $\epsilon=0.8$, the worst-case bound in Eq.~\eqref{eq:worst_pc} gives $P_c \ge 0.95^{0.2\times 100}=0.358$, which is always valid but conservative.
With confidence $1-\xi=95\%$, Eq.~\eqref{eq:prob_pc} yields $P_c \ge 0.95^{0.2\times\left(5+\sqrt{\frac{4.75}{0.05}}\right)}=0.859 > 0.358$.

In real-world systems, label distributions are often power-law (Zipf-like)~\cite{zipfbook,zipfreview,ung,eli}.
A small number of head labels occur frequently, while most tail labels are rare.
In such skewed settings, the distribution-agnostic bound in Theorem~\ref{theorem:pc_gen} becomes overly conservative because $p_{\max}$ is dominated by extreme head labels.
However, this looseness does not reflect actual intersection dynamics.
Under a high Jaccard similarity constraint, a frequent head label is unlikely to appear in the difference set $D=L_1\backslash L_2$ because missing a high-frequency label in $L_2$ would strongly penalize Jaccard similarity.
Thus, the difference set tends to be dominated by labels with relatively small-to-moderate inclusion probabilities, leading to the sharper bound below.

\begin{theorem}[Lower Bound of $P_c$ under Skewed Distribution] \label{theorem:pc_skew}
Let label inclusion probabilities follow the Zipf distribution, i.e., $p_i=\frac{\mu i^{-s}}{H_{|A|,s}}$ for $i=1,2,\dots,|A|$ and $s\ge 0$, where $H_{|A|,s}=\sum_{j=1}^{|A|} j^{-s}$ is the generalized harmonic number and $\mu=\mathbb{E}[|L|]\in(0,H_{|A|,s}]$ is the expected label set size. Define $r=\max(2,\argmin_i \, \frac{H_{i,s}}{H_{|A|,s}} \ge \rho)$ for some $\rho\in(0,1)$, and let $B=\frac{2\mu(1-\epsilon)}{1+\epsilon}$.
For any $q\ge 0$, define
$
\Psi_q(x):=
\begin{cases}
\dfrac{(r+x-1)^{1-q}-(r-1)^{1-q}}{1-q}, & q\neq 1,\\\quad
\ln\dfrac{r+x-1}{r-1}, & q=1,
\end{cases}
\quad x\ge 0.
$
Then, the containment target probability satisfies
\begin{align}
P_c \ge
\exp\left(
-\frac{\mu}{H_{|A|,s}}\Psi_s(B)
-\frac{\mu^2}{H_{|A|,s}^2}\Psi_{2s}(B)
\right)
\bigl(1-\mu\rho(1-\epsilon)\bigr).
\end{align}
\end{theorem}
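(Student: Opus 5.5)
The approach rests on the same reduction as Theorem~\ref{theorem:pc_gen}. Write $D=L_1\setminus L_2$. Given $L_q\subseteq L_1$, the event $L_q\subseteq L_2$ holds exactly when $L_q\cap D=\emptyset$. Since $L_q$ is drawn independently of $(L_1,L_2)$, conditioning on $D$ and using independent Bernoulli inclusion gives
\[
P_c\ \ge\ \mathbb{E}\Bigl[\prod_{a_i\in D}(1-p_i)\Bigr].
\]
Conditioning on $L_q\subseteq L_1$ can only reweight toward configurations where the labels of $D$ are absent from $L_q$, so this lower bound survives. So the task is to control which labels can lie in $D$ and how many there are under the constraint $J(L_1,L_2)\ge\epsilon$.

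The first step bounds the size of $D$. From $J\ge\epsilon$ we have $|L_1\triangle L_2|\le (1-\epsilon)|L_1\cup L_2|$ and $|L_1\cup L_2|\le \frac{|L_1|+|L_2|}{1+\epsilon}$. Replacing the sizes by their mean $\mu$ gives $|D|\le |L_1\triangle L_2|\le \frac{2\mu(1-\epsilon)}{1+\epsilon}=B$, which is how $B$ enters the statement. The second step splits the label universe at the index $r$. The head labels $i<r$ carry probability mass fraction below $\rho$, and the tail labels are $i\ge r$. The event that some head label falls into $D$ I will handle by a union/Markov bound: its probability is at most roughly $\mu\rho(1-\epsilon)$, since the expected head mass in $L_1$ is at most $\mu\rho$ and only a $(1-\epsilon)$ fraction of it can be missing from $L_2$. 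On the complementary event, which has probability at least $1-\mu\rho(1-\epsilon)$, $D$ consists only of tail labels. This produces the multiplicative factor $\bigl(1-\mu\rho(1-\epsilon)\bigr)$.

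The third step handles the tail. On that event, $\prod_{a_i\in D}(1-p_i)$ is smallest when $D$ consists of the $B$ most frequent tail labels, namely indices $r,\dots,r+B-1$. I will apply $\ln(1-p)\ge -p-p^2$, valid for $p\le 1/2$; this is guaranteed for $i\ge r\ge 2$ when $\mu\le H_{|A|,s}$, and is why $r\ge 2$ is enforced. This gives
\[
\ln\prod\ \ge\ -\frac{\mu}{H_{|A|,s}}\sum_{j=0}^{B-1}(r+j)^{-s}-\frac{\mu^2}{H_{|A|,s}^2}\sum_{j=0}^{B-1}(r+j)^{-2s}.
\]
Each sum is bounded by an integral of a decreasing function: $\sum_{j=0}^{B-1}(r+j)^{-q}\le\int_{r-1}^{r+B-1}y^{-q}\,dy=\Psi_q(B)$, with the logarithm case at $q=1$. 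Exponentiating and multiplying by the head factor gives the claim.

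The main obstacle is making the first two steps rigorous rather than heuristic. $|L_1|$ and $|L_2|$ are random, so bounding $|D|$ by $B$ uses the means, and the head-exclusion probability requires a careful count of how the Jaccard constraint limits the head labels missing from $L_2$. I would first try to make both statements hold in expectation and then use Markov's inequality plus monotonicity of the tail product in $|D|$. If that fails, I would state the bound conditional on $|L_1|,|L_2|\approx\mu$, consistent with how the paper treats $\mu$ in Theorem~\ref{theorem:pc_gen}.
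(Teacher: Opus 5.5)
Your outline follows the paper's proof: reduce to $K(L_1,L_2)=\prod_{a_i\in D}(1-p_i)$, split on $\mathcal{A}=\{D\subseteq\{a_r,\ldots,a_{|A|}\}\}$, and bound $\Pr(\mathcal{A}^c\mid\mathcal{J})\le\mu\rho(1-\epsilon)$ by a union bound over head labels (as heuristic as Lemma~\ref{lemma:highinD}). Then get $B$ from $|D|\le\frac{1-\epsilon}{1+\epsilon}(|L_1|+|L_2|)$, compare tail sums with $\int t^{-q}\,dt$ to obtain $\Psi_q$, and use $\ln(1-x)\ge -x-x^2$.

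The gap is your third step, where you take the worst case $D=\{a_r,\ldots,a_{r+B-1}\}$. That needs $|D|\le B$ for every configuration, and the Jaccard constraint gives nothing of the sort. Since $|L_1|$ and $|L_2|$ are random, $B$ can at best bound $\mathbb{E}[|D|\mid\mathcal{J}]$, and in general $|D|>B$ with positive probability. Your proposed repairs do not recover the statement:
\begin{itemize}
\item Markov's inequality at a threshold $t$ replaces $\Psi_q(B)$ by $\Psi_q(t)$ and the prefactor by $1-\mu\rho(1-\epsilon)-B/t$, which is nonpositive at $t=B$.
\item Conditioning on $|L_1|,|L_2|\approx\mu$ proves a different theorem.
\end{itemize}

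The missing idea is to keep $|D|$ random and apply Jensen's inequality twice, so that a bound on the mean of $|D|$ suffices. On $\mathcal{A}$, the bounds $\sum_{a_i\in D}p_i\le\frac{\mu}{H_{|A|,s}}\Psi_s(|D|)$ and $\sum_{a_i\in D}p_i^2\le\frac{\mu^2}{H_{|A|,s}^2}\Psi_{2s}(|D|)$ hold pointwise, because the $|D|$ largest tail probabilities are $p_r,\ldots,p_{r+|D|-1}$. Each $\Psi_q$ is increasing and concave ($\Psi_q''(x)=-q(r+x-1)^{-q-1}\le 0$). Hence $\mathbb{E}[\Psi_q(|D|)\mid\mathcal{J},\mathcal{A}]\le\Psi_q\bigl(\mathbb{E}[|D|\mid\mathcal{J},\mathcal{A}]\bigr)\le\Psi_q(B)$. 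Convexity of $\exp$ gives $\mathbb{E}[K\mid\mathcal{J},\mathcal{A}]\ge\exp\bigl(\mathbb{E}[\sum_{a_i\in D}\ln(1-p_i)\mid\mathcal{J},\mathcal{A}]\bigr)$, which yields exactly the stated exponential factor. Finally, $P_c\ge\mathbb{E}[K\mathbf{1}(\mathcal{A})\mid\mathcal{J}]=\mathbb{E}[K\mid\mathcal{J},\mathcal{A}]\Pr(\mathcal{A}\mid\mathcal{J})$ finishes the proof. The paper is no more careful than you about $\mathbb{E}[|D|\mid\mathcal{J},\mathcal{A}]\le B$. But it needs only this mean bound, whereas your argument needs a pointwise one.

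Two smaller corrections. First, $r\ge 2$ and $\mu\le H_{|A|,s}$ do not guarantee $\ln(1-p_i)\ge -p_i-p_i^2$ on the tail: for $s=0$, $p_i=\mu/|A|$ can be close to $1$. The paper imposes $p_i\le 0.6$ for $i\ge r$ as an explicit extra assumption. Second, conditioning on $L_q\subseteq L_1$ does not make labels of $D$ less likely to lie in $L_q$. Given $(L_1,L_2)$, the conditional probability is exactly $K(L_1,L_2)$ by Lemma~\ref{lemma:iid_prod}. The conditioning only reweights the law of $(L_1,L_2)$ by $\prod_{a_i\notin L_1}(1-p_i)$, which favors larger $L_1$ and hence larger $D$. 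So your claimed inequality does not follow from that argument; the paper simply writes the resulting average as $\mathbb{E}[K\mid\mathcal{J}]$.
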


\begin{figure}
    \centering
    \includegraphics[width=\linewidth]{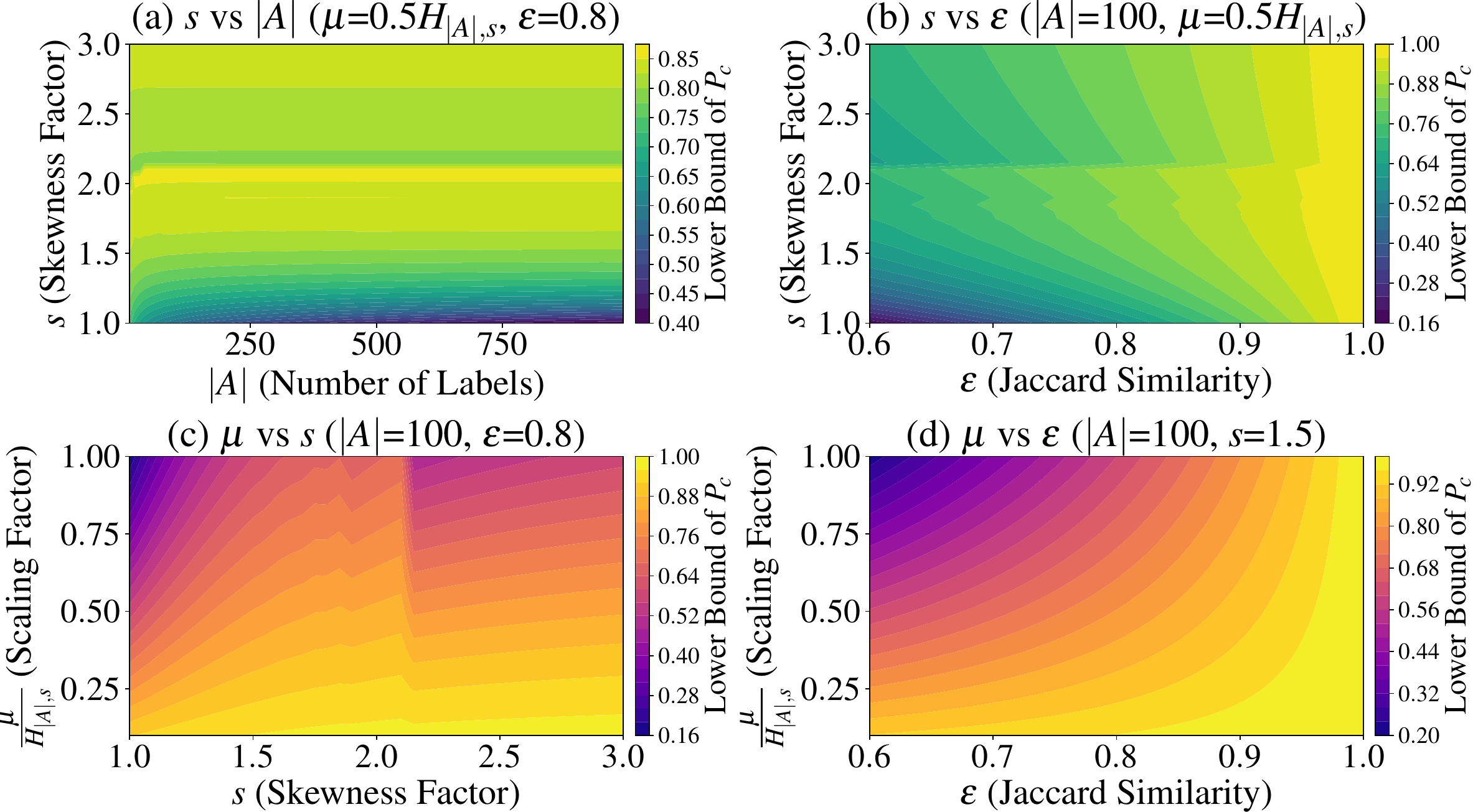}
    \caption{Simulation of Theorem~\ref{theorem:pc_skew}}
    \Description{Simulation of Theorem~\ref{theorem:pc_skew}}
    \label{fig:pc_lower}
\end{figure}

Theorem~\ref{theorem:pc_skew} complements the conservative bound in Theorem \ref{theorem:pc_gen} by giving a sharper characterization under skewed label distributions.
Fig.~\ref{fig:pc_lower} shows three consistent trends: the bound improves with stronger skew and larger Jaccard threshold $\epsilon$, but decreases as the expected label set size $\mu$ grows.
For example, at $|A|=100$, the bound increases from $P_c\ge 0.56$ when $s=1$ to $P_c\ge 0.84$ when $s=2$, and from $P_c\ge 0.60$ at $\epsilon=0.6$ to $P_c\ge 0.79$ at $\epsilon=0.8$.
These trends support the same navigation intuition: stricter tiers shrink the missing-label set $D=L_1\backslash L_2$, thereby increasing the probability of a containment-valid next hop expansion.

\subsection{Overlap Target Probability}

For overlap queries, the probability of \emph{non}-overlap is maximized when the query label set is as small as possible.
\begin{theorem}[Lower Bound of $P_o$] \label{theorem:po}
Assuming that query label sets of the same size are equally likely, for any $L_1,L_2\in\mathcal{L}$ satisfying $J(L_1,L_2)\ge\epsilon$, the overlap target probability satisfies $P_o\ge\epsilon$.
\end{theorem}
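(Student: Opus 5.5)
We fix $L_1,L_2$ with $J(L_1,L_2)\ge\epsilon$ and treat only $L_q$ as random. The working assumption is that, conditional on its size $|L_q|=k$, $L_q$ is uniform over the $k$-subsets of $A$. Write $I=L_1\cap L_2$ and $D=L_1\setminus L_2$, so that $L_1=I\sqcup D$. The event $L_q\cap L_2\neq\emptyset$ contains the event $L_q\cap I\neq\emptyset$. Hence
\begin{align*}
P_o\ \ge\ \Pr\big(L_q\cap I\neq\emptyset \,\big|\, L_q\cap L_1\neq\emptyset\big).
\end{align*}
This reduces the problem to showing, for each fixed size $k\ge1$, that this conditional probability is at least $\frac{|I|}{|L_1|}$. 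The final step will then be $\frac{|I|}{|L_1|}\ge\frac{|I|}{|L_1\cup L_2|}=J(L_1,L_2)\ge\epsilon$. Averaging over the distribution of $k$ keeps the bound, since every conditional term is $\ge\epsilon$.

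\textbf{Base case $k=1$.} Here $L_q=\{a\}$ with $a$ uniform on $A$. Conditioning on $a\in L_1$ makes $a$ uniform on $L_1$, so the conditional probability equals $\frac{|I|}{|L_1|}$ exactly. This is the case the remark before the theorem refers to: non-overlap is most likely when the query is smallest.

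\textbf{General $k$.} We need
\begin{align*}
\frac{\Pr(L_q\cap I\neq\emptyset)}{\Pr(L_q\cap L_1\neq\emptyset)}\ \ge\ \frac{|I|}{|L_1|}.
\end{align*}
Put $g(x)=1-\binom{|A|-x}{k}\big/\binom{|A|}{k}$, the probability that a uniform $k$-subset meets a fixed $x$-set. The required inequality is $\frac{g(|I|)}{|I|}\ge\frac{g(|L_1|)}{|L_1|}$. Since $g(0)=0$, it suffices to show that $g$ is concave in $x$, because then $g(x)/x$ is nonincreasing. Concavity follows from the increments $g(x+1)-g(x)=\binom{|A|-x-1}{k-1}\big/\binom{|A|}{k}$, which are nonincreasing in $x$.

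\textbf{Main obstacle.} The delicate point is the interpretation of the model. It has to be stated explicitly that $L_q$ is independent of $(L_1,L_2)$ and uniform within each size class. The pair is then either held fixed, or the bound is taken pointwise and averaged over all pairs with $J\ge\epsilon$. I would state the result pointwise in $(L_1,L_2)$, so that the conditioning on $J\ge\epsilon$ in the definition of $P_o$ does not interfere.

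Two degenerate situations also need checking. If $I=\emptyset$, then $J=0$, so $\epsilon=0$ and the bound is trivial. The conditioning event requires $L_1\neq\emptyset$, which holds automatically whenever $J>0$.
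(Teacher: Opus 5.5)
Your proof is correct, and it takes a different route from the paper.

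\textbf{The paper's route.} The paper first proves a lemma that the conditional non-overlap probability
$f(\kappa)=\bigl(\binom{b+d}{\kappa}-\binom{d}{\kappa}\bigr)/\bigl(\binom{|A|}{\kappa}-\binom{c+d}{\kappa}\bigr)$
is non-increasing in the query size $\kappa$. It states the key inequality $f(\kappa+1)/f(\kappa)<1$ without deriving it. This makes the singleton query the worst case, with value $b/(a+b)$. The paper then minimizes $a/(a+b)$ subject to $a/(a+b+c)\ge\epsilon$ by differentiating in $c$.

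\textbf{Your route.} You never compare different query sizes. You discard $L_2\setminus L_1$ and note that $\{L_q\cap I\neq\emptyset\}\subseteq\{L_q\cap L_1\neq\emptyset\}$. The conditional probability is then exactly $g(|I|)/g(|L_1|)$ for each fixed $k$. Discrete concavity of $g$ (via Pascal's rule) together with $g(0)=0$ gives the bound $|I|/|L_1|$. Your final step $|L_1|\le|L_1\cup L_2|$ replaces the calculus in $c$.

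\textbf{What each buys.} Your per-$k$ inequality, $1-f(\kappa)\ge g(|I|)/g(|L_1|)\ge |I|/|L_1|=1-f(1)$, is exactly the consequence $f(\kappa)\le f(1)$ for which the paper's lemma is invoked. So your argument gives a fully verified proof of that reduction, whereas the paper leaves its ratio claim unchecked. In exchange, the paper's formulation would yield a stronger structural fact if its ratio claim were established: the exact non-overlap probability, including the contribution of $L_2\setminus L_1$, decreases monotonically in $|L_q|$.

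Your treatment of the random size $k$ is also valid. The mixing weights are the posterior $\Pr(|L_q|=k\mid L_q\cap L_1\neq\emptyset)$, but this is still a convex combination of terms each $\ge\epsilon$. The same convexity argument handles the average over pairs with $J(L_1,L_2)\ge\epsilon$.
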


Theorem~\ref{theorem:po} shows that $P_o$ increases with label set similarity.

\bparagraph{Summary.}
The containment and overlap analyses lead to the same design principle.
As the tier threshold becomes stricter, an expansion from a filter-valid vertex is more likely to remain filter-valid.
For containment, stricter tiers reduce the possible missing-label set $L_1\backslash L_2$.
For overlap, the lower bound scales with the shared-label ratio induced by $J(L_1,L_2)\ge\epsilon_t$.
Thus, lower tiers preserve global vector-space reachability, while higher tiers provide label-consistent shortcuts.
This is the core mechanism by which LSSG approaches oracle filtered navigation without constructing a separate graph for each query filter.

\section{Asymptotic Complexity}
\label{sec:complexity}
\subsection{Index Size}
LSSG is composed of graph topology and auxiliary label selection structures.
Each vertex stores at most $m$ neighbors per tier, giving $\mathrm{O}(m|\mathcal{V}|T)$ graph space.
This bound is further tightened by the diversification strategy (Fig.~\ref{fig:dualpruning}), which prunes redundant edges based on label and vector proximity.
IVF bitmaps require $\mathrm{O}(|A||\mathcal{L}|)$ bits, while MinHash signatures and LSH buckets require $\mathrm{O}((\tau+\beta)|\mathcal{L}|)$ space.
Relative to a standard proximity graph, the topology incurs the tier factor $T$ and the additional label selection structures.

\subsection{Indexing Time}

For an inserted vector $v_x$ with label set $L_x$, let $K_x$ be the number of label set candidates returned by IVF or MinHash, $\bar{\ell}_x$ be their average cardinality, and $B_x$ be the number of processed IVF bitmap words.
The cost of candidate selection and Jaccard verification is
\begin{align}
\resizebox{.91\columnwidth}{!}{$
C_{\mathrm{label}}(x)
= \mathrm{O}\left(K_x(|L_x|+\bar{\ell}_x)\right) + 
\begin{cases}
\mathrm{O}(B_x), & \text{IVF},\\
\mathrm{O}(\tau|L_x|+\beta+K_x), & \text{MinHash}.
\end{cases}
$}
\end{align}

Let $H_{x,t}$ denote the actual number of vertices expanded at tier $t$.
Each expansion scans at most $mT$ neighbors, computes at most $m$ vector distances, and performs at most $m$ queue updates.
Including outgoing and reverse-edge pruning, the deterministic insertion cost $C_{\mathrm{ins}}(x)$ is
\begin{align}
C_{\mathrm{ins}}(x)
=&\
C_{\mathrm{label}}(x)
+
\mathrm{O}\left(
\sum_{t=1}^{T}
H_{x,t}(mT+md+m\log\omega_c)
\right)
\notag\\
&+
\mathrm{O}\left(
T\bigl(m\omega_c d+m^3(d+\bar{\ell}_x)\bigr)
\right).
\end{align}

The indexing time mainly depends on the expansion counts $H_{x,t}$.
For a tier-induced search graph satisfying the monotonicity conditions of~\cite{nsg}, the expected search path length is logarithmic in the graph size.
With fixed beam width, this gives
\begin{align}
\resizebox{.91\columnwidth}{!}{$
\mathbb{E}[H_{x,t}]=\mathrm{O}\left(\log|\mathcal{V}_{x,t}|\right),
\quad
\mathcal{V}_{x,t}=\{v\in\mathcal{V}:\delta_L(l(v),L_x)\le\theta_t\}.
$}
\end{align}

Therefore, for fixed parameters and bounded label set selection cost, LSSG retains the expected $\mathrm{O}(|\mathcal{V}|\log|\mathcal{V}|)$ construction time of standard proximity graph indexing, with additional costs from label selection, pruning, and tiered traversal.

\subsection{Query Time}

For a query $(v_q,L_q,f)$, Alg.~\ref{alg:searchknn} first constructs the filtered label set scope $\mathcal{L}_q$.
The scope construction cost is
\begin{align}
C_{\mathrm{scope}}(f,L_q)
=
\begin{cases}
\mathrm{O}(1), & f=f_e,\\
\mathrm{O}(\sum_{a\in L_q}|\mathrm{IVF}_a|), & f\in\{f_c,f_o\}.
\end{cases}
\end{align}

Let $H_{f,q}$ be the actual number of expanded vertices and define $\eta_q=mT+md+m\log\omega_s$.
The deterministic query cost is
\begin{align}
C_{\mathrm{query}}(f,q)
=
C_{\mathrm{scope}}(f,L_q)
+
\mathrm{O}(H_{f,q}\eta_q).
\end{align}

This bound holds for any realized graph and query trace.
Here, $mT$ is the tier-scanning overhead, while stricter tiers increase the chance that scanned neighbors remain label-qualified.

Let $\mathcal{V}_{f,L_q}=\{v\in\mathcal{V}:l(v)\in\mathcal{L}_q\}$ be the filtered vertex scope.
By the monotonic-search result used above, its oracle filtered graph requires $\mathrm{O}(\log|\mathcal{V}_{f,L_q}|)$ expected progress steps.
Let $\alpha_f\in\{P_e,P_c,P_o\}$ be the filter-valid expansion lower bound from Sec.~\ref{sec:analysis}.
Under the stepwise proxy in Sec.~\ref{sec:analysis}, one filter-valid expansion requires $\mathrm{O}(\alpha_f^{-1})$ attempted expansions in expectation.
Combining this factor with the oracle progress length gives the conditional estimate $\mathbb{E}[H_{f,q}]=\mathrm{O}\left(\alpha_f^{-1}\log|\mathcal{V}_{f,L_q}|\right)$.
Therefore,
\begin{align}
\mathbb{E}[C_{\mathrm{query}}(f,q)]=C_{\mathrm{scope}}(f,L_q)+
\mathrm{O}\left(\alpha_f^{-1}\eta_q\log|\mathcal{V}_{f,L_q}|\right),
\end{align}
which serves as the conditional search-effort estimate.
For fixed $T,m,\omega_s$ and $\alpha_f=\Omega(1)$, the expected traversal cost is $\mathrm{O}(\log|\mathcal{V}_{f,L_q}|)$.
Compared with traversal on a standard proximity graph, LSSG pays the tier scanning overhead but reduces ineffective expansions.

\section{Evaluation}
\label{sec:eval}
We evaluate LSSG by addressing the following research questions:
\begin{enumerate}[left=0pt, label=\textbf{RQ\arabic*.}]
\item How does LSSG perform in construction?
Can it achieve smaller index size and faster indexing speed than state-of-the-art LFANNS indices?
Is it robust to variations in vector volume and label cardinality?
\item How does LSSG perform in LFANNS queries?
Does it consistently deliver high recall and low latency across query semantics, evolving label distributions, and diverse label characteristics, thereby addressing the two challenges in Sec.~\ref{sec:intro}?
\item How effective are LSSG's key design components?
What are the comparative strengths of its label set selection algorithms?
How do tier usage and filter-valid expansion explain the query benefits by stratification?
How does LSSG scale with increasing vector and label scales?
\end{enumerate}

\subsection{Experiment Setup}

\bparagraph{Environment.}
We use a Ubuntu 20.04 LTS server with an Intel Xeon Gold 6326 CPU @ 2.90 GHz and 240 GB memory.
The CPU has 16 physical cores, 512 KB L1 cache, 65,536 KB L2 cache, and 16 MB L3 cache.
All indices are implemented in C++ and compiled by g++ 9.4.0 with -Ofast and -avx512f compile options enabled.

\bparagraph{Datasets.}
Table~\ref{tab:data} summarizes the statistics of eight real-world benchmark datasets with Zipf-like label distributions.
Datasets are evaluated under three semantics, each using $|Q|$ queries (in total $|Q|\times 3$).
For each query semantics, label sets are generated independently.
Equality queries sample existing base label sets.
Containment and overlap queries sample from the dataset label universe.
All queries are retained only if they admit at least $k$ valid answers.
Query-label statistics are reported in Appendix~\ref{app:data_detail}.
\begin{itemize}[left=0pt]
    \item \textbf{SIFT} and \textbf{GIST}~\cite{siftgisturl} are two classic vector datasets.
    Following \cite{acorn, ung, nhq-nips23, unifilter}, we generate synthetic integer labels.
    
    \item \textbf{TripClick}~\cite{tripclickurl} encodes click logs from a health web search engine \cite{tripclick} using the DPR model~\cite{dpr}.
    Following \cite{acorn,ung}, we use 28 clinical labels, with remaining labels categorized as ``others''.

    \item \textbf{LAION}~\cite{laion400murl} contains four million image-caption pairs, with images embedded by the CLIP model \cite{clip}.
    Following \cite{acorn}, we pick a one-million subset with 30 caption keywords as labels.
    
    \item \textbf{YTB-Video} and \textbf{YTB-Audio}~\cite{youtube8murl} share identical topic labels.
    Following \cite{ungfilterbench}, feature vectors of one million video frames and five million audio samples are used for benchmarking.

    \item \textbf{Wikipedia}~\cite{wikieurl} has five million English paragraph embeddings, encoded via all-MiniLM-L6-v2~\cite{minilm}.
    Labels are from Wikidata~P31 (\texttt{instance of}) and P279 (\texttt{subclass of}) properties.
    
    \item \textbf{YFCC-1M}~\cite{yfccurl} is from an image embedding dataset with 10 million vectors \cite{nipscomp2023}.
    Following~\cite{ungfilterbench}, main experiments use the first one million subset and YFCC-10M serves for scalability test.
\end{itemize}

We compute Spearman's $\rho$ between label Jaccard distance $\delta_L$ and vector L2 distance $\delta_v$ by label set size quartile.
Seven datasets except Wikipedia have negligible overall correlation ($|\rho|\leq0.021$) and quartile-wise values $<0.05$.
Wikipedia shows a modest positive correlation ($\rho=0.15$ overall, $0.038$--$0.176$ by quartile).
So, the evaluated workloads generally exhibit weak label--vector correlation, which LSSG does not assume.
Complete results are in Appendix~\ref{app:data_detail}.
\begin{table}[t!]
    \centering
    \caption{Statistical data of the benchmark datasets}
    \label{tab:data}
    \small
    \setlength{\tabcolsep}{3pt}
    \begin{tabular}{lcrrrcr}
        \toprule
        \textbf{Datasets} & \textbf{Size} & \textbf{Dim} & $|A|$ & $|\mathcal{L}|$ & \makecell{Avg/Med/P90\\label set size} & $|Q|$ \\
        \midrule
        SIFT      & 1,000,000 & 128   & 12      & 2,385   & 2.3 / 2 / 4  & 10,000 \\
        GIST      & 1,000,000 & 960   & 12      & 2,385   & 2.3 / 2 / 4  & 1,000  \\
        TripClick & 1,055,976 & 768   & 29      & 7,734   & 1.1 / 1 / 2  & 1,000  \\
        LAION     & 1,000,448 & 512   & 30      & 62,066  & 2.9 / 3 / 5  & 1,000  \\
        YTB-Video & 1,000,000 & 1,024 & 3,862   & 160,018 & 3.0 / 3 / 5  & 200    \\
        YTB-Audio & 5,000,000 & 128   & 3,862   & 498,335 & 3.0 / 3 / 5  & 1,000  \\
        Wikipedia & 5,000,000 & 384 & 237,417 & 184,298 & 5.9 / 6 / 8 & 1,000\\
        YFCC-1M   & 1,000,000 & 192   & 181,931 & 636,356 & \ \ 8.8 / 6 / 16 & 1,000  \\
        \bottomrule
    \end{tabular}
\end{table}

\bparagraph{Competitors.}
We compare LSSG to ten competing LFANNS baselines using their official open-source implementations.
For each baseline, we use the recommended building parameters, and sweep query parameters to obtain Pareto-optimal QPS--recall curves.
\begin{itemize}[left=0pt]
    \item \textbf{LSSG-IVF} and \textbf{LSSG-MinHash} differ in similar label set selection.
    Unless stated otherwise, they both use $T=9$, $m=16, \omega_c=128$, with $\phi=50{,}000$ for IVF and $(\tau,\beta)=(64,16)$ for MinHash.
    
    \item \textbf{ACORN}~\cite{acorn} is a predicate-agnostic index.
    It uses bitmaps to indicate which vector IDs can pass the filter.
    As suggested, we set its index parameters to $M=16, M_\beta=32, L=256, \gamma=30$.
    
    \item \textbf{Packing} is the best strategy in \cite{tfanns}.
    Following their settings, we set $M=96, \mathrm{efCons}=256$ for the optimal query performance.
    
    \item \textbf{RWalks}~\cite{rwalks} uses label vector and hybrid distance to enhance HNSW indices.
    We use the recommended parameters in their paper, $M=32, \mathrm{efCons}=200,\tau=0$, and $h=0.1$.
    
    \item \textbf{UNG}~\cite{ung} builds separate Vamana indices for all label sets and adds cross-group edges to ensure connectivity.
    We follow them and set $M=32,L=100$ with $\delta=6$ cross-group edges.
    
    \item \textbf{ELI}~\cite{eli} builds multiple small indices with identical parameters.
    We choose $M=16, \mathrm{efCons}=200$ with elastic factor set to 0.2 as reported the most competitive in the paper.

    \item \textbf{FAISS}~\cite{faiss} and \textbf{VSAG}~\cite{vsag} serve as ANNS libraries and \textbf{Milvus} \cite{milvus-vdb} and \textbf{PGVector}~\cite{pgvector} serve as VDBMS with native filtering.
    HNSW ($M=32, \mathrm{efCons}=200$) and k-means IVF ($\mathrm{n\_lists}=10{,}000$) are evaluated.
    
    \item \textbf{Pre-filtering} only computes distance between the query vector and filtered vectors to obtain exact nearest neighbors.
\end{itemize}

Note that UniFilter~\cite{unifilter} is excluded due to code unavailability, and its reported performance is comparable to UNG.
Curator~\cite{curator} requires filters to be known a priori with at most two labels, conflicting with our query-time filter setup for arbitrary labels.
Early LFANNS studies, including Filtered-Vamana \cite{filtered-diskann}, Stitched-Vamana \cite{filtered-diskann}, NHQ \cite{nhq-nips23}, and CAPS \cite{caps}, are omitted for inferior performance to existing baselines like UNG \cite{ung} and ELI \cite{eli}.

\begin{table}[t!]
    \centering
    \caption{Indexing performance}
    \label{tab:index_time_size}
    {\small
    \setlength{\tabcolsep}{3pt}
    \begin{tabular}{lrcccrrrr}
        \toprule
        \multicolumn{9}{c}{\textbf{Index size (MB)}} \\
        \toprule
        \textbf{Indices} & SIFT & GIST & Trip & LAION & Video & Audio & Wiki & YFCC \\
        \midrule
        ACORN & 346 & 346 & 366 & 347 & 346 & 3,416 & 3,397 & 346 \\
        Packing & 310 & 306 & 311 & 439 & 497 & \textbf{570} & 2,447 & 2,826 \\
        RWalks & 413 & 413 & 578 & 552 & 29,785 & - & - & -\\
        UNG & \textbf{228} & 410 & 318 & \textbf{271} & \underline{152} & \underline{954} & 1,044 & 433 \\
        ELI & 419 & 440 & 351 & 745 & 21,292 & 92,693 & - & - \\
        \midrule
        LSSG-IVF & 311 & \underline{204} & \textbf{199} & \textbf{271} & \textbf{136} & 1,069 & \textbf{602} & \textbf{239} \\
        $\star$-MinHash & \underline{296} & \textbf{192} & \underline{200} & \underline{277} & \underline{142} & 1,065 & \underline{606} & \underline{303} \\
        \toprule
        \multicolumn{9}{c}{\textbf{Indexing time (s)}} \\
        \toprule
        \textbf{Indices} & SIFT & GIST & Trip & LAION & Video & Audio & Wiki & YFCC \\
        \midrule
        ACORN & 223 & 881 & 754 & 500 & 411 & 1,313 & 3,630 & 277 \\
        Packing & 176 & 358 & 471 & 574 & 419 & 2,828 & 2,387 & \underline{145} \\
        RWalks & 155 & 631 & 595 & 306 & 760 & - & - & - \\
        UNG & \textbf{76} & 429 & 311 & 189 & \textbf{211} & 1,422 & 973 & 309 \\
        ELI & \underline{98} & 364 & 287 & 257 & 253 & \underline{807} & - & - \\
        \midrule
        LSSG-IVF & 133 & \underline{241} & \underline{270} & \underline{181} & 287 & 946 & \underline{487} & 379 \\
        $\star$-MinHash & 122 & \textbf{224} & \textbf{233} & \textbf{179} & \underline{243} & \textbf{721} & \textbf{426} & \textbf{111} \\
        \bottomrule
    \end{tabular}}
\end{table}

\begin{figure*}
    \centering
    \includegraphics[width=\linewidth]{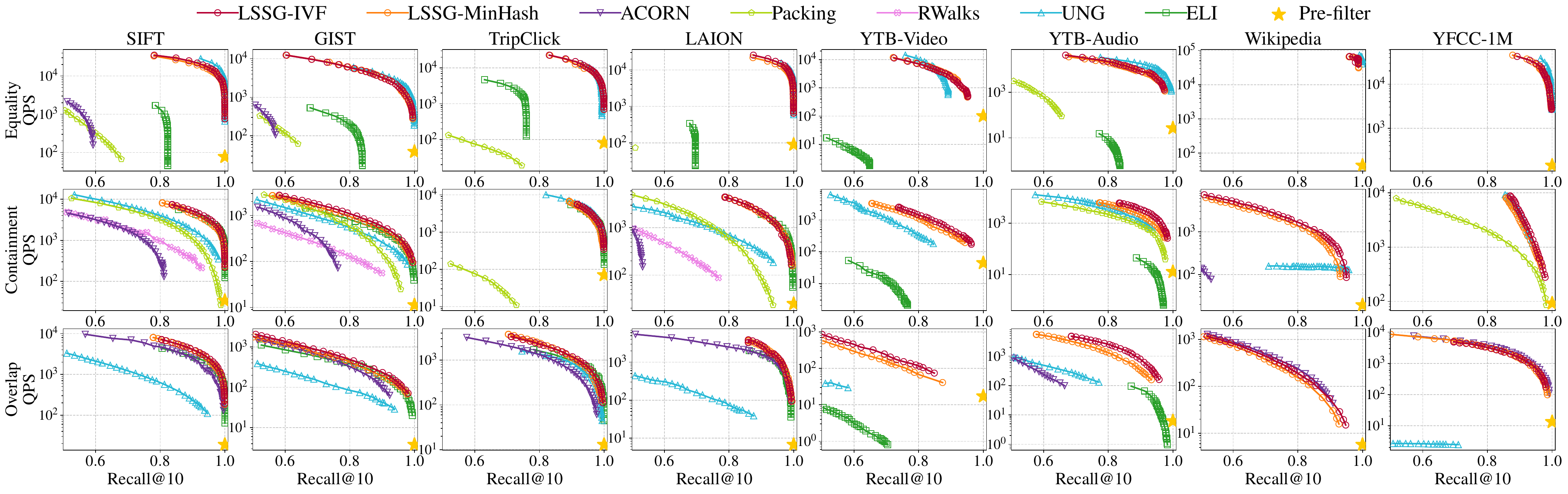}
    \caption{Queries Per Second (QPS) of LFANNS indices}
    \Description{Queries Per Second (QPS) of LFANNS indices}
    \label{fig:qps_recall}
\end{figure*}

\begin{figure}
    \centering
    \includegraphics[width=\linewidth]{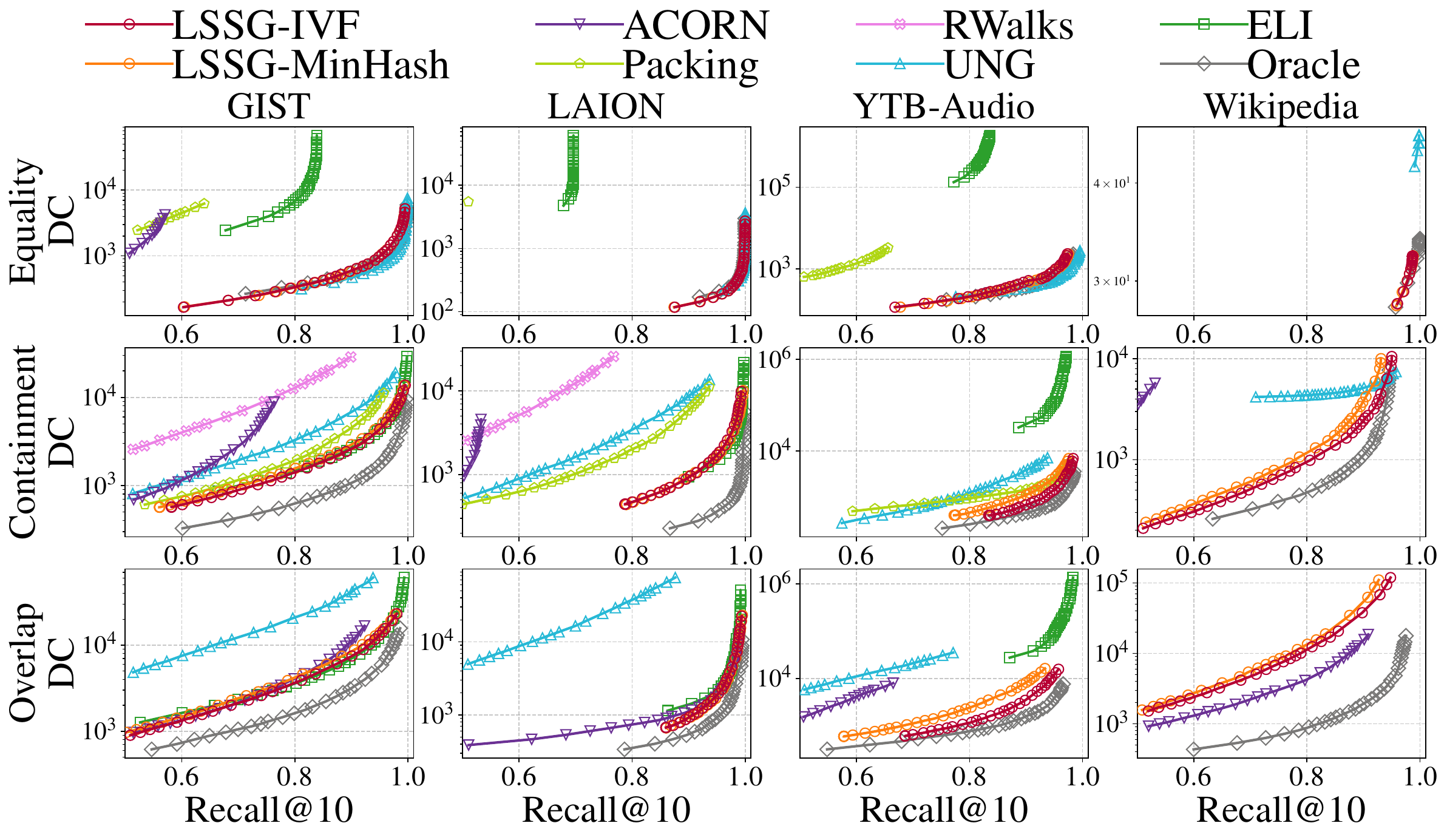}
    \caption{Distance Computations (DC) of LFANNS indices}
    \Description{Distance Computations (DC) of LFANNS indices}
    \label{fig:dc_recall}
\end{figure}

\subsection{RQ1: Indexing Analysis}
Table~\ref{tab:index_time_size} lists the indexing performance under 16-thread parallelism.
\begin{enumerate}[wide]
    \item \emph{Index size.}
    LSSG-IVF and LSSG-MinHash are the smallest (or among the most compact) indices across datasets.
    Averaged over datasets, compared with the most space-efficient baseline UNG, LSSG-IVF and LSSG-MinHash use 0.82x and 0.84x of UNG's size, respectively.
    Specifically, MinHash adds signature and LSH-bucket storage, but it often probes far fewer label sets than IVF scanning, shrinking the insertion candidate set $U$ and producing a sparser final graph that offsets the overhead. 
    So, LSSG-MinHash is smaller than LSSG-IVF on SIFT, GIST, and YTB-Audio.

    \item \emph{Indexing time.}
    LSSG-MinHash achieves the best or near-best indexing time on all datasets, averaging 0.78x UNG's and 0.87x ELI's time (the two fastest baselines).
    LSSG-IVF is also comparable to UNG and ELI.
    This is consistent with Sec.~\ref{sec:method_insertion}, where LSSG combines fast similar label set selection (IVF/MinHash) with dual-space pruning to reduce extensive vector-distance computations.

    \item \emph{Sensitivity to label characteristics.}
    LSSG's indexing cost remains stable as $|A|$ increases, indicating it is dominated by local graph operations rather than combinatorial label enumeration.
    ACORN and UNG show similar insensitivity, but several baselines degrade. 
    Packing is fast on YFCC-1M but less competitive when $|A|$ is small (e.g., LAION). 
    RWalks is highly sensitive to $|A|$ and fails on YTB-Audio, Wikipedia, and YFCC-1M due to attribute-vector enhancement overhead.
    ELI is fast on small-to-medium label spaces but spends large space overhead to meet the elastic factor (about twice LSSG on the first four datasets, $\sim$20x on YTB-Video, and $\sim$90x on YTB-Audio), and fails on Wikipedia and YFCC-1M as index sharing scales exponentially with labels per set (e.g., 40 labels $\Rightarrow 2^{40}$ combinations).
    Overall, LSSG is less affected by label cardinality than prior LFANNS indices while consistently outperforming them.
\end{enumerate}

\subsection{RQ2: LFANNS Query Performance}

\bparagraph{Overall comparison.}
Fig.~\ref{fig:qps_recall} shows QPS--Recall@10 curves in one thread (below 0.5 recall are omitted), following~\cite{ung,tfanns,eli,acorn}.
ELI and Packing lack \emph{native} support for equality. 
To avoid adding extra index components, we evaluate them by reducing equality to a strengthened containment check (i.e., candidates must contain the query labels and have the same label set cardinality).

\begin{figure}
    \centering
    \includegraphics[width=\linewidth]{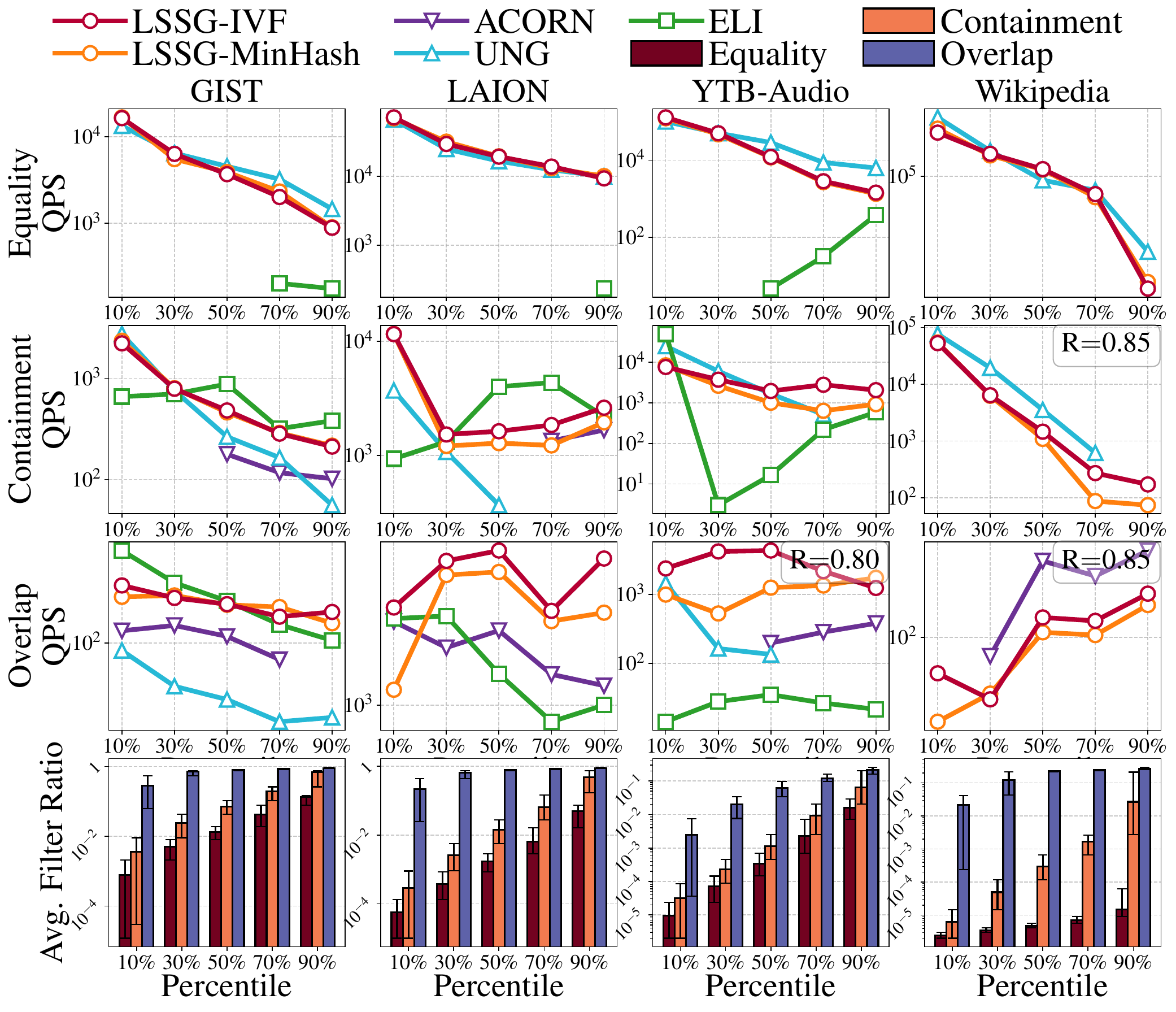}
    \caption{QPS at 0.9 recall across different selectivity percentiles. Recall is adjusted and labeled in the top-right corner when no baseline attains 0.90 recall. The bottom row shows the average filter ratio for each percentile.}
    \Description{QPS at 0.9 recall across different selectivity percentiles. Recall is adjusted and labeled in the top-right corner when no baseline attains 0.90 recall. The bottom row shows the average filter ratio for each percentile.}
    \label{fig:qps_sel}
\end{figure}

\begin{enumerate}[wide]
    \item \emph{Equality.}
    LSSG and UNG show the best performance.
    UNG is sometimes slightly more accurate (e.g., on GIST, YTB-Audio, and Wikipedia) for its denser \emph{static} Vamana underlay with looser RNG pruning.
    This gap reflects the underlay choice (static Vamana vs.\ incremental HNSW), rather than the label-filtering mechanism.
    LSSG achieves a comparable QPS--Recall frontier while supporting \emph{native incremental updates} through HNSW-style insertion.

    \item \emph{Containment.}
    LSSG provides the strongest overall performance.
    Among the baselines, UNG is most competitive, while LSSG-MinHash achieves 2.95x faster on average at matched recall.
    The second best ELI is strong on SIFT--LAION, where LSSG-MinHash achieves 1.10x faster on average.
    However, ELI cannot reach 0.9 recall on YTB-Video, and on YTB-Audio, LSSG-MinHash achieves 92.9x faster at 0.9 recall.
    Furthermore, ELI cannot be built on Wikipedia or YFCC-1M.
    Packing is competitive on SIFT, GIST, YTB-Audio, and YFCC-1M, but is less consistent overall, while RWalks and ACORN show limited performance.

    \item \emph{Overlap.}
    LSSG and ACORN provide the strongest overall performance.
    On Wikipedia, LSSG-IVF reaches 0.948 recall, exceeding ACORN's peak recall of 0.909.
    ELI is competitive on SIFT--LAION, where LSSG-MinHash achieves 1.34x faster on average, but degrades on larger-label workloads.
    On YTB-Video, LSSG-MinHash achieves 84.1x faster at 0.7 recall.
    UNG is generally weaker.
    On SIFT--TripClick, where it reaches 0.9 recall, LSSG-MinHash achieves 7.3x faster on average, with the gap increasing to 99.8x on YFCC-1M at matched recall.
    Packing and RWalks do not reach 0.5 recall for overlap, with peak recall of only 0.41 and 0.32 on SIFT, respectively.

    \item \emph{IVF vs.\ MinHash in LSSG.} LSSG-IVF can be slightly better on a few workloads (notably containment/overlap on YTB-Audio, Wikipedia, and YFCC-1M).
    MinHash probing prioritizes high similarity label sets and may miss low similarity yet admissible ones, while IVF scanning evaluates more Jaccard distances and thus covers a broader set of allowed label sets---an effect that matters more with many labels and for overlap. 
    Despite this trade-off, LSSG-MinHash builds faster and retains strong overall query throughput.
\end{enumerate}

\bparagraph{Approximation to the oracle label-filtered RNG.}
Fig.~\ref{fig:dc_recall} compares distance computations (DC) with an oracle label-filtered RNG (Def.~\ref{def:oracle}), implemented as HNSW on the fully filtered subset.
For equality, LSSG and UNG stay closest to the oracle on GIST, LAION, and YTB-Audio.
On Wikipedia, UNG is within about 10 DC of the oracle.
For containment, LSSG is consistently closest, while Packing is occasionally competitive and UNG, ACORN, and RWalks generally require more DC or attain lower recall.
ELI degrades on YTB-Audio and cannot be built on Wikipedia.
For overlap, LSSG remains closest on GIST, LAION, and YTB-Audio, with ACORN closer only on Wikipedia.
Overall, LSSG most consistently approximates the oracle, empirically supporting the connection between the stepwise advantage analyzed in Sec.~\ref{sec:analysis} and reduced search effort.
Appendix~\ref{app:all_exps} reports results on the remaining datasets.

\begin{figure}
    \centering
    \includegraphics[width=\linewidth]{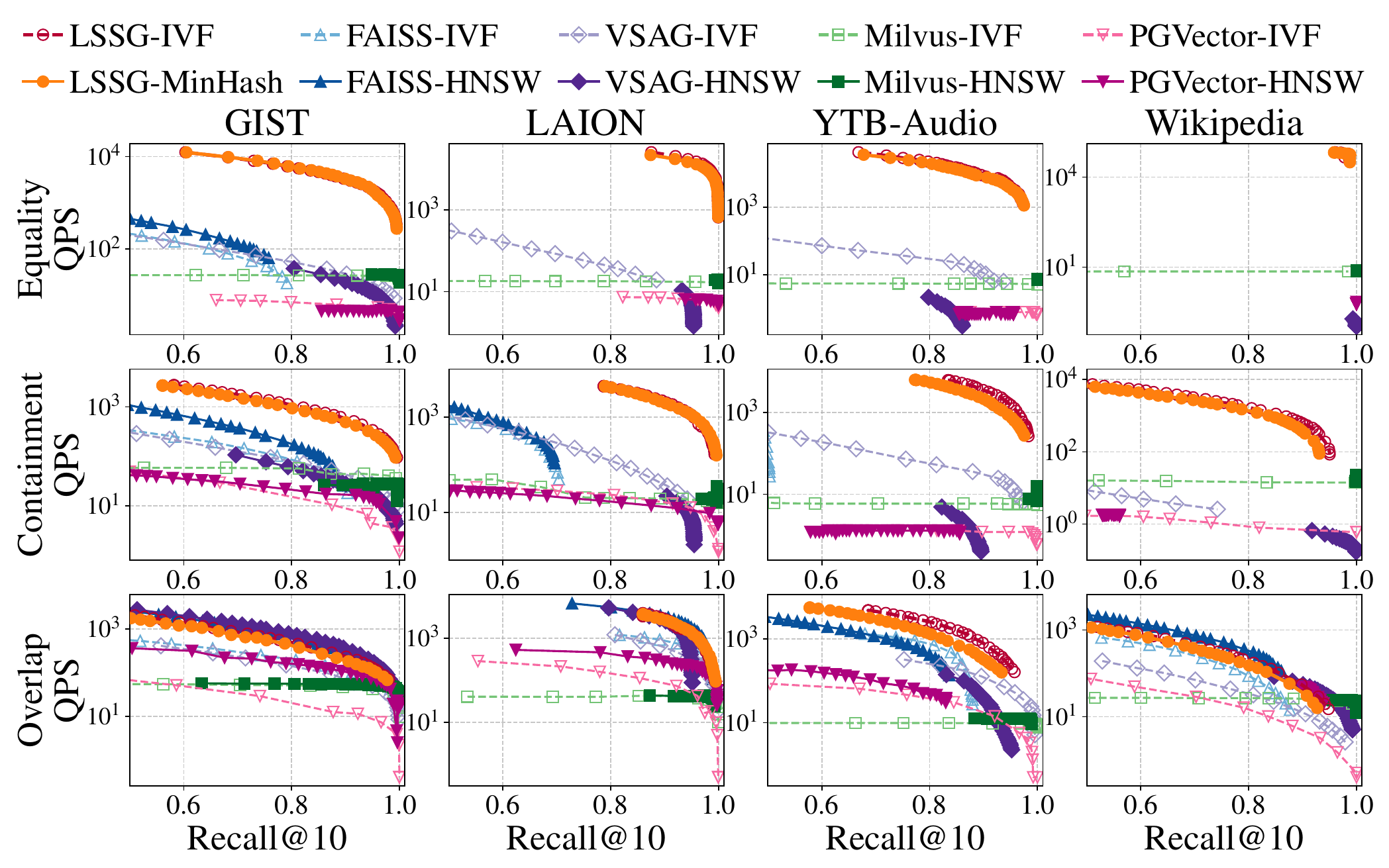}   
    \caption{Comparison with ANNS libraries \& vector databases}
    \Description{Comparison with ANNS libraries \& vector databases}
    \label{fig:vdb-compare}
\end{figure}

\begin{figure}
    \centering
    \includegraphics[width=0.99\linewidth]{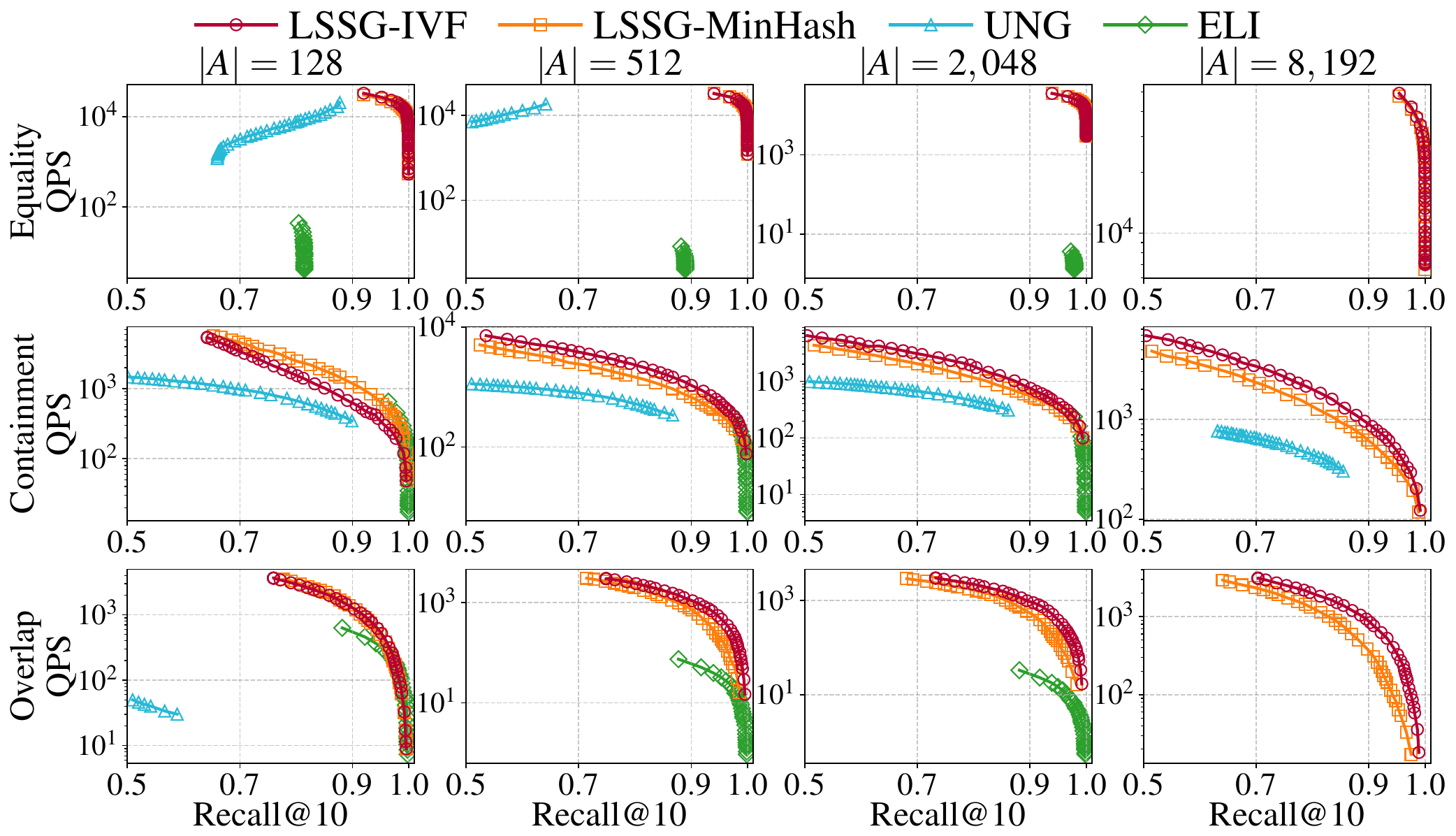}
    \caption{Query performance w.r.t. varying label scale $|A|$}
    \Description{Query performance w.r.t. varying label scale $|A|$}
    \label{fig:varying_labelnum}
\end{figure}

\bparagraph{Performance w.r.t. different selectivity.}
In Fig.~\ref{fig:qps_sel}, we sort queries by selectivity and split each workload into five equal-sized percentiles.
LSSG is the most stable method across the entire selectivity spectrum.
Averaged over datasets, for containment, LSSG-MinHash achieves 3.87x, 0.93x, 1.51x, 3.08x, and 8.48x QPS of UNG and 8.38x, 281x, 43.1x, 57.6x, and 5.29x of ELI at identical recall, grouped by percentiles from 10\%, 30\%, 50\%, 70\% to 90\%.
For overlap, the improvements over UNG are 7.1x, 39.2x, 31.3x, 47.1x, and 52.7x, and over ELI are 80.2x, 77.3x, 23.5x, 15.7x, and 6.5x.
See Appendix~\ref{app:all_exps} for results on the remaining datasets.
\begin{enumerate}[wide]
    \item \emph{High selectivity (10\%).}
    With small filtered subsets, quickly restricting search space can be effective.
    Thus, UNG is competitive for containment, whose cross-group edges connect a small number of relevant label sets, explaining the smaller 3.87x average containment gap.
    For overlap, ELI is strongly penalized by its multi-round containment reduction, yielding a large gap (80.2x).

    \item \emph{Medium selectivity (30\%--70\%).}
    UNG is competitive around 30\%, where LSSG-MinHash achieves 0.93x QPS on average, but UNG degrades as selectivity decreases because its per-label set sub-indices become increasingly fragmented, and cross-group edges are less effective for traversing across many relevant label sets.
    ELI is most unstable for containment in this regime (LSSG/ELI peaks at 281x at 30\% and stays at 43.1x--57.6x from 50\% to 70\%).
    Its heuristic may select a shared index that is much larger than the true filtered subset, so many visited candidates are filtered out.
    LSSG remains stable by tiered navigation to keep progress within label-consistent regions and completing the search with a single in-filter traversal.

    \item \emph{Low selectivity (90\%).}
    With large filtered subsets (common for overlap), label-agnostic traversal is less penalized, making ACORN more competitive, especially on Wikipedia.
    UNG becomes particularly weak for overlap and the gap grows to 52.7x, reflecting limited cross-label connectivity.
    ELI's overlap gap narrows to 6.5x, suggesting multi-round overhead is less harmful when each round is only weakly selective.
    LSSG remains faster due to one-pass traversal.
    
\end{enumerate}

\begin{figure}
    \centering
    \includegraphics[width=\linewidth]{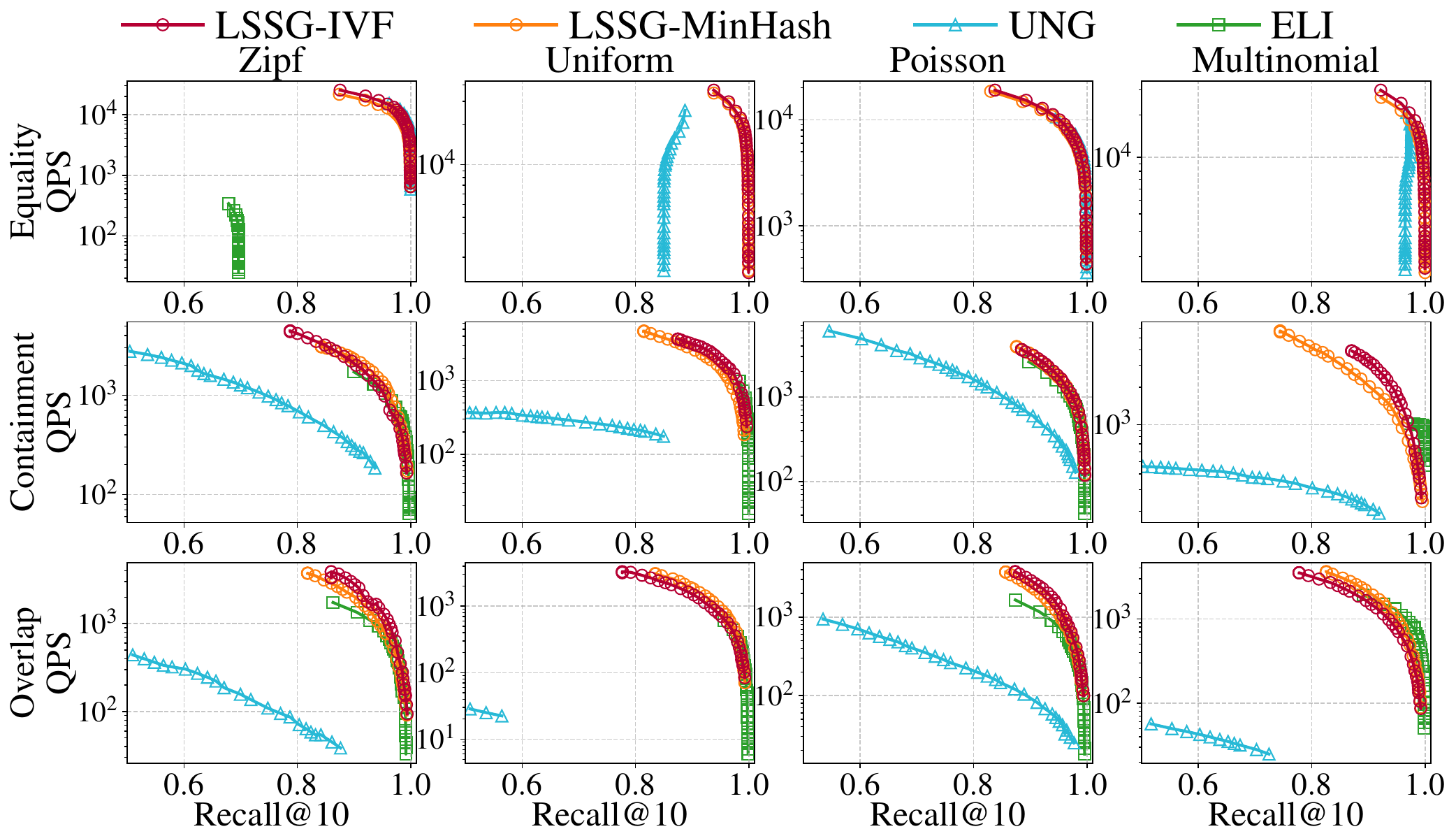}
    \caption{Performance w.r.t. different label distributions}
    \Description{Performance w.r.t. different label distributions}
    \label{fig:distribution}
\end{figure}

\begin{figure}
    \centering
    \includegraphics[width=\linewidth]{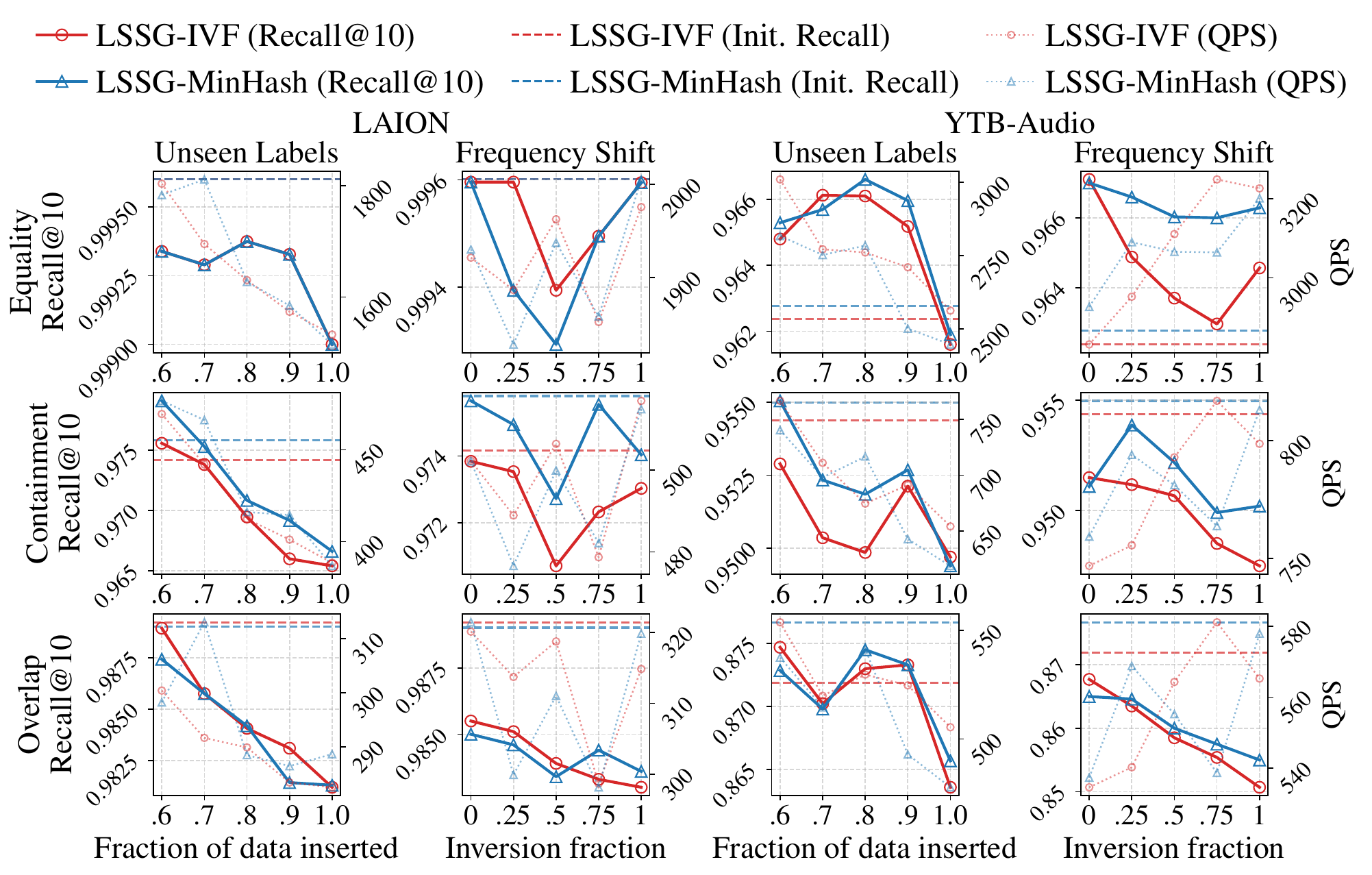}
    \caption{Insertion-drift evaluation on LAION and YTB-Audio with $\omega_s=500$.
    Dashed lines mark pre-insertion recall under the initial index. Dotted lines show QPS.}
    \Description{Insertion-drift evaluation on LAION and YTB-Audio.
    Dashed lines mark pre-insertion recall under the initial index. Dotted lines show QPS.}
    \label{fig:insertion-drift}
\end{figure}

\bparagraph{Comparison with ANNS libraries and vector databases.}
Fig.~\ref{fig:vdb-compare} compares LSSG with FAISS, VSAG, Milvus, and PGVector on GIST, LAION, YTB-Audio, and Wikipedia.
\begin{enumerate*}[left=0pt]
\item \emph{At the library level}, FAISS cannot reach 0.95 recall for equality or containment on any dataset, with maximum recalls of 0.002--0.79 and 0.13--0.90, respectively.
VSAG reaches high recall more often, but drops to 0.19 and 0.46~QPS at 0.95 recall for equality and containment on Wikipedia, and remains below 0.95 for equality on YTB-Audio.
For overlap, FAISS remains below 0.95 recall on Wikipedia and YTB-Audio, whereas VSAG reaches 0.95 recall at only 27--52~QPS.
\item \emph{At the database level}, the best Milvus and PGVector variants reach at least 0.95 recall for containment, but achieve only 16--41 and 0.6--13~QPS, respectively.
LSSG maintains high recall with substantially higher throughput across all three semantics.
\end{enumerate*}
Overall, integrating filtering into graph traversal yields a more consistent LFANNS query than library-level filtering or database-level query processing.
Results on the remaining datasets and indexing costs are in Appendix~\ref{app:all_exps}.

\bparagraph{Performance w.r.t. varying label scale.}
Fig.~\ref{fig:varying_labelnum} increases the label space ($|A|$) and unique label sets ($422{,}427\rightarrow 972{,}650$) of LAION.
LSSG is consistently superior, as tier-wise distance bounds enforce label constraints without enumerating label combinations.
ELI fails at $|A|=8{,}192$ due to rapidly growing index-sharing search space.
UNG degrades due to massive fragmented groups.

\bparagraph{Performance w.r.t. different label distributions.}
Following~\cite{ung}, we generate Uniform, Poisson, and Multinomial label distributions on LAION.
Fig.~\ref{fig:distribution} compares the strongest indices LSSG, UNG, and ELI.
\begin{enumerate*}[left=0pt]
    \item LSSG performs best, as tiered traversal depends on label set \emph{distance} rather than a particular frequency shape, maintaining stable navigation under all distributions.
    \item For UNG, equality degrades under Uniform/Multinomial, since more diverse label sets spread the edge budget over more groups and weaken within-group connectivity.
    \item ELI is competitive for containment/overlap but unstable for equality (0.69 under Zipf and below 0.1 under others), due to equality-incompatible design and distribution sensitivity.
\end{enumerate*}

\begin{figure}
    \centering
    \includegraphics[width=\linewidth]{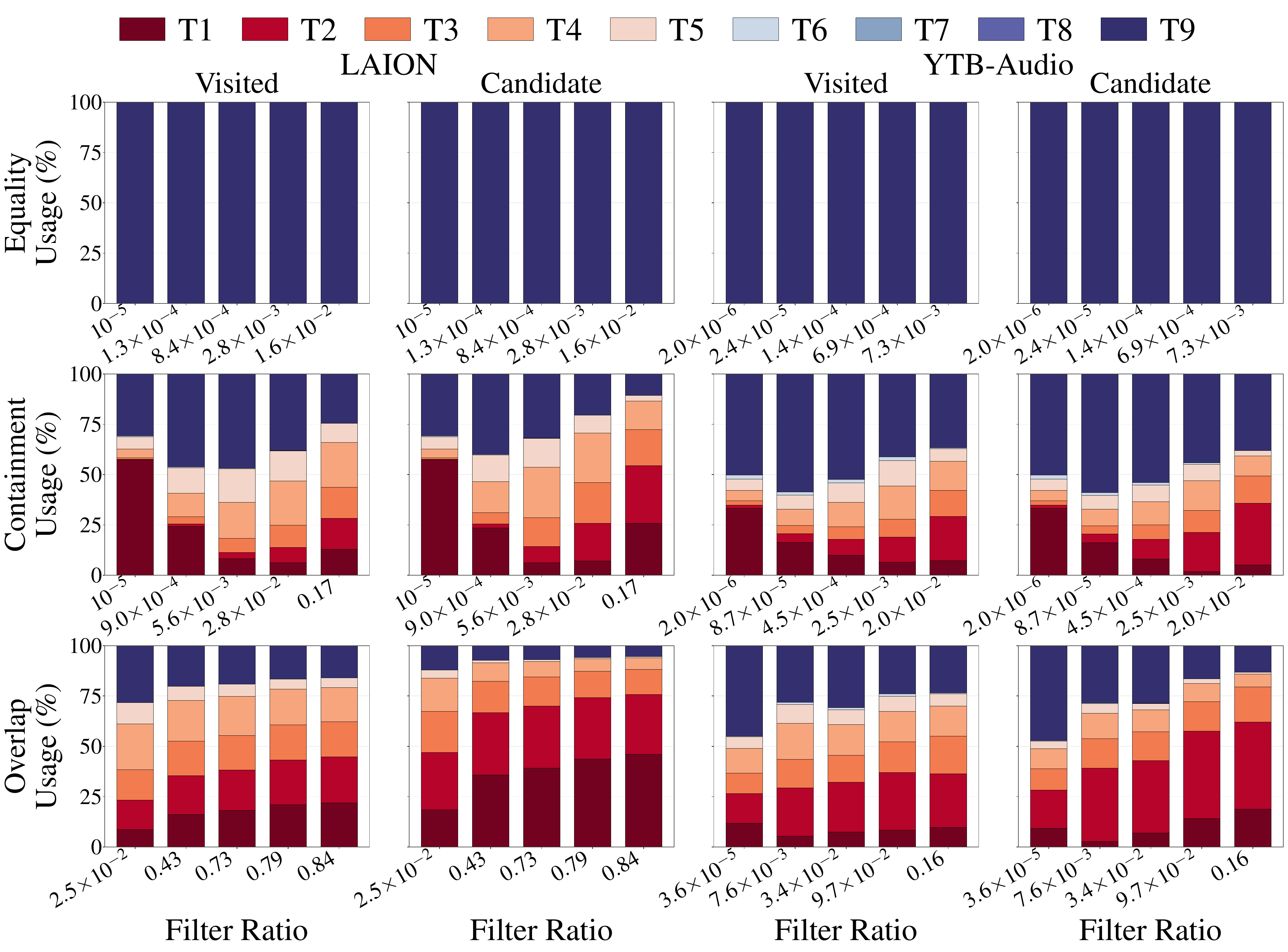}
    \caption{Tier usage on LAION. Each bar decomposes either visited vertices or final candidates, grouped by filter ratio.}
    \Description{Tier usage on LAION. Each bar decomposes either visited vertices or final candidates, grouped by filter ratio.}
    \label{fig:tier-use}
\end{figure}

\bparagraph{Performance w.r.t. label insertion drift.}
Fig.~\ref{fig:insertion-drift} evaluates unseen-label insertion and frequency inversion after indexing the first 50\% of each dataset.
Under unseen-label insertion, containment recall changes by at most 0.014 on LAION and 0.006 on YTB-Audio, while equality and overlap recalls decrease by at most 0.017.
Under frequency inversion, LSSG-MinHash remains stable, where LSSG-IVF drops by up to 0.017 on YTB-Audio due to frequency-based tier assignment.
This robustness comes from LSSG's tiered thresholds.
Each insertion follows the same label similarity constraints and updates only the affected local neighborhoods.
MinHash further improves stability through frequency-independent candidate selection.
Both variants remain nearly unchanged on LAION.
Insertion throughput reaches 3{,}010--5{,}668~vectors/s, with index size growing from 130--212~MB and 520--841~MB.

\subsection{RQ3: Micro Benchmarks}
%
We dissect LSSG's key components and quantify their impact on index size, time, and query efficiency: (1) tier-usage decomposition and parameter sensitivity, (2) filter-valid expansion ratio, (3) label set diversification (LabelPrune), (4) similar label set selection (IVF scanning vs.\ MinHash probing), and (5) scalability to large label space.
Appendix~\ref{app:all_exps} validates more label set distance sensitivity.

\begin{figure}[t]
    \centering
    \captionsetup[subfigure]{skip=0.5pt, belowskip=1.5pt}

    \begin{subfigure}[t]{\linewidth}
        \centering
        \includegraphics[width=\linewidth]{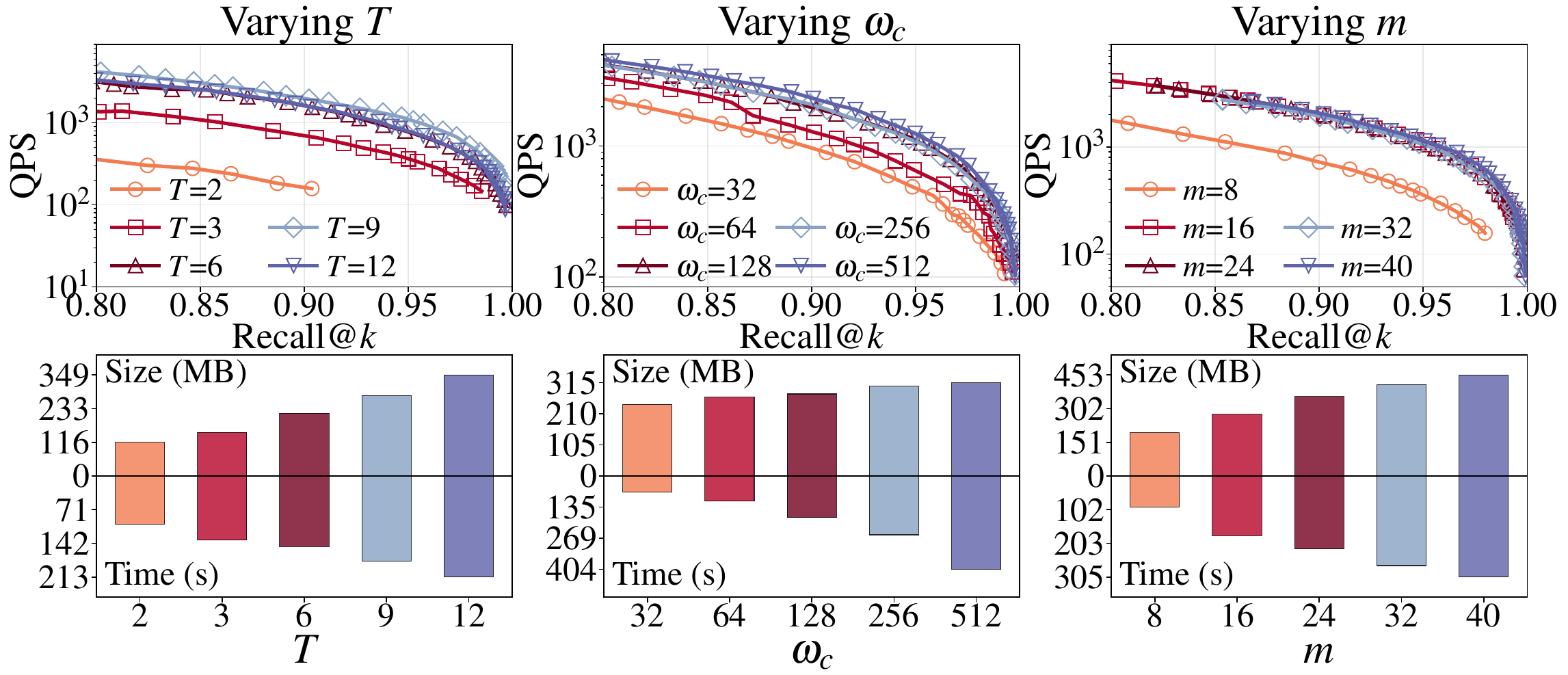}
        \caption{Index parameters}
    \end{subfigure}
    \par\vspace{-4pt}

    \begin{subfigure}[t]{\linewidth}
        \centering
        \includegraphics[width=\linewidth]{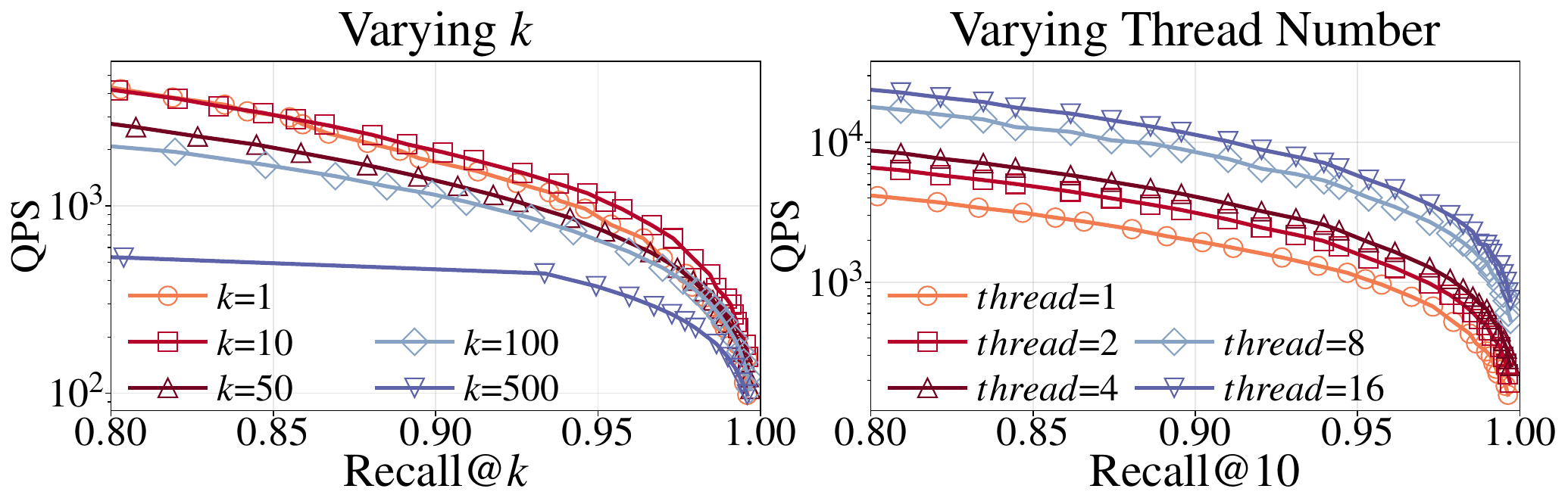}
        \caption{Output size $k$ and query threads}
    \end{subfigure}
    \vspace{-2pt}

    \caption{Parameter sensitivity on LAION}
    \Description{Parameter sensitivity on LAION}
    \label{fig:param-sensitivity}
\end{figure}

\begin{figure}[t]
\centering
\includegraphics[width=\linewidth]{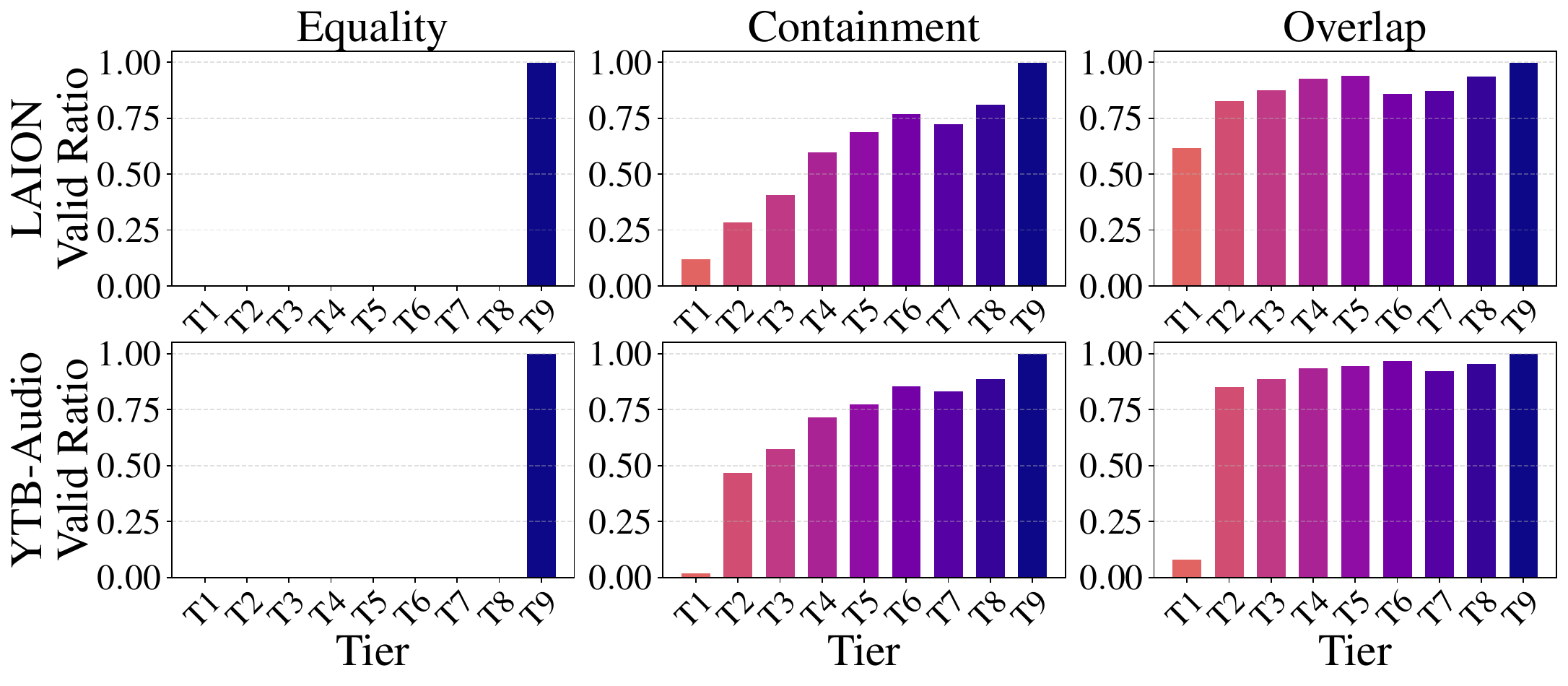}
\caption{Filter-valid expansion ratio on LAION.
Tier becomes stricter and higher ratio means fewer ineffective scans.}
\Description{Filter-valid expansion ratio on LAION.
Tier becomes stricter and higher ratio means fewer ineffective scans.}
\label{fig:expansion-ratio}
\end{figure}

\bparagraph{Tier use and sensitivity.}
Fig.~\ref{fig:tier-use} decomposes traversal by tier.
More selective filters shift visits and final candidates toward upper tiers, confirming the active usage of label-consistent edges.
For containment and overlap, fewer than ten queries per dataset use only the label-agnostic bottom tier.
Fig.~\ref{fig:param-sensitivity}(a) further shows that $T=2$ is insufficient at high recall, whereas intermediate tiers substantially improve accuracy.
Larger $m$ and $\omega_c$ improve recall but incur higher indexing costs with diminishing returns.

Fig.~\ref{fig:param-sensitivity}(b) confirms robust QPS--Recall@$k$ for $k\in\{1,10,50,100,500\}$ and varying threads.

\bparagraph{Filter-valid expansion ratio.}
Fig.~\ref{fig:expansion-ratio} shows the fraction of scanned tier neighbors satisfying the query filter, empirically reflecting the stepwise probability analyzed in Sec.~\ref{sec:analysis}.
The ratio generally increases with tier strictness, yet tiers~6--7 show minor drops since stricter tiers connect more similar label sets, which can still differ in a few labels and fail the exact filter.
From tier~1 to tier~9, the ratio rises on LAION/YTB-Audio from 0.12/0.02 to 1.0/1.0 for containment and from 0.62/0.08 to 1.0/1.0 for overlap.
Equality remains 1.0 since the search only contains exact matches.
Thus, lower tiers support broad exploration, while stricter tiers increase useful expansions.
\begin{figure}
    \centering
    \includegraphics[width=0.9\linewidth]{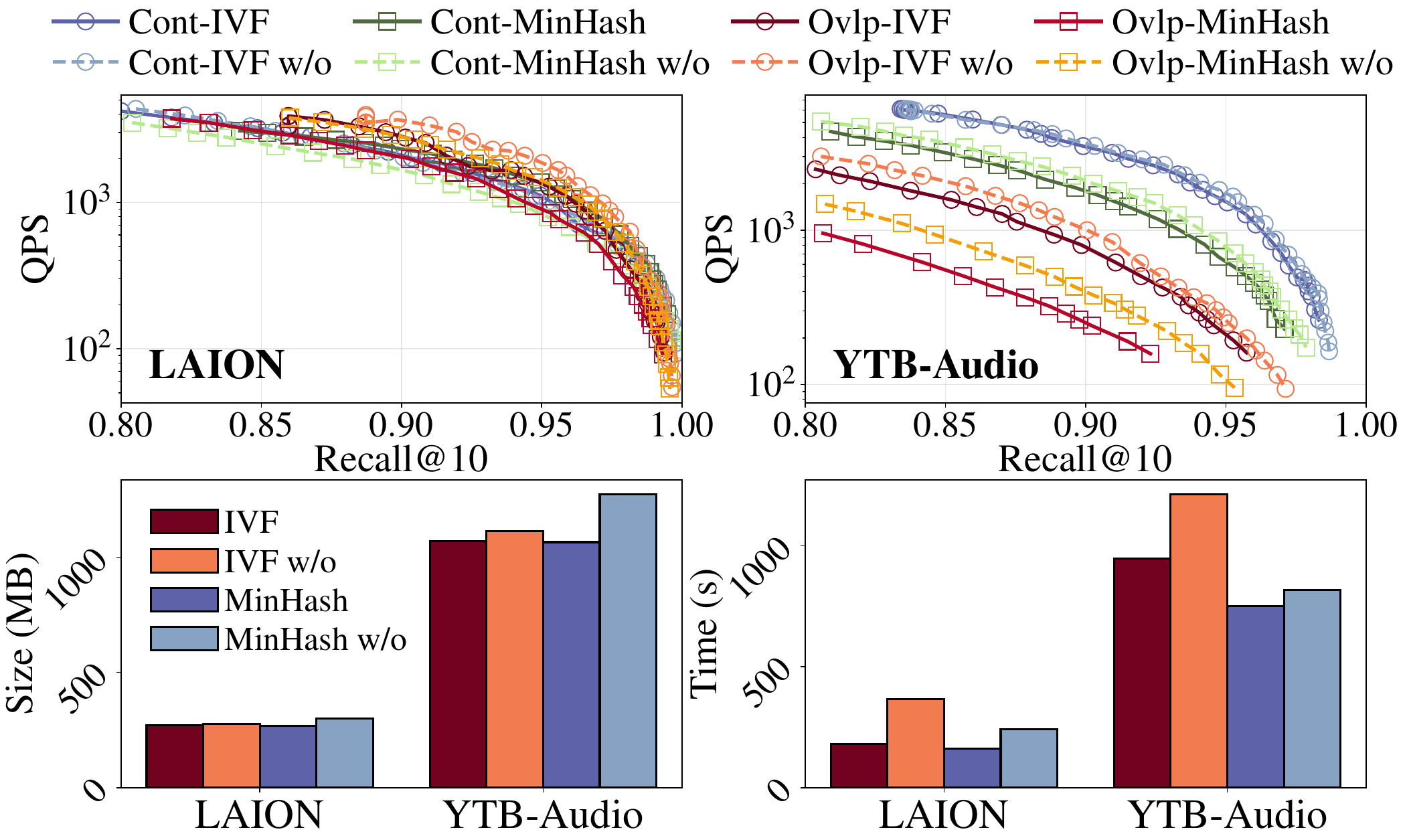}
    \caption{LSSG-IVF/MinHash performance w/o LabelPrune in the containment (Cont.) and overlap (Ovlp.) semantics}
    \Description{LSSG-IVF/MinHash performance w/o LabelPrune in the containment (Cont.) and overlap (Ovlp.) semantics}
    \label{fig:labelprune}
\end{figure}

\begin{figure}
    \centering
    \includegraphics[width=0.9\linewidth]{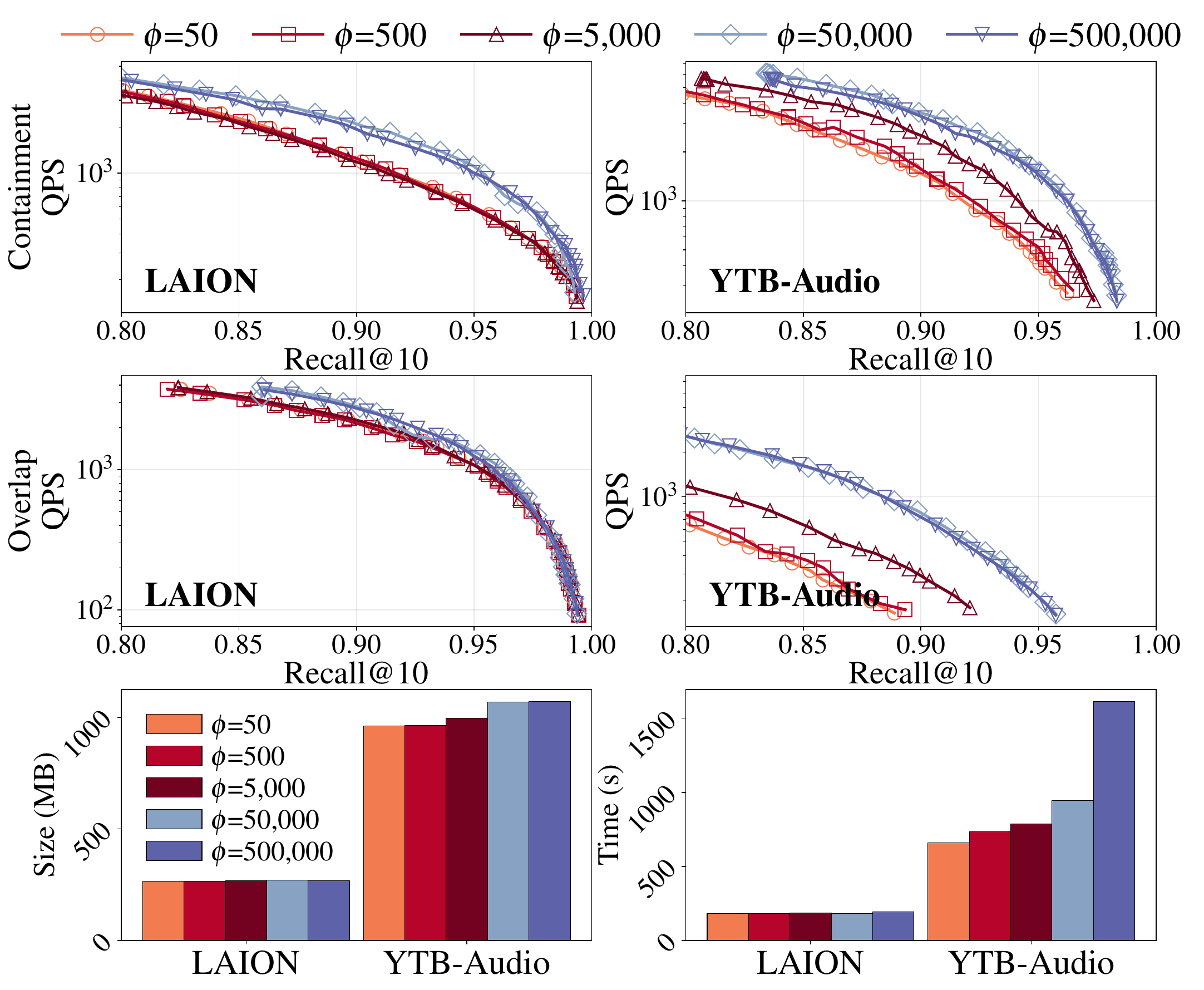}
    \caption{Performance w.r.t. varying IVF scanning budget $\phi$}
    \Description{Performance w.r.t. varying IVF scanning budget $\phi$}
    \label{fig:ivf_budget}
\end{figure}

\bparagraph{Effectiveness of label set diversification.}
LabelPrune removes redundant neighbors with highly similar label sets, reducing candidate processing and edge storage while preserving cross-label connectivity.
As Fig.~\ref{fig:labelprune} shows, the query impact is small.
At 0.9 recall, QPS drops by only $\sim$1\% on LAION and $<10\%$ on YTB-Audio, while building cost decreases substantially.
On LAION, LabelPrune reduces index size by 2\% (LSSG-IVF) and 10\% (LSSG-MinHash), and indexing time by 51\% and 33\%, respectively.
This aligns with Alg.~\ref{alg:insert}, where a sparser intermediate graph shrinks the candidate set $U$, boosting candidate selection that dominates insertion time.

\bparagraph{IVF scanning budget $\phi$.}
Fig.~\ref{fig:ivf_budget} varies $\phi$, the number of label bitmaps scanned to form $\mathcal{L}_q$ in Alg.~\ref{alg:beamsearch}.
For reference, $|\mathcal{L}|=62{,}066$ on LAION and $|\mathcal{L}|=498{,}445$ on YTB-Audio.
\begin{enumerate*}[left=0pt]
    \item Increasing $\phi$ improves QPS first, and then yields diminishing gains once IVF approaches a near-full scan.
    On LAION, at 0.9 recall, QPS increases from $1{,}098\rightarrow 2{,}008$ for containment and $2{,}049\rightarrow 2{,}770$ for overlap with $\phi=5{,}000\rightarrow 50{,}000$, while $\phi=500{,}000$ brings little further benefit (1{,}803 and 2{,}616).
    On YTB-Audio, QPS similarly saturates: $2{,}473\rightarrow 3{,}170 \rightarrow 3{,}239$ for containment and $268\rightarrow 802 \rightarrow 838$ for overlap with $\phi=5{,}000\rightarrow 50{,}000\rightarrow 500{,}000$.
    
    \item Larger $\phi$ increases bitmap unions and Jaccard computations during insertion and can add more out-neighbors.
    On YTB-Audio, indexing time grows as $\phi$ increases ($785\rightarrow 946\rightarrow1{,}613$\,s), while index size grows mildly ($997\rightarrow 1{,}069\rightarrow 1{,}071$\,MB).
    
    \item \emph{A practical knee point.}
    For large label scales, moderate $\phi$ can be near-optimal.
    On YTB-Audio, $\phi=50{,}000$ is close to $\phi=500{,}000$ in QPS at 0.9 recall (3{,}170 vs.\ 3{,}239; 802 vs.\ 838) but shortens indexing time by 41\% (946 vs.\ 1{,}613).
\end{enumerate*}

\begin{figure}
    \centering
    \includegraphics[width=0.95\linewidth]{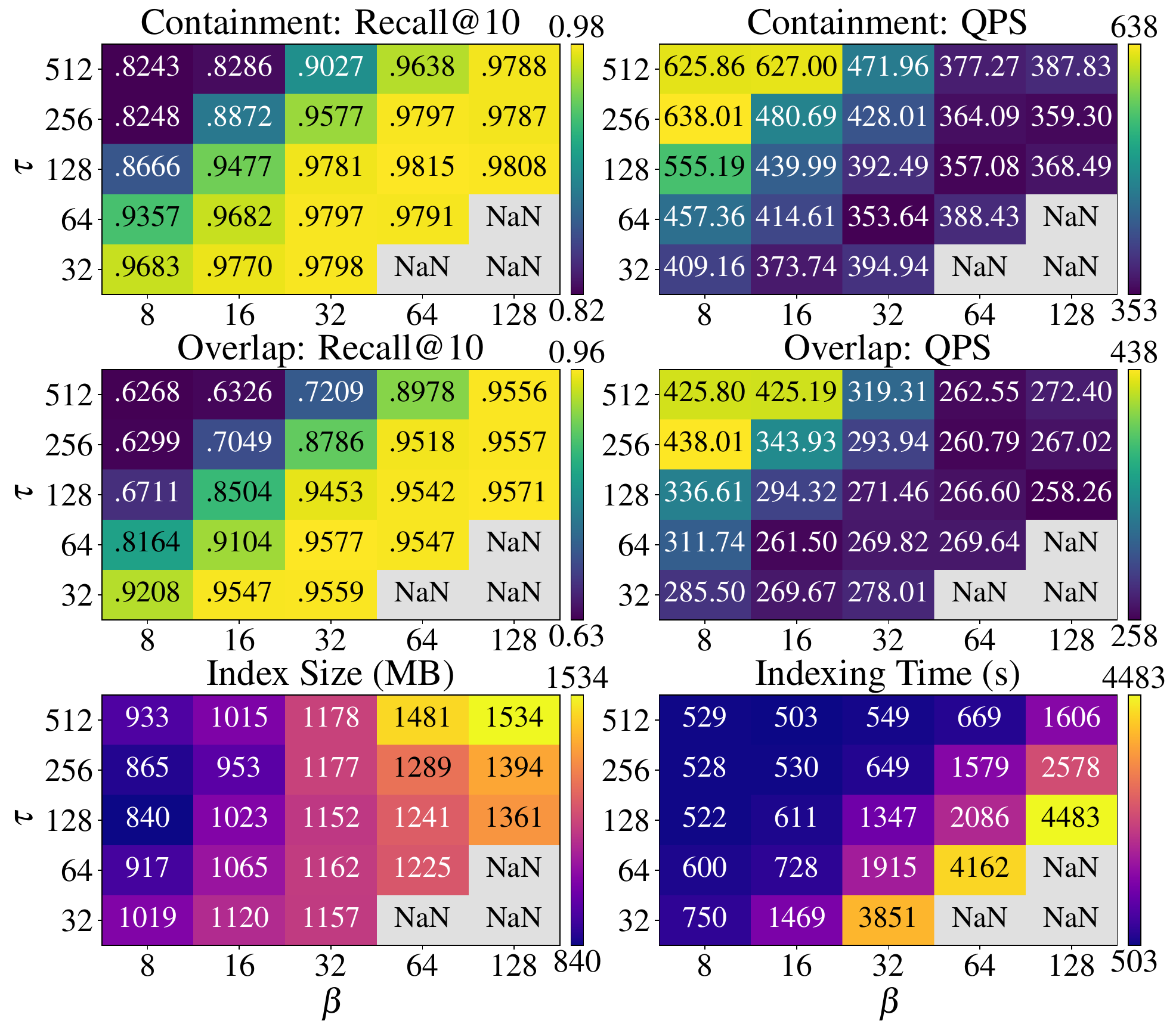}
    \caption{Performance w.r.t. varying MinHash $\tau,\beta$}
    \Description{Performance w.r.t. varying MinHash $\tau,\beta$}
    \label{fig:minhash_param}
\end{figure}

\begin{table}
    \centering
    \caption{Performance with large-scale labels}
    \label{tab:scalability}
    {\small
    \setlength{\tabcolsep}{3pt}
    \begin{tabular}{llccrrr}
        \toprule
        \multirow{2}{*}{\textbf{Datasets}} & \multirow{2}{*}{\textbf{Indices}} & \multicolumn{2}{c}{\textbf{Indexing}} & \multicolumn{3}{c}{\textbf{QPS-Recall@10}} \\
        \cmidrule(lr){3-4} \cmidrule(lr){5-7}
        & & \textbf{Size (MB)} & \textbf{Time (s)} & \textbf{0.85} & \textbf{0.90} & \textbf{0.95} \\
        \midrule
        \multirow{3}{*}{YFCC-1M}
        & Packing & 2,826 & \ \, 145 & 1,160 & 659 & 313 \\
        & LSSG-IVF & \ \, 239 & \ \, 379 & 8,788 & 4,120 & 880 \\
        & $\star$-MinHash & \ \, 303 & \ \, 111 & 8,393 & 3,346 & 766\\
        \midrule
        \multirow{3}{*}{YFCC-10M}
        & Packing & 2,424 & 4,409 & 12 & - & - \\
        & LSSG-IVF & 2,413 & 7,289 & 491 & 246 & 56 \\
        & $\star$-MinHash & 2,935 & 2,729 & 414 & 207 & 55\\
        \bottomrule
    \end{tabular}}
\end{table}

\bparagraph{MinHash probing analysis.}
Fig.~\ref{fig:minhash_param} shows the analysis on YTB-Audio for Sec.~\ref{sec:labelsel} by varying $(\tau,\beta)$ in Eq.~\eqref{eq:p_minhash_colli}.
\begin{enumerate*}[left=0pt]
    \item Reducing the probing selectivity by decreasing $\tau$ or increasing $\beta$ on the diagonal from upper left to bottom right to get higher collision probability increases recall but often lowers QPS due to longer traversals.
    
    \item Overly selective probing removes edges between moderately similar label sets, which are more important for overlap.
    For example, at $\beta=8$, increasing $\tau$ from 32 to 512 reduces recall by 0.14 for containment and 0.29 for overlap.
    
    \item Index size increases with $\beta$, and can be non-monotonic in $\tau$ when $\tau\le 16$ because larger $\tau$ sparsifies the graph (smaller) and lengthens signatures (larger).
    Indexing time increases with smaller $\tau$ and larger $\beta$ since more collisions make larger candidate pools and more $\delta_L$ computations.
\end{enumerate*}

\bparagraph{Index scalability.}
Table~\ref{tab:scalability} tests scalability on YFCC-10M (200{,}386 labels and $\sim$4M unique label sets) using released 10{,}000 containment queries.
\begin{enumerate*}[left=0pt]
    \item MinHash improves index scalability.
    LSSG-MinHash reduces indexing time by 62\% relative to LSSG-IVF, with a 21\% larger index size.
    Its QPS remains close (gap $<16\%$ at 0.9 recall).
   
    \item As dataset size increases, index size scales as $O(|\mathcal{V}|)$ and indexing time as $O(|\mathcal{V}|\log|\mathcal{V}|)$.
    Concretely, LSSG-MinHash shows a 9.6x size increase and a 24.5x indexing-time increase, consistent with Sec.~\ref{sec:complexity}.
    
    \item For other baselines, they hit scalability limits.
    UNG reaches only 0.6 peak recall at 0.93 QPS.
    ELI fails at this label scale due to exponential index-sharing enumeration.
    Packing builds in reasonable time but remains non-competitive in QPS.
\end{enumerate*}

\section{Conclusion}
We propose LSSG, a multi-tier proximity graph that addresses two key LFANNS challenges: consistency across query semantics and robustness to diverse label characteristics.
LSSG enforces label constraints via distribution-agnostic stratification by label set similarity, and scales to large label space with efficient similar label set selection and dual-space pruning.
We derive lower bounds on filter-valid next-hop probabilities under explicit label models, explaining how stricter tiers reduce ineffective expansions.
Experiments across datasets, label scales and distributions, selectivity regimes, and insertion-drift settings confirm fast query in different semantics and robust incremental construction.
Future work includes extending from label sets to sparse vectors for dense--sparse hybrid search and supporting richer filters such as regular expression.

\begin{acks}
We thank all reviewers for their valuable feedback.
This work was supported in part by the National Natural Science Foundation of China (No. 62676187) and the Postgraduate Research \& Practice Innovation Program of Jiangsu Province (No. 26CXJH0355).
\end{acks}

\bibliographystyle{ACM-Reference-Format}
\bibliography{main}

@inproceedings {nhq-nips23,
  author       = { Mengzhao Wang and Lingwei Lv and Xiaoliang Xu and Yuxiang Wang and Qiang Yue and Jiongkang Ni },
  title        = { An Efficient and Robust Framework for Approximate Nearest Neighbor Search with Attribute Constraint },
  booktitle    = { NeurIPS },
  year         = 2023,
  volume       = 36,
  pages        = { 15738--15751 },
  address      = { New Orleans, LA, USA },
  publisher    = { Curran Associates, Inc. }
}

@inproceedings{tfanns,
  author       = { Luo, Jiarui and Qiao, Miao and Zuo, Chaoji and Deng, Dong },
  title        = { Tag-Filtered Approximate Nearest Neighbor Search },
  booktitle    = { ICDE },
  year         = 2025,
  pages        = { 3642--3654 },
  address      = { Hong Kong, China },
  publisher    = { IEEE }
}

@inproceedings {filtered-diskann,
  author       = { Siddharth Gollapudi and Neel Karia and Varun Sivashankar and Ravishankar Krishnaswamy and Nikit Begwani and Swapnil Raz and Yiyong Lin and Yin Zhang and Neelam Mahapatro and Premkumar Srinivasan and Amit Singh and Harsha Vardhan Simhadri },
  title        = { Filtered-DiskANN: Graph Algorithms for Approximate Nearest Neighbor Search with Filters },
  booktitle    = { WWW },
  year         = 2023,
  pages        = { 3406--3416 },
  address      = { Austin, TX, USA },
  publisher    = { ACM }
}

@inproceedings{clip,
  author       = { Radford, Alec and Kim, Jong Wook and Hallacy, Chris and Ramesh, Aditya and Goh, Gabriel and Agarwal, Sandhini and Sastry, Girish and Askell, Amanda and Mishkin, Pamela and Clark, Jack and others },
  title        = { Learning Transferable Visual Models from Natural Language Supervision },
  booktitle    = { ICML },
  year         = 2021,
  pages        = { 8748--8763 },
  address      = { Virtual },
  publisher    = { PMLR }
}

@inproceedings{tripclick,
  author       = { Rekabsaz, Navid and Lesota, Oleg and Schedl, Markus and Brassey, Jon and Eickhoff, Carsten },
  title        = { TripClick: The Log Files of a Large Health Web Search Engine },
  booktitle    = { SIGIR },
  year         = 2021,
  pages        = { 2507--2513 },
  address      = { Virtual },
  publisher    = { ACM }
}

@inproceedings {milvus-vdb,
  author       = { Jianguo Wang and Xiaomeng Yi and Rentong Guo and Hai Jin and Peng Xu and Shengjun Li and Xiangyu Wang and Xiangzhou Guo and Chengming Li and Xiaohai Xu and Kun Yu and Yuxing Yuan and Yinghao Zou and Jiquan Long and Yudong Cai and Zhenxiang Li and Zhifeng Zhang and Yihua Mo and Jun Gu and Ruiyi Jiang and Yi Wei and Charles Xie },
  title        = { Milvus: A Purpose-Built Vector Data Management System },
  booktitle    = { SIGMOD },
  year         = 2021,
  pages        = { 2614--2627 },
  address      = { Xi'an, China },
  publisher    = { ACM }
}

@inproceedings{alayaDB,
  author       = { Deng, Yangshen and You, Zhengxin and Xiang, Long and Li, Qilong and Yuan, Peiqi and Hong, Zhaoyang and Zheng, Yitao and Li, Wanting and Li, Runzhong and Liu, Haotian and Mouratidis, Kyriakos and Yiu, Man Lung and Li, Huan and Shen, Qiaomu and Mao, Rui and Tang, Bo },
  title        = { AlayaDB: The Data Foundation for Efficient and Effective Long-context LLM Inference },
  booktitle    = { SIGMOD-Companion },
  year         = 2025,
  pages        = { 364–377 },
  numpages     = 14,
  address      = { Berlin, Germany },
  publisher    = { ACM }
}

@inproceedings {lsh,
  author       = { Yufei Tao and Ke Yi and Cheng Sheng and Panos Kalnis },
  title        = { Quality and Efficiency in High Dimensional Nearest Neighbor Search },
  booktitle    = { SIGMOD },
  year         = 2009,
  pages        = { 563--576 },
  address      = { Providence, RI, USA },
  publisher    = { ACM }
}

@inproceedings{lsh-lcc,
  author       = { Lei, Yifan and Huang, Qiang and Kankanhalli, Mohan and Tung, Anthony K. H. },
  title        = { Locality-Sensitive Hashing Scheme based on Longest Circular Co-Substring },
  booktitle    = { SIGMOD },
  year         = 2020,
  pages        = { 2589--2599 },
  address      = { Portland, OR, USA },
  publisher    = { ACM },
  numpages     = 11
}

@inproceedings {vbase-vdb,
  author       = { Qianxi Zhang and Shuotao Xu and Qi Chen and Guoxin Sui and Jiadong Xie and Zhizhen Cai and Yaoqi Chen and Yinxuan He and Yuqing Yang and Fan Yang and Mao Yang and Lidong Zhou },
  title        = { VBASE: Unifying Online Vector Similarity Search and Relational Queries via Relaxed Monotonicity },
  booktitle    = { OSDI },
  year         = 2023,
  pages        = { 377--395 },
  address      = { Boston, MA, USA },
  publisher    = { USENIX Association }
}

@inproceedings {pase-vdb,
  author       = { Wen Yang and Tao Li and Gai Fang and Hong Wei },
  title        = { PASE: PostgreSQL Ultra-High-Dimensional Approximate Nearest Neighbor Search Extension },
  booktitle    = { SIGMOD },
  year         = 2020,
  pages        = { 2241--2253 },
  address      = { Portland, OR, USA },
  publisher    = { ACM }
}

@inproceedings{glove,
  author       = { Pennington, Jeffrey  and Socher, Richard and Manning, Christopher },
  title        = { GloVe: Global Vectors for Word Representation },
  booktitle    = { EMNLP },
  year         = 2014,
  pages        = { 1532--1543 },
  address      = { Doha, Qatar },
  publisher    = { ACM },
}

@inproceedings{spann,
  author       = { Chen, Qi and Zhao, Bing and Wang, Haidong and Li, Mingqin and Liu, Chuanjie and Li, Zengzhong and Yang, Mao and Wang, Jingdong },
  title        = { SPANN: Highly-efficient Billion-scale Approximate Nearest Neighbor Search },
  booktitle    = { NeurIPS },
  year         = 2021,
  address      = { Red Hook, NY, USA },
  publisher    = { Curran Associates Inc. },
  numpages     = 14
}

@inproceedings{spfresh,
  author       = { Xu, Yuming and Liang, Hengyu and Li, Jin and Xu, Shuotao and Chen, Qi and Zhang, Qianxi and Li, Cheng and Yang, Ziyue and Yang, Fan and Yang, Yuqing and Cheng, Peng and Yang, Mao },
  title        = { SPFresh: Incremental In-Place Update for Billion-Scale Vector Search },
  booktitle    = { SOSP },
  year         = 2023,
  pages        = { 545–561 },
  address      = { Koblenz, Germany },
  publisher    = { ACM },
  numpages     = 17,
}

@inproceedings{jdvisual,
  author       = { Li, Jie and Liu, Haifeng and Gui, Chuanghua and Chen, Jianyu and Ni, Zhenyuan and Wang, Ning and Chen, Yuan },
  title        = { The Design and Implementation of a Real Time Visual Search System on JD E-commerce Platform },
  booktitle    = { Middleware },
  year         = 2018,
  pages        = { 9–16 },
  address      = { Rennes, France },
  publisher    = { ACM },
  numpages     = 8,
}

@inproceedings{minhash,
  author       = { Broder, A.Z. },
  title        = { On the resemblance and containment of documents },
  booktitle    = { SEQUENCES },
  year         = 1997,
  pages        = { 21--29 },
  address      = { Positano, Italy },
  publisher    = { IEEE },
  numpages     = 9,
}

@misc{fnv,
  author       = { Glenn Fowler and Landon Curt Noll and Kiem-Phong Vo and Donald E. Eastlake 3rd and Tony Hansen },
  title        = { {The FNV Non-Cryptographic Hash Algorithm} },
  year         = 2026,
  number       = 9923,
  url          = { https://www.rfc-editor.org/info/rfc9923 },
  howpublished = { RFC 9923 },
  pagetotal    = 111,
}

@misc{ungfilterbench,
  author        = { Shi, Jiayang and Cai, Yuzheng and Zheng, Weiguo },
  title         = { Filtered Approximate Nearest Neighbor Search: A Unified Benchmark and Systematic Experimental Study [Experiment, Analysis \& Benchmark] },
  year          = 2025,
  eprint        = { 2509.07789 },
  archiveprefix = { arXiv },
  primaryclass  = { cs.DB },
}

@misc{nipscomp2023,
  author        = { Simhadri, Harsha Vardhan and Aum{\"u}ller, Martin and Ingber, Amir and Douze, Matthijs and Williams, George and Manohar, Magdalen Dobson and Baranchuk, Dmitry and Liberty, Edo and Liu, Frank and Landrum, Ben and others },
  title         = { Results of the Big ANN: NeurIPS'23 Competition },
  year          = 2024,
  eprint        = { 2409.17424 },
  archiveprefix = { arXiv },
  primaryclass  = { cs.IR },
}

@misc{caps,
  author        = { Gupta, Gaurav and Yi, Jonah and Coleman, Benjamin and Luo, Chen and Lakshman, Vihan and Shrivastava, Anshumali },
  title         = { CAPS: A Practical Partition Index for Filtered Similarity Search },
  year          = 2023,
  eprint        = { 2308.15014 },
  archiveprefix = { arXiv },
  primaryclass  = { cs.IR },
}

@misc{eli,
  author        = { Yang, Mingyu and Xia, Wenxuan and Li, Wentao and Wong, Raymond Chi-Wing and Wang, Wei },
  title         = { Elastic Index Select for Label-Hybrid Search in Vector Database },
  year          = 2025,
  eprint        = { 2505.03212 },
  archiveprefix = { arXiv },
  primaryclass  = { cs.DB },
}

@misc{curator,
	author        = { Yicheng Jin and Yongji Wu and Wenjun Hu and Bruce M. Maggs and Jun Yang and Xiao Zhang and Danyang Zhuo },
	title         = { Curator: Efficient Vector Search with Low-Selectivity Filters },
	year          = 2026,
	eprint        = { 2601.01291 },
	archiveprefix = { arXiv },
	primaryclass  = { cs.DB }
}

@misc{siglip,
  author       = { Michael Tschannen and Alexey Gritsenko and Xiao Wang and Muhammad Ferjad Naeem and Ibrahim Alabdulmohsin and Nikhil Parthasarathy and Talfan Evans and Lucas Beyer and Ye Xia and Basil Mustafa and Olivier Hénaff and Jeremiah Harmsen and Andreas Steiner and Xiaohua Zhai },
  title        = { SigLIP 2: Multilingual Vision-Language Encoders with Improved Semantic Understanding, Localization, and Dense Features },
  year         = 2025,
  eprint       = { 2502.14786 },
  archiveprefix = { arXiv },
  primaryclass = { cs.CV }
}

@misc{ada-ivf,
  author       = { Jason Mohoney and Anil Pacaci and Shihabur Rahman Chowdhury and Umar Farooq Minhas and Jeffery Pound and Cedric Renggli and Nima Reyhani and Ihab F. Ilyas and Theodoros Rekatsinas and Shivaram Venkataraman },
  title        = { Incremental IVF Index Maintenance for Streaming Vector Search },
  year         = 2024,
  eprint       = { 2411.00970 },
  archiveprefix = { arXiv },
  primaryclass = { cs.DB }
}

@misc{rag-survey1,
  author       = { Yunfan Gao and Yun Xiong and Xinyu Gao and Kangxiang Jia and Jinliu Pan and Yuxi Bi and Yi Dai and Jiawei Sun and Meng Wang and Haofen Wang },
  title        = { Retrieval-Augmented Generation for Large Language Models: A Survey },
  year         = 2024,
  eprint       = { 2312.10997 },
  archiveprefix = { arXiv },
  primaryclass = { cs.CL }
}

@misc{rag-survey2,
  author       = { Wenqi Fan and Yujuan Ding and Liangbo Ning and Shijie Wang and Hengyun Li and Dawei Yin and Tat-Seng Chua and Qing Li },
  title        = { A Survey on RAG Meeting LLMs: Towards Retrieval-Augmented Large Language Models },
  year         = 2024,
  eprint       = { 2405.06211 },
  archiveprefix = { arXiv },
  primaryclass = { cs.CL }
}

@misc{dpr,
  author       = { Vladimir Karpukhin and Barlas Oğuz and Sewon Min and Patrick Lewis and Ledell Wu and Sergey Edunov and Danqi Chen and Wen-tau Yih },
  title        = { Dense Passage Retrieval for Open-Domain Question Answering },
  year         = 2020,
  eprint       = { 2004.04906 },
  archiveprefix = { arXiv },
  primaryclass = { cs.CL }
}

@misc{cng,
  author       = { Binhong Li and Xiao Yan and Shangqi Lu },
  title        = { Fast-Convergent Proximity Graphs for Approximate Nearest Neighbor Search },
  year         = 2026,
  eprint       = { 2510.05975 },
  archiveprefix = { arXiv },
  primaryclass = { cs.DS }
}

@misc{bench-agnostic,
  author       = { Patrick Iff and Paul Bruegger and Marcin Chrapek and David Kochergin and Maciej Besta and Torsten Hoefler },
  title        = { Benchmarking Filtered Approximate Nearest Neighbor Search Algorithms on Transformer-based Embedding Vectors },
  year         = 2026,
  eprint       = { 2507.21989 },
  archiveprefix = { arXiv },
  primaryclass = { cs.DB }
}

@misc{infinity,
  author       = { Mengzhao Wang and Boyu Tan and Yunjun Gao and Hai Jin and Yingfeng Zhang and Xiangyu Ke and Xiaoliang Xu and Yifan Zhu },
  title        = { Balancing the Blend: An Experimental Analysis of Trade-offs in Hybrid Search },
  year         = 2025,
  eprint       = { 2508.01405 },
  archiveprefix = { arXiv },
  primaryclass = { cs.DB }
}

@misc{minilm,
  author       = { Wenhui Wang and Furu Wei and Li Dong and Hangbo Bao and Nan Yang and Ming Zhou },
  title        = { MiniLM: Deep Self-Attention Distillation for Task-Agnostic Compression of Pre-Trained Transformers },
  year         = 2020,
  eprint       = { 2002.10957 },
  archiveprefix = { arXiv },
  primaryclass = { cs.CL }
}

@misc{siftgisturl,
  author       = { Laurent Amsaleg and Hervé Jégou },
  title        = { Evaluation of Approximate Nearest Neighbors: Large Datasets },
  year         = { 2026 },
  howpublished = { \url{http://corpus-texmex.irisa.fr/} }
}

@misc{tripclickurl,
  author       = { Navid Rekabsaz and Oleg Lesota and Markus Schedl and Jon Brassey and Carsten Eickhoff },
  title        = { TripClick: The Log Files of a Large Health Web Search Engine },
  year         = { 2026 },
  howpublished = { \url{https://tripdatabase.github.io/tripclick/} }
}

@misc{laion400murl,
  author       = { Christoph Schuhmann },
  title        = { LAION-400-Million Open Dataset },
  year         = { 2021 },
  howpublished = { \url{https://laion.ai/blog/laion-400-open-dataset/} }
}

@misc{youtube8murl,
  author       = { Google Research },
  title        = { YouTube-8M: A Large and Diverse Labeled Video Dataset for Video Understanding Research },
  year         = { 2026 },
  howpublished = {\url{https://research.google.com/youtube8m/index.html}}
}

@misc{wikieurl,
  author       = { Huggingface },
  title        = { Wikipedia-22-12-en Embeddings with all-MiniLM-L6-v2 },
  year         = { 2026 },
  howpublished = { \url{https://huggingface.co/datasets/maloyan/wikipedia-22-12-en-embeddings-all-MiniLM-L6-v2} }
}

@misc{yfccurl,
  author       = { BigANN Organizers },
  title        = { NeurIPS'23 Competition Track: Big-ANN },
  year         = { 2026 },
  howpublished = { \url{https://big-ann-benchmarks.com/neurips23.html} }
}

@misc{pgvector,
  author       = { Andrew Kane },
  title        = { pgvector: Open-source vector similarity search for Postgres },
  year         = { 2026 },
  howpublished = { \url{https://github.com/pgvector/pgvector} }
}

@article {nsg,
  author       = { Cong Fu and Chao Xiang and Changxu Wang and Deng Cai },
  title        = { Fast Approximate Nearest Neighbor Search with the Navigating Spreading-out Graph },
  journal      = { Proc. VLDB Endow. },
  year         = 2019,
  pages        = { 461--474 },
  volume       = 12,
  number       = 5
}

@article {nssg,
  author       = { Cong Fu and Changxu Wang and Deng Cai },
  title        = { High Dimensional Similarity Search With Satellite System Graph: Efficiency, Scalability, and Unindexed Query Compatibility },
  journal      = { IEEE Trans. Pattern Anal. Mach. Intell. },
  year         = 2022,
  pages        = { 4139--4150 },
  volume       = 44,
  number       = 8
}

@article{tau-mg,
  author       = { Yun Peng and Byron Choi and Tsz Nam Chan and Jianye Yang and Jianliang Xu },
  title        = { Efficient Approximate Nearest Neighbor Search in Multi-dimensional Databases },
  journal      = { Proc. ACM Manag. Data },
  year         = 2023,
  volume       = 1,
  number       = 1,
  numpages     = 27
}

@article {graph-survey,
  author       = { Mengzhao Wang and Xiaoliang Xu and Qiang Yue and Yuxiang Wang },
  title        = { A Comprehensive Survey and Experimental Comparison of Graph-Based Approximate Nearest Neighbor Search },
  journal      = { Proc. VLDB Endow. },
  year         = 2021,
  pages        = { 1964--1978 },
  volume       = 14,
  number       = 11
}

@article {acorn,
  author       = { Liana Patel and Peter Kraft and Carlos Guestrin and Matei Zaharia },
  title        = { ACORN: Performant and Predicate-Agnostic Search over Vector Embeddings and Structured Data },
  journal      = { Proc. ACM Manag. Data },
  year         = 2024,
  volume       = 2,
  number       = 3,
  numpages     = 27
}

@article {ung,
  author       = { Yuzheng Cai and Jiayang Shi and Yizhuo Chen and Weiguo Zheng },
  title        = { Navigating Labels and Vectors: A Unified Approach to Filtered Approximate Nearest Neighbor Search },
  journal      = { Proc. ACM Manag. Data },
  year         = 2024,
  volume       = 2,
  number       = 6,
  numpages     = 27
}

@article{unifilter,
  author       = { Xie, Jiadong and Yu, Jeffrey Xu and Teng, Siyi and Liu, Yingfan },
  title        = { Beyond Vector Search: Querying With and Without Predicates },
  journal      = { Proc. ACM Manag. Data },
  year         = 2025,
  volume       = 3,
  number       = 6,
  numpages     = 26
}

@article{rwalks,
  author       = { Ait Aomar, Anas and Echihabi, Karima and Arnaboldi, Marco and Alagiannis, Ioannis and Hilloulin, Damien and Cherkaoui, Manal },
  title        = { RWalks: Random Walks as Attribute Diffusers for Filtered Vector Search },
  journal      = { Proc. ACM Manag. Data },
  year         = 2025,
  volume       = 3,
  number       = 3,
  numpages     = 26
}

@article{vsag,
  author       = { Zhong, Xiaoyao and Li, Haotian and Jin, Jiabao and Yang, Mingyu and Chu, Deming and Wang, Xiangyu and Shen, Zhitao and Jia, Wei and Gu, George and Xie, Yi and Lin, Xuemin and Shen, Heng Tao and Song, Jingkuan and Cheng, Peng },
  title        = { VSAG: An Optimized Search Framework for Graph-Based Approximate Nearest Neighbor Search },
  journal      = { Proc. VLDB Endow. },
  year         = 2025,
  volume       = 18,
  number       = 12,
  numpages     = 14
}

@article{faiss,
  author       = { Johnson, Jeff and Douze, Matthijs and J\'{e}gou, Herv\'{e} },
  title        = { Billion-Scale Similarity Search with GPUs },
  journal      = { IEEE Trans. Big Data },
  year         = 2021,
  volume       = 7,
  number       = 3,
  pages        = { 535--547 }
}

@article{wow,
  author       = { Wang, Ziqi and Zhang, Jingzhe and Hu, Wei },
  title        = { WoW: A Window-to-Window Incremental Index for Range-Filtering Approximate Nearest Neighbor Search },
  journal      = { Proc. ACM Manag. Data },
  year         = 2025,
  volume       = 3,
  number       = 6,
  numpages     = 27
}

@article{filterbench_sigmod,
  author       = { Li, Mocheng and Yan, Xiao and Lu, Baotong and Zhang, Yue and Cheng, James and Ma, Chenhao },
  title        = { Attribute Filtering in Approximate Nearest Neighbor Search: An In-depth Experimental Study },
  journal      = { Proc. ACM Manag. Data },
  year         = 2025,
  volume       = 3,
  number       = 6,
  numpages     = 26
}

@article {hnsw,
  author       = { Yu A. Malkov and D. A. Yashunin },
  title        = { Efficient and Robust Approximate Nearest Neighbor Search Using Hierarchical Navigable Small World Graphs },
  journal      = { IEEE Trans. Pattern Anal. Mach. Intell. },
  year         = 2020,
  pages        = { 824--836 },
  volume       = 42,
  number       = 4
}

@article{hscg,
	author       = { Qiu, Runwen and Tang, Jing },
	title        = { Efficient Approximate Nearest Neighbor Search via Hemi-Sphere Centroids Graph },
	journal      = { Proc. ACM Manag. Data },
	year         = 2025,
	volume       = 3,
	number       = 6,
	numpages     = 26
}

@article {psp,
  author       ={ Chen, Tingyang and Fu, Cong and Wang, Kun and Ke, Xiangyu and Gao, Yunjun and Zhou, Wenchao and Ni, Yabo and Zeng, Anxiang },
  title        ={ Maximum Inner Product is Query-Scaled Nearest Neighbor },
  journal      = { Proc. VLDB Endow. },
  year         = 2025,
  pages        = { 1770--1783 },
  volume       = 18,
  number       = 6
}

@article{zipfreview,
  author       = { Piantadosi, Steven T. },
  title        = { Zipf's Word Frequency Law in Natural Language: A Critical Review and Future Directions },
  journal      = { Psychonomic Bulletin \& Review },
  year         = 2014,
  pages        = { 1112--1130 },
  volume       = 21,
  number       = 5
}

@article{db-lsh2,
  author       = { Tian, Yao and Zhao, Xi and Zhou, Xiaofang },
  title        = { DB-LSH 2.0: Locality-Sensitive Hashing With Query-Based Dynamic Bucketing },
  journal      = { IEEE Trans. Knowl. Data Eng. },
  year         = 2024,
  volume       = 36,
  number       = 3,
  pages        = { 1000--1015 }
}

@article{det-lsh,
  author       = { Wei, Jiuqi and Peng, Botao and Lee, Xiaodong and Palpanas, Themis },
  title        = { DET-LSH: A Locality-Sensitive Hashing Scheme with Dynamic Encoding Tree for Approximate Nearest Neighbor Search },
  journal      = { Proc. VLDB Endow. },
  year         = 2024,
  volume       = 17,
  number       = 9,
  numpages     = 14,
  pages        = { 2241--2254 }
}

@article {tree-based,
  author       = { Marius Muja and David G. Lowe },
  title        = { Scalable Nearest Neighbor Algorithms for High Dimensional Data },
  journal      = { IEEE Trans. Pattern Anal. Mach. Intell. },
  year         = 2014,
  pages        = { 2227--2240 },
  volume       = 36,
  number       = 11
}

@article{cracking-ivf,
  author       = { Mageirakos, Vasilis and Wu, Bowen and Alonso, Gustavo },
  title        = { Cracking Vector Search Indexes },
  journal      = { Proc. VLDB Endow. },
  year         = 2025,
  pages        = { 3951--3964 },
  volume       = 18,
  number       = 11
}

@article{pq,
  author       = { Jégou, Herve and Douze, Matthijs and Schmid, Cordelia },
  title        = { Product Quantization for Nearest Neighbor Search },
  journal      = { IEEE Trans. Pattern Anal. Mach. Intell. },
  year         = 2011,
  pages        = { 117--128 },
  volume       = 33,
  number       = 1
}

@article{almg,
  author       = { Xie, Jiadong and Yu, Jeffrey Xu and Liu, Yingfan },
  title        = { Graph Based K-Nearest Neighbor Search Revisited },
  journal      = { ACM Trans. Database Syst. },
  year         = 2025,
  numpages     = 30,
  volume       = 50,
  number       = 4
}

@article {hqi,
  author       = { Jason Mohoney and Anil Pacaci and Shihabur Rahman Chowdhury and Ali Mousavi and Ihab F. Ilyas and Umar Farooq Minhas and Jeffrey Pound and Theodoros Rekatsinas },
  title        = { High-Throughput Vector Similarity Search in Knowledge Graphs },
  journal      = { Proc. ACM Manag. Data },
  year         = 2023,
  volume       = 1,
  number       = 2,
  numpages     = 25
}

@article{sieve,
  author       = { Li, Zhaoheng and Huang, Silu and Ding, Wei and Park, Yongjoo and Chen, Jianjun },
  title        = { SIEVE: Effective Filtered Vector Search with Collection of Indexes },
  journal      = { Proc. VLDB Endow. },
  year         = 2025,
  volume       = 18,
  number       = 11,
  numpages     = 14,
  pages        = { 4723–-4736 }
}

@article {adbv-vdb,
  author       = { Chuangxian Wei and Bin Wu and Sheng Wang and Renjie Lou and Chaoqun Zhan and Feifei Li and Yuanzhe Cai },
  title        = { AnalyticDB-V: A Hybrid Analytical Engine Towards Query Fusion for Structured and Unstructured Data },
  journal      = { Proc. VLDB Endow. },
  year         = 2020,
  pages        = { 3152--3165 },
  volume       = 13,
  number       = 12
}

@article{navix,
  author       = { Sehgal, Gaurav and Saliho\u{g}lu, Semih },
  title        = { NaviX: A Native Vector Index Design for Graph DBMSs With Robust Predicate-Agnostic Search Performance },
  journal      = { Proc. VLDB Endow. },
  year         = 2025,
  volume       = 18,
  number       = 11,
  numpages     = 13,
  pages        = { 4438–4450 }
}

@article {serf,
  author       = { Chaoji Zuo and Miao Qiao and Wenchao Zhou and Feifei Li and Dong Deng },
  title        = { SeRF: Segment Graph for Range-Filtering Approximate Nearest Neighbor Search },
  journal      = { Proc. ACM Manag. Data },
  year         = 2024,
  volume       = 2,
  number       = 1,
  numpages     = 26
}

@article {irange,
  author       = { Yuexuan Xu and Jianyang Gao and Yutong Gou and Cheng Long and Christian S. Jensen },
  title        = { iRangeGraph: Improvising Range-dedicated Graphs for Range-filtering Nearest Neighbor Search },
  journal      = { Proc. ACM Manag. Data },
  year         = 2024,
  volume       = 2,
  number       = 6,
  numpages     = 26
}

@article {vdb-survey,
  author       = { James Jie Pan and Jianguo Wang and Guoliang Li },
  title        = { Survey of Vector Database Management Systems },
  journal      = { VLDB J. },
  year         = 2024,
  pages        = { 1591--1615 },
  volume       = 33
}

@article{exrbq,
  author       = { Gao, Jianyang and Gou, Yutong and Xu, Yuexuan and Yang, Yongyi and Long, Cheng and Wong, Raymond Chi-Wing },
  title        = { Practical and Asymptotically Optimal Quantization of High-Dimensional Vectors in Euclidean Space for Approximate Nearest Neighbor Search },
  journal      = { Proc. ACM Manag. Data },
  year         = 2025,
  volume       = 3,
  number       = 3,
  numpages     = 26
}

@article{rbq,
  author       = { Gao, Jianyang and Long, Cheng },
  title        = { RaBitQ: Quantizing High-Dimensional Vectors with a Theoretical Error Bound for Approximate Nearest Neighbor Search },
  journal      = { Proc. ACM Manag. Data },
  year         = 2024,
  volume       = 2,
  number       = 3,
  numpages     = 27
}

@article{dsg,
  author       = { Peng, Zhencan and Qiao, Miao and Zhou, Wenchao and Li, Feifei and Deng, Dong },
  title        = { Dynamic Range-Filtering Approximate Nearest Neighbor Search },
  journal      = { Proc. VLDB Endow. },
  year         = 2025,
  volume       = 18,
  number       = 10,
  numpages     = 13
}

@article{digra,
  author       = { Jiang, Mengxu and Yang, Zhi and Zhang, Fangyuan and Hou, Guanhao and Shi, Jieming and Zhou, Wenchao and Li, Feifei and Wang, Sibo },
  title        = { DIGRA: A Dynamic Graph Indexing for Approximate Nearest Neighbor Search with Range Filter },
  journal      = { Proc. ACM Manag. Data },
  year         = { 2025 },
  volume       = { 3 },
  number       = { 3 },
  numpages     = { 26 },
}

@article{eg-imgcontent,
  author       = { Akg{\"u}l, Ceyhun Burak and Rubin, Daniel L. and Napel, Sandy and Beaulieu, Christopher F. and Greenspan, Hayit and Acar, Burak },
  title        = { Content-Based Image Retrieval in Radiology: Current Status and Future Directions },
  journal      = { Journal of Digital Imaging },
  year         = 2011,
  volume       = 24,
  number       = 2,
  pages        = { 208--222 }
}

@article{eg-imgretrieval,
  author       = { Hwang, Kyung Hoon and Lee, Haejun and Koh, Geon and Willrett, Debra and Rubin, Daniel L. },
  title        = { Building and Querying {RDF}/{OWL} Database of Semantically Annotated Nuclear Medicine Images },
  journal      = { Journal of Digital Imaging },
  year         = 2017,
  volume       = 30,
  number       = 1,
  pages        = { 4--10 }
}

@book{zipfbook,
  author       = { Zipf, George Kingsley },
  title        = { Human Behavior and the Principle of Least Effort: An Introduction to Human Ecology },
  year         = 2016,
  publisher    = { Addison-Wesley Publishing },
  address      = { Boston, MA, USA }
}

\appendix


\section{Proof of Theorem~\ref{theorem:minhash}}
\label{app:proof_th31}
\begin{proof}
    By the MinHash property, for any random function $h_i$, $\Pr\big(h_i^{\min}(L_1)=h_i^{\min}(L_2)\big)=J(L_1,L_2)\ge1-\theta_t$.
    Since the hash functions are independent, the probability that $L_1,L_2$ match in all $\frac{\tau}{\beta}$ hash values of a single band is at least $(1-\theta_t)^{\frac{\tau}{\beta}}$, and the prabability that they do not match in any of the $\beta$ band is at most $\big(1-(1-\theta_t)^\frac{\tau}{\beta}\big)^\beta$.
    Taking the complementary event gives us the probability that $L_1,L_2$ match in at least one band in Eq.~\eqref{eq:p_minhash_colli}.
\end{proof}

\section{Proofs for Containment Lower Bound}
\label{app:proof_containment}

We first derive the containment kernel for fixed $L_1$ and $L_2$.
Without the Jaccard similarity constraint, the containment probability conditioned on fixed $L_1$ and $L_2$ is given by Lemma~\ref{lemma:iid_prod}.

\begin{lemma}\label{lemma:iid_prod}
For any fixed non-empty $L_1,L_2\in\mathcal{L}$,
\begin{align}
    \Pr(L_q\subseteq L_2\,|\, L_q\subseteq L_1,L_1,L_2)
    =
    \prod_{a_i\in L_1\backslash L_2}(1-p_i)
    :=
    K(L_1,L_2).
\end{align}
\end{lemma}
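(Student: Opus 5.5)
The plan is to compute the conditional probability directly from the definition, using independence of label inclusions in $L_q$ under Def.~\ref{def:label_iid}. Since $L_1,L_2$ are fixed and $L_q$ is generated independently of them, conditioning on $L_1,L_2$ only fixes the sets. We write
\begin{align*}
\Pr(L_q\subseteq L_2\,|\,L_q\subseteq L_1)=\frac{\Pr(L_q\subseteq L_1\cap L_2)}{\Pr(L_q\subseteq L_1)},
\end{align*}
because the event $\{L_q\subseteq L_2\}\cap\{L_q\subseteq L_1\}$ equals $\{L_q\subseteq L_1\cap L_2\}$.

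Next, for any fixed set $S\subseteq A$, the event $L_q\subseteq S$ means every label outside $S$ is excluded. By independent Bernoulli inclusion,
\begin{align*}
\Pr(L_q\subseteq S)=\prod_{a_i\in A\backslash S}(1-p_i).
\end{align*}
We apply this with $S=L_1\cap L_2$ and with $S=L_1$. Since $A\backslash(L_1\cap L_2)$ is the disjoint union of $A\backslash L_1$ and $L_1\backslash L_2$, the numerator factors as $\prod_{a_i\in A\backslash L_1}(1-p_i)\cdot\prod_{a_i\in L_1\backslash L_2}(1-p_i)$. The first factor cancels with the denominator, which leaves exactly $K(L_1,L_2)$.

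The only delicate step is ensuring the denominator is nonzero so that the conditioning is well defined. $\Pr(L_q\subseteq L_1)=0$ exactly when some $a_i\notin L_1$ has $p_i=1$. I would handle this in one of two ways. The first is to assume $p_i<1$ for labels outside $L_1$, which is implicit whenever such $L_1$ can occur in $\mathcal{L}$ under the model: a label with $p_i=1$ lies in every generated set. The second is to note that the identity holds under the usual convention for conditioning on a null event. Non-emptiness of $L_1,L_2$ plays no role beyond matching the statement. Edge cases such as $L_1\subseteq L_2$ give an empty product equal to $1$, which is consistent with the formula.
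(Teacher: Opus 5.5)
Your proposal is correct and follows the paper's proof essentially step for step. The shared steps are: write the conditional as $\Pr(L_q\subseteq L_1\cap L_2)/\Pr(L_q\subseteq L_1)$, evaluate each term as a product of $(1-p_i)$ over excluded labels, and cancel using the disjoint decomposition of $A\backslash(L_1\cap L_2)$ into $A\backslash L_1$ and $L_1\backslash L_2$. Your check that the denominator is nonzero (it vanishes only if some $a_i\notin L_1$ has $p_i=1$) is extra care that the paper omits.
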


\begin{proof}
By definition of conditional probability:
    \begin{align} \label{eq:lqsubl2}
        \Pr(L_q\subseteq L_2 \,|\, L_q \subseteq L_1, L_1, L_2)=\frac{\Pr(L_q \subseteq L_1\cap L_2 \,|\, L_1,L_2)}{\Pr(L_q \subseteq L_1 \,|\, L_1,L_2)}.
    \end{align}
    
    As for two arbitrary sets $S$ and $T$, $\Pr(S \subseteq T)=\prod_{a_i\notin T}\Pr(a_i\notin S)=\prod_{a_i\notin T}(1-p_i)$, we have
    \begin{align}
        \Pr(L_q\subseteq L_2 \,|\, L_q \subseteq L_1, L_1, L_2)=\frac{\prod_{a_i\notin (L_1\cap L_2)}(1-p_i)}{\prod_{a_i \notin L_1}(1-p_i)}.
    \end{align}
    
    With relation between sets $L_1 \backslash (L_1\cap L_2)=L_1\backslash L_2$, $\Pr(L_q\subseteq L_2 \,|\, L_q \subseteq L_1, L_1, L_2)=\prod_{a_i\in L_1\backslash L_2}(1-p_i)$, which is defined as $K(L_1,L_2)$.
\end{proof}

Lemma~\ref{lemma:iid_prod} shows that, given any label sets $L_1$ and $L_2$, the containment target fails only when $L_q$ contains a label in $L_1$ that does not belong to $L_2$.
Thus, every label in $L_1\backslash L_2$ must be absent from $L_q$, giving the product term $K(L_1,L_2)$.

The term $K(L_1,L_2)$ does not yet incorporate the similarity constraint between $L_1$ and $L_2$.
Under $J(L_1,L_2)\ge\epsilon$, the size of the difference set $L_1\backslash L_2$ is controlled by Lemma~\ref{lemma:jac_cond}.

\begin{lemma}\label{lemma:jac_cond}
If $J(L_1,L_2)\ge\epsilon$, then $|L_1 \backslash L_2|\le(1-\epsilon)|L_1|$.
\end{lemma}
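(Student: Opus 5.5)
We prove Lemma~\ref{lemma:jac_cond} by elementary set counting, using $|L_1\cap L_2|$ and $|L_1\cup L_2|$ as the basic quantities. Write $I=L_1\cap L_2$ and $U=L_1\cup L_2$. Since $L_1$ is the disjoint union of $I$ and $L_1\backslash L_2$, we have $|L_1\backslash L_2| = |L_1| - |I|$. The Jaccard constraint $J(L_1,L_2)=|I|/|U|\ge\epsilon$ gives $|I|\ge \epsilon|U|$. (We assume $U\neq\emptyset$, which holds since the label sets in Lemma~\ref{lemma:iid_prod} are non-empty; if both were empty the claim is trivial.)

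The key step is to lower-bound $|I|$ by a multiple of $|L_1|$ rather than of $|U|$. Because $L_1\subseteq U$, we have $|U|\ge |L_1|$. Combining, $|I|\ge\epsilon|U|\ge\epsilon|L_1|$, using $\epsilon\ge 0$. Substituting into the identity above yields
\begin{align*}
|L_1\backslash L_2| = |L_1|-|I| \le |L_1|-\epsilon|L_1| = (1-\epsilon)|L_1|,
\end{align*}
which is the claim.

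There is no real obstacle; the only point needing care is the direction of the inequality when replacing $|U|$ by $|L_1|$, which is valid precisely because $\epsilon$ is non-negative and $L_1\subseteq L_1\cup L_2$. The same argument applied symmetrically gives $|L_2\backslash L_1|\le(1-\epsilon)|L_2|$. This bound on the difference set is what we will later combine with $K(L_1,L_2)$ from Lemma~\ref{lemma:iid_prod}. Each factor there is at least $1-p_{\max}$, and $|L_1|\le|A|$, so we obtain the worst-case bound in Eq.~\eqref{eq:worst_pc}; a Chebyshev bound on $|L_1|$ gives Eq.~\eqref{eq:prob_pc}.
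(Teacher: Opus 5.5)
Your proof is correct and matches the paper's argument: the Jaccard constraint together with $|L_1\cup L_2|\ge|L_1|$ gives $|L_1\cap L_2|\ge\epsilon|L_1|$, and subtracting this from $|L_1|$ gives the bound. You also note explicitly that the substitution step needs $\epsilon\ge 0$ and that the union must be non-empty, which the paper leaves implicit.
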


\begin{proof}
As $|L_1\cup L_2|\ge |L_1|$, $J(L_1,L_2)=\frac{|L_1\cap L_2|}{|L_1\cup L_2|}\ge \epsilon$, we have $|L_1\cap L_2| \ge \epsilon|L_1|$, and $|L_1\backslash L_2|=|L_1|-|L_1\cap L_2|\le (1-\epsilon)|L_1|$.
\end{proof}

Lemma~\ref{lemma:jac_cond} formalizes the basic effect of label set similarity: a larger Jaccard threshold leaves fewer labels in $L_1$ that may be absent from $L_2$.
This is the combinatorial reason why stricter tiers improve containment consistency.

\subsection{Proof of Theorem~\ref{theorem:pc_gen}}
\label{app:proof_th41}

\begin{proof}
Combining Lemma~\ref{lemma:iid_prod} and Lemma~\ref{lemma:jac_cond}, we have
\begin{align} \label{eq:Hlower}
    K(L_1,L_2)\geq \prod_{a_i\in L_1\backslash L_2}(1-p_{\max}) \geq (1-p_{\max})^{(1-\epsilon)|L_1|}.
\end{align}

Since the point-wise lower bound in Eq.~\eqref{eq:Hlower} holds for all pairs $L_1,L_2$, we can take expectation w.r.t. the conditional distribution of $L_1,L_2$ given $L_q\subseteq L_1$ and $J(L_1,L_2) \ge \epsilon$ as follows:
\begin{align}
    P_c =\mathbb{E}_{L_1,L_2}\big[K(L_1,L_2) \,|\, J(L_1,L_2)\ge \epsilon\big].
\end{align}

Replace $|L_1\backslash L_2|$ with Lemma~\ref{lemma:jac_cond} and use the maximum probability among all labels, taking the fact that $|L_1|\le |A|$:
\begin{align}
    P_c\ge \mathbb{E}\left[(1-p_{\max})^{(1-\epsilon)|L_1|} \,|\, J(L_1,L_2)\ge \epsilon \right] \ge (1-p_{\max})^{(1-\epsilon)|A|}.
\end{align}

By Chebyshev's inequality, $\Pr\left(|L_1|\le \mu+\frac{\sigma}{\sqrt{\xi}}\right)\ge 1-\xi$, replacing $|L_1|$ in Eq.~\eqref{eq:Hlower} gives us a high-confidence bound in Eq.~\eqref{eq:prob_pc}, where $\mu \ll |A|$ in real-world datasets.
\end{proof}

We next characterize the skewed setting.
Let $\mathcal{J}: J(L_1,L_2)\ge\epsilon$ and $D=L_1\backslash L_2$.
The following lemma bounds the probability that a specific label appears in the difference set under the high-Jaccard constraint.

\begin{lemma} \label{lemma:highinD}
    Let event $\mathcal{J}: J(L_1,L_2) \ge \epsilon$, $D=L_1\backslash L_2$. $\forall a_i \in A, \Pr(a_i\in D \,|\,\mathcal{J})\le (1-\epsilon)(1-p_i)p_i$.
\end{lemma}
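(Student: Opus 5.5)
The plan is to factor the event $\{a_i\in D\}=\{a_i\in L_1\}\cap\{a_i\notin L_2\}$ and treat the Jaccard constraint $\mathcal{J}$ as a restriction on how large a share of $L_1$ may be missing from $L_2$. Unconditionally, Def.~\ref{def:label_iid} makes $L_1$ and $L_2$ independent, so
\[
\Pr(a_i\in D)=\Pr(a_i\in L_1)\Pr(a_i\notin L_2)=p_i(1-p_i).
\]
It therefore suffices to show that conditioning on $\mathcal{J}$ shrinks this probability by at least the factor $(1-\epsilon)$. The natural source of that factor is Lemma~\ref{lemma:jac_cond}: under $\mathcal{J}$, at most a $(1-\epsilon)$-fraction of the labels of $L_1$ can lie in $D$.

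\textbf{Step 1 (decomposition).} Fix the cardinality $|L_1|=n\ge 1$ and the set $L_1\setminus\{a_i\}$, and write
\[
\Pr(a_i\in D\,|\,\mathcal{J})
=\Pr(a_i\in L_1\,|\,\mathcal{J})\cdot\Pr(a_i\notin L_2\,|\,a_i\in L_1,\mathcal{J}).
\]

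\textbf{Step 2 (first factor).} I would argue $\Pr(a_i\in L_1\,|\,\mathcal{J})\le p_i$. The idea is a monotone coupling: toggling $a_i$ into or out of $L_1$ with $L_2$ held fixed changes $J$ in a direction that does not favour membership of $a_i$. This is an FKG/Harris-type correlation inequality on the product measure over $\{0,1\}^{A}$.

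\textbf{Step 3 (second factor).} I would bound $\Pr(a_i\notin L_2\,|\,a_i\in L_1,\mathcal{J})\le(1-\epsilon)(1-p_i)$ by combining two effects:
\begin{itemize}
\item Lemma~\ref{lemma:jac_cond} gives $|D|\le(1-\epsilon)|L_1|$ on $\mathcal{J}$. Averaging over the labels of $L_1$ then gives $\mathbb{E}\bigl[\,|D|/|L_1|\;\big|\;\mathcal{J}\bigr]\le 1-\epsilon$, which limits the share of a \emph{typical} label of $L_1$ that falls in $D$.
\item Flipping $a_i$ from $L_2^{c}$ into $L_2$ while $a_i\in L_1$ strictly increases $|L_1\cap L_2|$ and leaves $|L_1\cup L_2|$ unchanged. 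This flip injects every $\mathcal{J}$-configuration with $a_i\in D$ into a $\mathcal{J}$-configuration with $a_i\in L_1\cap L_2$, at measure ratio $(1-p_i)/p_i$. That ratio supplies the $(1-p_i)$ factor.
\end{itemize}
Multiplying the two factors from Steps~2 and~3 yields the stated bound $(1-\epsilon)(1-p_i)p_i$.

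\textbf{Main obstacle.} The hardest step is Step~3: turning the averaged statement from Lemma~\ref{lemma:jac_cond} into a per-label bound for the specific label $a_i$. Labels are not exchangeable under heterogeneous $p_i$, so $a_i$ need not receive only its ``fair share'' $1-\epsilon$ of $D$. My first attempt would be to pair the injection from Step~3 with the observation that the missing-label budget is at most $(1-\epsilon)n$ slots out of $n$. If that fails for general $p_i$, I would restrict to the regime the paper needs, where labels with larger $p_i$ are less likely to be missing (cf. the discussion before Theorem~\ref{theorem:pc_skew}). There I would prove the bound via a rearrangement argument on the sorted $p_i$, accepting it as a model-level approximation if no exact argument is available.
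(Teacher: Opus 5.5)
\textbf{The gap.} Your decomposition is the one the paper uses. It writes $\Pr(a_i\in D\mid\mathcal J)=\Pr(a_i\in L_1\mid\mathcal J)\Pr(a_i\notin L_2\mid a_i\in L_1,\mathcal J)$, takes the first factor to be at most $p_i$ without comment, and asserts from Lemma~\ref{lemma:jac_cond} that the second is at most $(1-\epsilon)(1-p_i)$. The problem is that neither factor bound holds under Def.~\ref{def:label_iid}, and neither does the lemma itself. So Steps~2 and~3 cannot be completed by any argument.

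\textbf{Step~2 fails.} $\mathcal J$ is not a monotone event: adding $a_i$ to $L_1$ alone raises $J$ if $a_i\in L_2$ and cannot raise it otherwise. So Harris/FKG gives nothing. Moreover, for $\epsilon>0$ the event $\mathcal J$ excludes every disjoint pair with nonempty union, which pushes label marginals \emph{up}. Concretely, take $A=\{a_1,a_2\}$, $p_1=p_2=\tfrac12$, $\epsilon=\tfrac12$. All 16 pairs $(L_1,L_2)$ are equally likely, and $\mathcal J$ contains 7 of them (8 if you set $J(\emptyset,\emptyset)=1$). Five of these have $a_1\in L_1$, so $\Pr(a_1\in L_1\mid\mathcal J)\ge\tfrac58>p_1$.

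\textbf{Step~3 fails.} Your flip injection is valid, but it yields only $\Pr(a_i\notin L_2\mid a_i\in L_1,\mathcal J)\le 1-p_i$. The factor $1-\epsilon$ from Lemma~\ref{lemma:jac_cond} exists only in aggregate, $\sum_i\Pr(a_i\in D\mid\mathcal J)=\mathbb E[|D|\mid\mathcal J]\le(1-\epsilon)\,\mathbb E[|L_1|\mid\mathcal J]$, and it does not multiply with the per-label factor. Take $|A|=n\ge 2$, $p_1=\dots=p_{n-1}=1$ (or $1-\delta$ with $\delta\to 0$), $p_n=p\in(0,1)$, and $\epsilon=\frac{n-1}{n}$. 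Every set contains $a_1,\dots,a_{n-1}$, so $J(L_1,L_2)\in\{1,\frac{n-1}{n}\}$ and $\Pr(\mathcal J)=1$. Hence $\Pr(a_n\in D\mid\mathcal J)=p(1-p)$, which is $n=\frac{1}{1-\epsilon}$ times the claimed bound: the entire missing-label budget is spent on the single label $a_n$.

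\textbf{The fallback does not rescue it.} The example above is exactly your fallback regime, since only the rarest label ever goes missing, so a rearrangement argument cannot save the bound. Frequent labels fail too: the two-label example with $p_1=p_2=0.9$ gives $\Pr(a_1\in D\mid\mathcal J)\approx 0.076$ under either convention for $J(\emptyset,\emptyset)$, exceeding the claimed $(1-\epsilon)p_1(1-p_1)=0.045$.

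\textbf{What is provable.} You can prove $\Pr(a_i\in D\mid\mathcal J)\le(1-p_i)\Pr(a_i\in L_1\mid\mathcal J)$, together with the aggregate budget bound above. The per-label product $(1-\epsilon)(1-p_i)p_i$ requires an additional modeling assumption, which is effectively how the paper's argument treats it; a cleverer proof will not supply it.
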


\begin{proof}
$\Pr(a_i\in D\,|\, \mathcal{J})=\Pr(a_i\in L_1 \,|\, \mathcal{J})\Pr(a_i \notin L_2 \,|\, a_i \in L_1,\mathcal{J})$.
Given $a_i \in L_1$ and $\mathcal{J}$ with Lemma~\ref{lemma:jac_cond}, $|L_1\backslash L_2| \le (1-\epsilon) |L_1|$ means that at most $(1-\epsilon)$ fraction of elements in $L_1$ can be missing from $L_2$ with probability $1-p_i$, i.e. $\Pr(a_i \notin L_2 \,|\, a_i \in L_1,\mathcal{J})\le (1-\epsilon)(1-p_i)$:
\begin{align}
\begin{split}
    \Pr(a_i\in D\,|\, \mathcal{J})&=\Pr(a_i \in L_1\,|\, \mathcal{J})\Pr(a_i\notin L_2 \,|\, a_i \in L_1, \mathcal{J}) \\
    &\le (1-\epsilon)(1-p_i)p_i.
\end{split}
\end{align}
\end{proof}

Lemma~\ref{lemma:highinD} explains the intuition behind the Zipf-skewed bound.
Although head labels have large inclusion probabilities, they are less likely to dominate $D$ under a high-Jaccard constraint.
Intuitively, if a frequent label appears in $L_1$, omitting it from $L_2$ consumes part of the limited missing-label budget imposed by $J(L_1,L_2)\ge\epsilon$.
Therefore, under skewed marginals, the difference set is more likely to be formed by lower-probability labels, making $K(L_1,L_2)=\prod_{a_i\in D}(1-p_i)$ larger than what the worst-case $p_{\max}$ bound suggests.

\subsection{Proof of Theorem~\ref{theorem:pc_skew}}
\label{app:proof_th42}

\begin{proof}
Let $p_i := \frac{\mu i^{-s}}{H_{|A|,s}}, H_{|A|,s} := \sum_{j=1}^{|A|} j^{-s}, s \ge 0$ and define $r := \max(2, \arg\min_i \ \frac{H_{i,s}}{H_{|A|,s}} \ge \rho )$.
Hence, the prefix $\{a_1,\ldots,a_r\}$ contains at least a $\rho$-fraction of the Zipf mass.  
As in the theorem statement, let $\mathcal{J} := \{J(L_1,L_2)\ge \epsilon\}, \mathcal{A} := \{D \subseteq \{a_r,\ldots,a_{|A|}\}\}$.

\bparagraph{(1) Bound the expected tail mass.} For fixed $q\ge 0$ and $\forall x\ge 0$,
\begin{align}
\Psi_q(x):=
\begin{cases}
\dfrac{(r+x-1)^{1-q}-(r-1)^{1-q}}{1-q}, & q\neq 1,\\[1.2ex]
\ln\dfrac{r+x-1}{r-1}, & q=1.
\end{cases}
\end{align}

Since $t\mapsto t^{-q}$ is decreasing on $[1,\infty)$, for any integer $y\ge 1$,
\begin{align}
\sum_{i=r}^{r+y-1} i^{-q}
\le \int_{r-1}^{r+y-1} t^{-q}\,dt
= \Psi_q(y),
\end{align}
where the equality follows by direct evaluation of the integral (with the logarithmic form when $q=1$). Therefore, on $\mathcal{A}$,
\begin{align}
\sum_{a_i\in D} p_i
\le \frac{\mu}{H_{|A|,s}} \Psi_s(|D|),
\quad
\sum_{a_i\in D} p_i^2
\le \frac{\mu^2}{H_{|A|,s}^2} \Psi_{2s}(|D|).
\end{align}

Using
$|D|\le |L_1|+|L_2|-2|L_1\cap L_2|\le \frac{1-\epsilon}{1+\epsilon}\bigl(|L_1|+|L_2|\bigr)$, we get
\begin{align}
\mathbb{E}[|D|\,|\, \mathcal{J},\mathcal{A}]
\le \mathbb{E}[|D|\,|\, \mathcal{J}]
\le \frac{2\mu(1-\epsilon)}{1+\epsilon}
:= B.
\end{align}

Moreover, $\Psi_q$ is concave on $x\ge 0$ for $q\ge 0$ since $\Psi_q''(x)=-q(r+x-1)^{-q-1}\le 0$. Hence, by Jensen's inequality,
\begin{align}
\begin{split}
\mathbb{E}\left[\sum_{a_i\in D} p_i \,\middle|\, \mathcal{J},\mathcal{A}\right]
&\le \frac{\mu}{H_{|A|,s}}\,\Psi_s(B), \\
\mathbb{E}\left[\sum_{a_i\in D} p_i^2 \,\middle|\, \mathcal{J},\mathcal{A}\right]
&\le \frac{\mu^2}{H_{|A|,s}^2}\,\Psi_{2s}(B).
\end{split}
\end{align}

\bparagraph{(2) Bound the event kernel and control the complement.}
\begin{align}
    K(L_1,L_2)=\prod_{a_i\in D}(1-p_i) = \exp\left(\sum_{a_i\in D}\ln(1-p_i)\right).
\end{align}

Under the standard long-tail condition on $\mathcal{A}$ (i.e., assuming $p_i\le 0.6$ for all $i\ge r$), we use
$\ln(1-x)\ge -x-x^2$ for $x\in[0,0.6]$ and obtain
\begin{align}
\mathbb{E}[K(L_1,L_2)\,|\, \mathcal{J},\mathcal{A}]
\ge
\exp\left(
-\frac{\mu}{H_{|A|,s}}\Psi_s(B)
-\frac{\mu^2}{H_{|A|,s}^2}\Psi_{2s}(B)
\right).
\end{align}

For $\mathcal{A}^c$, by Lemma~\ref{lemma:highinD},
\begin{align}
\begin{split}
\Pr(\mathcal{A}^c\,|\, \mathcal{J})
&= \Pr\left(\bigcup_{i<r}\{a_i\in D\}\,\middle|\,\mathcal{J}\right) \\
&\le (1-\epsilon)\sum_{i<r} p_i(1-p_i) \le (1-\epsilon)\sum_{i<r} p_i \le \mu\rho(1-\epsilon),
\end{split}
\end{align}
thus $\Pr(\mathcal{A}\,|\, \mathcal{J}) \ge 1-\mu\rho(1-\epsilon)$.

\bparagraph{(3) Combine bounds by the Bayes' theorem.}
\begin{align}
\begin{split}
P_c
&= \mathbb{E}[K(L_1,L_2)\,|\, \mathcal{J}] \\
&\ge \mathbb{E}[K(L_1,L_2)\mathbf{1}(\mathcal{A})\,|\, \mathcal{J}] \ge \mathbb{E}[K(L_1,L_2)\,|\, \mathcal{J},\mathcal{A}] \cdot \Pr(\mathcal{A}\,|\, \mathcal{J}) \\
&\ge
\exp\left(
-\frac{\mu}{H_{|A|,s}}\Psi_s(B)
-\frac{\mu^2}{H_{|A|,s}^2}\Psi_{2s}(B)
\right)\bigl(1-\mu\rho(1-\epsilon)\bigr),
\end{split}
\end{align}
where $B=\frac{2\mu(1-\epsilon)}{1+\epsilon}$.
This establishes the claimed lower bound.
\end{proof}

\section{Proofs for Overlap Lower Bound}
\label{app:proof_overlap}

For overlap queries, we analyze the complementary event that a query overlaps $L_1$ but does not overlap $L_2$.
The following lemma shows that this non-overlap probability is maximized when the query label set is as small as possible.

\begin{lemma} \label{lemma:max_po_comp}
    $\Pr(L_q\cap L_2=\emptyset \,|\, L_q \cap L_1 \neq \emptyset)$ is maximized at $|L_q|=1$.
\end{lemma}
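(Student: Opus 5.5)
Fix $L_1,L_2$ with $J(L_1,L_2)\ge\epsilon$. Following the assumption of Theorem~\ref{theorem:po}, model $L_q$ as follows: first draw a size $k\ge1$, then draw $L_q$ uniformly among the $k$-subsets of $A$. Write $a=|L_1|$, $b=|L_1\cap L_2|$, and $n=|A|$. The plan is to compute
\[
g(k)=\Pr(L_q\cap L_2=\emptyset \mid L_q\cap L_1\neq\emptyset,\ |L_q|=k)
\]
and show that it is nonincreasing in $k$. Since the unconditional quantity is a mixture over sizes, the mixture weights are then chosen to concentrate on $k=1$ in the worst case. The claim that the quantity is maximized at $|L_q|=1$ means that $g(1)\ge g(k)$ for every $k$.

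\textbf{Step 1 (computing $g(1)$).} For $k=1$ the query is a single label drawn uniformly from $L_1$ (conditioning on overlap). It avoids $L_2$ exactly when it falls in $L_1\setminus L_2$. Hence
\[
g(1)=\frac{a-b}{a}=\frac{|L_1\setminus L_2|}{|L_1|}.
\]

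\textbf{Step 2 (monotonicity).} I would avoid working directly with closed forms such as $\binom{n-|L_2|}{k}-\binom{n-|L_1\cup L_2|}{k}$ over $\binom{n}{k}-\binom{n-a}{k}$. Instead I would use a coupling argument. Generate the $k$-subset by revealing elements one at a time. On the event $L_q\cap L_1\neq\emptyset$, let $x$ be the first revealed element that lies in $L_1$; by symmetry, $x$ is uniform on $L_1$. Non-overlap with $L_2$ requires two things: $x\in L_1\setminus L_2$, and every other element of $L_q$ also avoids $L_2$. The first condition has probability $(a-b)/a$. The second only shrinks the probability further. This gives $g(k)\le (a-b)/a=g(1)$ for every $k$.

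The conditional-uniformity claim for $x$ needs a little care. The event that the first $L_1$-element equals a given $y$ has the same probability for every $y\in L_1$ by exchangeability of the labels. I expect this symmetry step to be the main obstacle to state cleanly. It does, however, avoid any explicit binomial inequality.

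\textbf{Step 3 (tie back to the Jaccard bound).} Averaging over the size distribution gives
\[
\Pr(L_q\cap L_2=\emptyset\mid L_q\cap L_1\neq\emptyset)\le g(1),
\]
which completes the lemma. Looking ahead to Theorem~\ref{theorem:po}, Lemma~\ref{lemma:jac_cond} gives $g(1)=|L_1\setminus L_2|/|L_1|\le 1-\epsilon$. Taking complements then yields $P_o\ge\epsilon$, and this holds pointwise in $(L_1,L_2)$, so it survives taking expectation.
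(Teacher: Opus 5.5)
Your proof is correct, and it takes a different route from the paper's. The paper splits $A$ into $L_1\cap L_2$, $L_1\setminus L_2$, $L_2\setminus L_1$ and $(L_1\cup L_2)^c$, and counts $\kappa$-subsets. This gives the closed form $f(\kappa)=\bigl(\binom{b+d}{\kappa}-\binom{d}{\kappa}\bigr)/\bigl(\binom{|A|}{\kappa}-\binom{c+d}{\kappa}\bigr)$, where in the paper's notation $b=|L_1\setminus L_2|$, $c=|L_2\setminus L_1|$ and $d=|(L_1\cup L_2)^c|$. This is exactly the ratio you chose to avoid. The paper then asserts that $f$ is non-increasing because $f(\kappa+1)/f(\kappa)<1$, but never verifies that inequality, which is also a $0/0$ expression when $L_1\subseteq L_2$. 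If that step were proved, the paper's route would give full monotonicity in $\kappa$, which is more than the lemma needs; as written, the crux is left unjustified.

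Your symmetry argument proves exactly the needed statement, $g(k)\le g(1)=|L_1\setminus L_2|/|L_1|$, with no binomial inequality at all. The step you were worried about is clean. For $y,y'\in L_1$, the transposition of $y$ and $y'$ preserves $L_1$ and the law of the ordered draw, and it maps $\{x=y\}$ onto $\{x=y'\}$. The argument is also more general than needed: it uses only invariance of the law of $L_q$ under permutations of the labels in $L_1$. So arbitrary size mixtures, which is the setting Theorem~\ref{theorem:po} actually requires, are handled directly. What you give up is any comparison between two sizes $k,k'>1$.

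Finally, your Step~3 shortens the proof of Theorem~\ref{theorem:po}. Lemma~\ref{lemma:jac_cond} gives $g(1)\le 1-\epsilon$ at once. The paper instead re-derives this by minimizing $|L_1\cap L_2|/|L_1|$ over $|L_2\setminus L_1|$ under the Jaccard constraint.
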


\begin{proof}
Define $L_A=L_1\cap L_2, L_B=L_1\backslash L_2, L_C=L_2\backslash L_1,L_D=(L_1\cup L_2)^c$, and  $a=|L_A|, b=|L_B|, c=|L_C|, d=|L_D|$.
Then, $|L_1|=a+b, |L_2|=a+c, |L_1\cup L_2|=|a+b+c|$.
Define $\kappa=|L_q|$, 
\begin{align}
\resizebox{\columnwidth}{!}{$
    f(\kappa)=\Pr(L_q\cap L_2=\emptyset \,|\, L_q \cap L_1 \neq \emptyset)=\frac{\Pr(L_q\cap L_2=\emptyset \,\wedge\, L_q\cap L_1 \neq \emptyset)}{\Pr(L_q\cap L_1\neq \emptyset)}.
$}
\end{align}

The numerator of $f(\kappa)$ is the probability of a joint event.
$L_q \cap L_2=\emptyset$ iff. $L_q$ cannot contain any elements from $L_2=L_A\cup L_C$, and thus $L_q$ must be contained in $(L_A\cup L_C)^c=L_B\cup L_D$.
Given $L_q\subseteq L_B\cup L_D$, since $L_D\cap L_1 =\emptyset$, $L_q\cap L_1 \neq \emptyset$ means that there is at least one element from $B$.
Thus, $(L_q\subseteq L_B\cup L_D) \,\wedge\, (L_q\cap L_B\neq \emptyset)$.
The number of $\kappa$-element subsets satisfying the above condition is $\binom{b+d}{\kappa}-\binom{d}{\kappa}$.
The dominator of $f(\kappa)$ equals to $1-\Pr(L_q\cap L_1=\emptyset)$, and for $L_q\cap L_1=\emptyset$, $L_q$ must be contained by $(L_A\cup L_B)^c=L_C\cup L_D$.
$f(\kappa)=\frac{\binom{b+d}{\kappa}-\binom{d}{\kappa}}{\binom{|A|}{\kappa}-\binom{c+d}{\kappa}}$, which is non-increasing in $\kappa$ because $\frac{f(\kappa+1)}{f(\kappa)} <1$.
Therefore, $f(\kappa) \le f(1)=\frac{b}{a+b}$.
\end{proof}

Lemma~\ref{lemma:max_po_comp} reduces the worst case of overlap preservation to singleton query labels.
Thus, to lower-bound $P_o$, it suffices to minimize the probability that a singleton label overlapping $L_1$ also belongs to $L_2$ under the Jaccard constraint.

\subsection{Proof of Theorem~\ref{theorem:po}}
\label{app:proof_th43}

\begin{proof}
We consider the probability from the complement without the Jaccard constraint:
\begin{align}
    P_o' = 1-\Pr(L_q\cap L_2 = \emptyset \,|\, L_q\cap L_1 \neq \emptyset).
\end{align}

By Lemma~\ref{lemma:max_po_comp}, $P_o'$ is minimized at $|L_q|=1$, thus the lower bound of $P_o$ is to minimize $P_o'$ subject to $\frac{|L_1 \cap L_2|}{|L_1\cup L_2|} \ge \epsilon$.
Using the same notation in the proof of Lemma~\ref{lemma:max_po_comp}, the object is to find the lower bound of $\frac{a}{a+b}$ subject to $\frac{a}{a+b+c} \ge \epsilon$.
$\frac{a}{a+b+c}\ge \epsilon \Leftrightarrow a\ge \frac{\epsilon}{1-\epsilon}(b+c)$:
\begin{align}
    \frac{a}{a+b}\ge\frac{\frac{\epsilon}{1-\epsilon}(b+c)}{\frac{\epsilon}{1-\epsilon}(b+c)+b}=\frac{\epsilon(b+c)}{b+\epsilon c}=:f(c).
\end{align}

Taking the derivative, $\frac{df}{dc}=\frac{\epsilon b(1-\epsilon)}{(b+\epsilon c)^2} \ge0$.
So, the worst case lower bound is taken at $c=0$, i.e. $L_2 \subseteq L_1$, which gives us $\frac{a}{a+b}\ge \epsilon$.
\end{proof}

\section{Dataset Details}
\label{app:data_detail}
\begin{table}
\centering
\caption{Query-set workload statistics}
\label{tab:workload-query}
\small
\setlength{\tabcolsep}{3pt}
\begin{tabular}{llrrccr}
\toprule
\multirow{2}{*}{\textbf{Datasets}} &
\multirow{2}{*}{\textbf{Semantics}} &
\multirow{2}{*}{\textbf{Labels}} &
\multirow{2}{*}{\textbf{Sets}} &
\multicolumn{3}{c}{\textbf{Label set size}} \\
\cmidrule{5-7} & & & & \textbf{Avg} & \textbf{Med} & \textbf{P90} \\
\midrule
\multirow{3}{*}{SIFT}
  & Equality    & 12 & 588 & 2.3 & 2 & 4 \\
  & Containment & 12 & 611 & 2.3 & 2 & 4 \\
  & Overlap     & 12 & 622 & 2.3 & 2 & 4 \\

\multirow{3}{*}{GIST}
  & Equality    & 12 & 589 & 2.3 & 2 & 4 \\
  & Containment & 12 & 611 & 2.3 & 2 & 4 \\
  & Overlap     & 12 & 622 & 2.3 & 2 & 4 \\

\multirow{3}{*}{TripClick}
  & Equality    & 29 & 249 & 2.0 & 2 & 3 \\
  & Containment & 29 & 305 & 2.1 & 2 & 3 \\
  & Overlap     & 29 & 503 & 2.7 & 3 & 4 \\

\multirow{3}{*}{LAION}
  & Equality    & 30 & 491 & 2.5 & 2 & 4 \\
  & Containment & 30 & 504 & 2.7 & 3 & 4 \\
  & Overlap     & 30 & 524 & 2.8 & 3 & 5 \\

\multirow{3}{*}{YTB-Video}
  & Equality    & 135 & 120 & 2.2 & 2 & 4 \\
  & Containment & 177 & 143 & 2.2 & 2 & 4 \\
  & Overlap     & 279 & 174 & 3.1 & 3 & 5 \\

\multirow{3}{*}{YTB-Audio}
  & Equality    & 675 & 652 & 2.7 & 2 & 5 \\
  & Containment & 731 & 680 & 2.7 & 2 & 5 \\
  & Overlap     & 820 & 708 & 3.1 & 3 & 5 \\

\multirow{3}{*}{Wikipedia}
  & Equality    & 32,780 & 9,673 & 5.8 & 6 & 7 \\
  & Containment & 1,520 & 2,689 & 1.7 & 2 & 3 \\
  & Overlap     & 5,960 & 9,756 & 5.9 & 6 & 9 \\

\multirow{3}{*}{YFCC-1M}
  & Equality    & 186 & 450 & 2.9 & 3 & 3 \\
  & Containment & 261 & 502 & 3.1 & 3 & 3 \\
  & Overlap     & 4753 & 939 & 9.0 & 6 & 16 \\
\bottomrule
\end{tabular}
\end{table}

\bparagraph{Label--vector correlation.}
We compute Spearman's $\rho$ between label Jaccard distance and vector L2 distance for all datasets.
Base vectors are ranked by their label set size $|L|$ and split into four equal-count quartile buckets: Q1 contains the smallest 25\% of label sets and Q4 the largest 25\%.
Fig.~\ref{fig:corr-vec} shows negligible correlations on seven datasets and a modest positive correlation on Wikipedia.
Thus, the evaluated workloads generally lack strong global label--vector correlation, which LSSG does not assume.
Instead, label stratification improves local in-filtering navigation by increasing the likelihood of label-qualified next-hop candidates.

\bparagraph{Query workload statistics.}
Table~\ref{tab:workload-query} reports query-side effective labels, distinct query label sets, and query label set size distributions for all filter semantics.
Most equality and containment queries are compact, with median size 2--3 and P90 at most 7.
Overlap queries are broader.
YFCC-1M overlap uses 4,753 effective query labels with average/median/P90 size 9.0/6/16, while Wikipedia overlap uses 5,960 effective query labels with size 5.9/6/9.
These statistics complement Table~\ref{tab:data} and show that the workloads cover both compact query labels and large, diverse label spaces.

\begin{figure*}
    \centering
    \includegraphics[width=\linewidth]{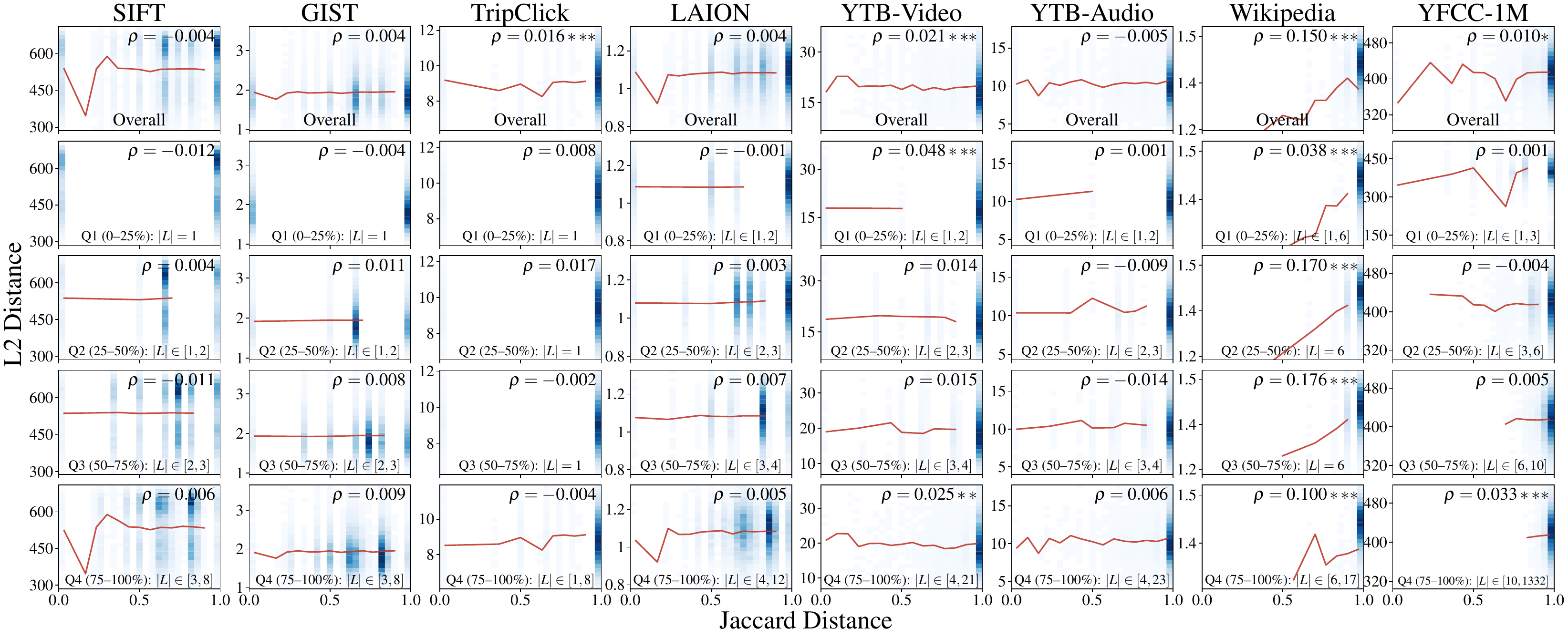}
    \caption{Correlation between pairwise Jaccard distance and L2 distance. Darker regions indicate more object pairs. The red curve shows $\mathbb{E}[\delta_v(v_1,v_2)\,|\,\delta_L(l(v_1),l(v_2))]$, and Spearman's coefficient $\rho$ is marked at the top right corner.}
    \Description{Correlation study of pair-wise Jaccard distance and L2 distance}
    \label{fig:corr-vec}
\end{figure*}

\section{Additional Experiments}
\label{app:all_exps}

\bparagraph{Approximation to the oracle on other datasets.}
Fig.~\ref{fig:other_dc_recall} compares distance computations (DC) against the oracle HNSW on the remaining four datasets.
On SIFT and TripClick, the label space is small and containment filters are highly selective, so all methods operate on relatively small filtered subsets and exhibit similar DC behavior close to the oracle.
In contrast, the gap widens on YTB-Video and YFCC-1M, whose larger label spaces make cross-label navigation more difficult.
LSSG is the closest approximation among LFANNS indices, indicating better navigation efficiency under large-scale label constraints.

\bparagraph{Selectivity breakdown on other datasets.}
Fig.~\ref{fig:other_qps_sel} reports QPS across selectivity percentiles on the remaining four datasets.
On SIFT and TripClick where the label universe is small, ELI, UNG, and LSSG are broadly comparable across percentiles.
On YFCC-1M, ELI fails to construct and UNG cannot reach 0.9 recall across all selectivity percentiles in the overlap semantic.
ACORN performs strongly for overlap on low-selectivity percentiles, consistent with its connectivity advantage when a large fraction of candidates pass the filter.

\begin{figure}[!h]
    \centering
    \includegraphics[width=\linewidth]{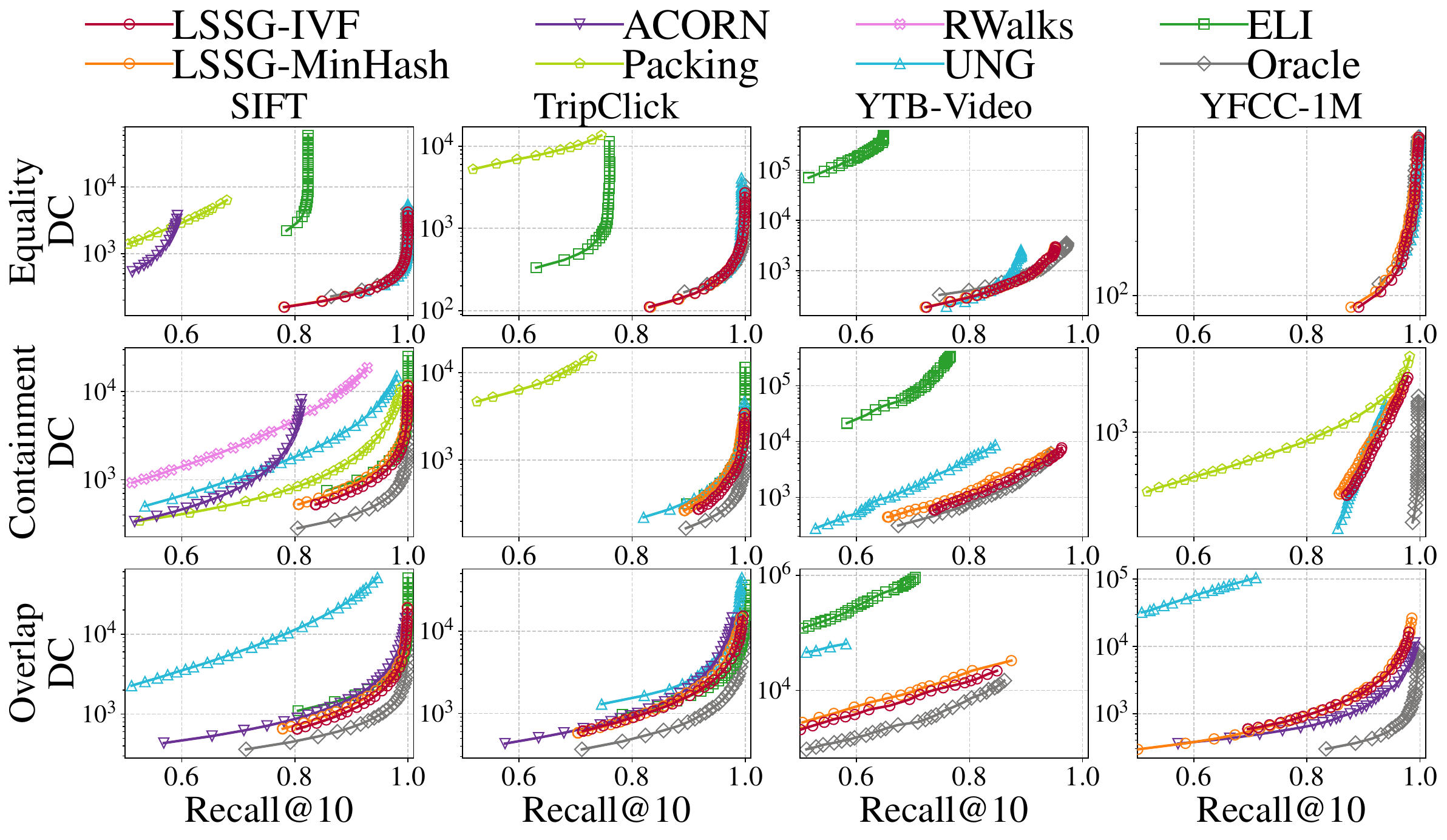}
    \caption{DC of LFANNS indices on other datasets}
    \Description{DC of LFANNS indices on other datasets}
    \label{fig:other_dc_recall}
\end{figure}

\begin{figure}
    \centering
    \includegraphics[width=\linewidth]{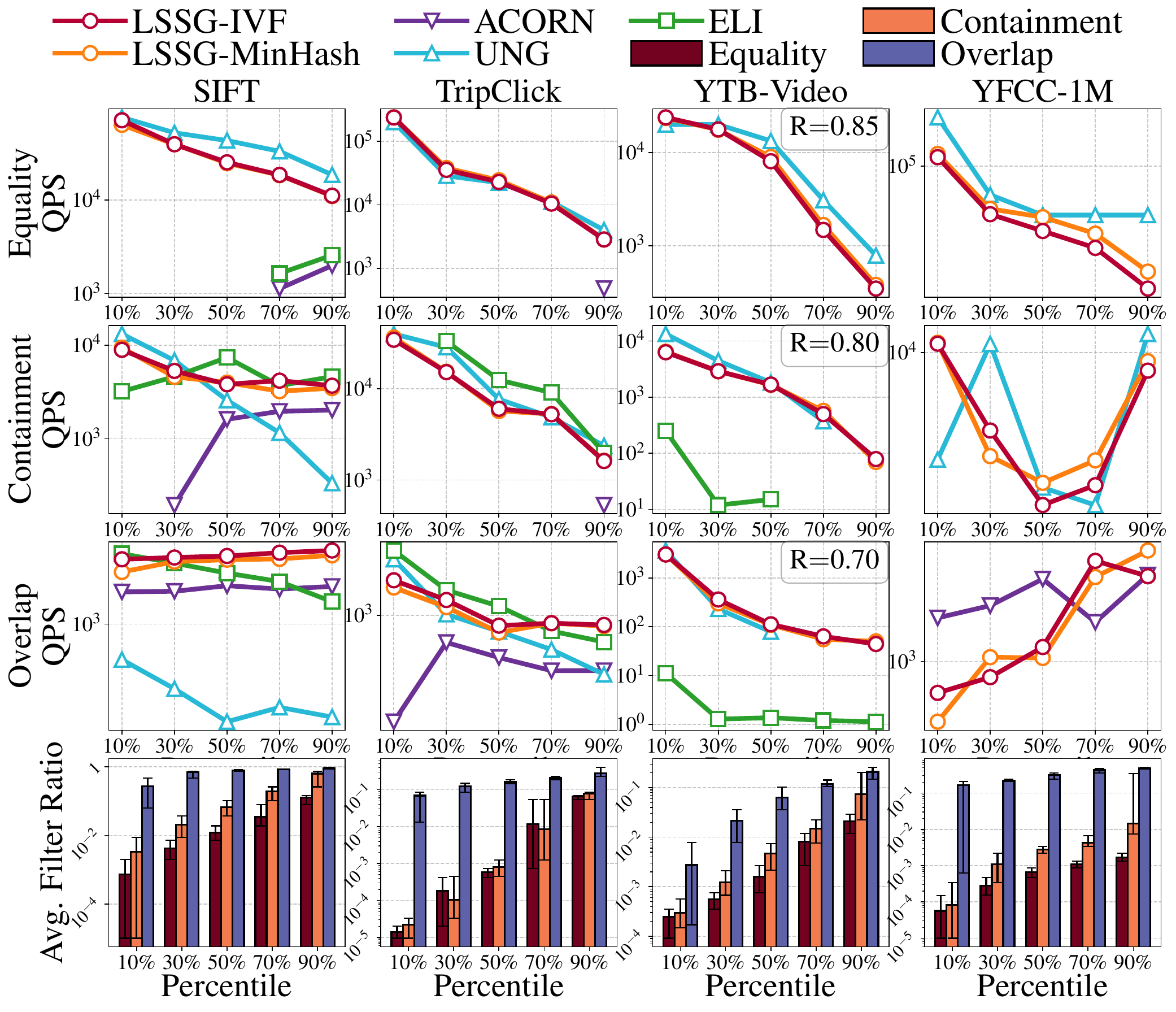}
    \caption{Selectivity breakdown on other datasets}
    \Description{Selectivity breakdown on SIFT, TripClick, and YFCC-1M}
    \label{fig:other_qps_sel}
\end{figure}

\begin{table}[t!]
\centering
\caption{Indexing performance of ANNS systems}
\label{tab:system-indexing}
\setlength{\tabcolsep}{2.5pt}
\resizebox{\columnwidth}{!}{%
\begin{tabular}{lrrrrrrrr}
\toprule
\multicolumn{9}{c}{\textbf{Index size (MB)}} \\
\midrule
\textbf{Indices} & SIFT & GIST & Trip & LAION & Video & Audio & Wiki & YFCC \\
\midrule
FAISS-IVF
& 13 & 41 & 37 & 26 & 47 & 27 & 53 & 12 \\
FAISS-HNSW
& 138 & 134 & 145 & 136 & 672 & 672 & 688 & 135 \\
VSAG-IVF
& 207 & 137 & 117 & 155 & 148 & 236 & 170 & 138 \\
VSAG-HNSW
& 177 & 203 & 233 & 246 & 209 & 867 & 810 & 186 \\
\midrule
Milvus-IVF
& 8 & 11 & 11 & 10 & 12 & 39 & 40 & 8 \\
Milvus-HNSW
& 488 & 488 & 516 & 488 & 488 & 2{,}441 & 2{,}441 & 488 \\
PGVector-IVF
& 37 & 304 & 1{,}037 & 656 & 3{,}914 & 167 & 493 & 58 \\
PGVector-HNSW
& 478 & 4{,}016 & 1{,}024 & 638 & 3{,}905 & 2{,}453 & 2{,}793 & 505 \\
\midrule
LSSG-IVF
& 311 & 204 & 199 & 271 & 136 & 1{,}069 & 602 & 239 \\
LSSG-MinHash
& 296 & 192 & 200 & 277 & 142 & 1{,}065 & 606 & 303 \\
\toprule
\multicolumn{9}{c}{\textbf{Indexing time (s)}} \\
\midrule
\textbf{Indices} & SIFT & GIST & Trip & LAION & Video & Audio & Wiki & YFCC \\
\midrule
FAISS-IVF
& 1{,}546 & 9{,}107 & 8{,}043 & 4{,}880
& 9{,}994 & 3{,}507 & 10{,}920 & 1{,}851 \\
FAISS-HNSW
& 30 & 149 & 129 & 71 & 74 & 192 & 400 & 42 \\
VSAG-IVF
& 94 & 473 & 350 & 253 & 595 & 250 & 457 & 119 \\
VSAG-HNSW
& 31 & 82 & 79 & 46 & 59 & 163 & 311 & 29 \\
\midrule
Milvus-IVF
& 84 & 438 & 387 & 269 & 440 & 446 & 968 & 103 \\
Milvus-HNSW
& 87 & 678 & 350 & 279 & 410 & 483 & 1{,}257 & 132 \\
PGVector-IVF
& 90 & 631 & 583 & 356 & 750 & 447 & 1{,}237 & 180 \\
PGVector-HNSW
& 238 & 1{,}092 & 923 & 544
& 1{,}128 & 1{,}136 & 2{,}841 & 276 \\
\midrule
LSSG-IVF
& 133 & 241 & 270 & 181 & 287 & 946 & 487 & 379 \\
LSSG-MinHash
& 122 & 224 & 233 & 179 & 243 & 721 & 426 & 111 \\
\bottomrule
\end{tabular}%
}
\end{table}

\bparagraph{Indexing performance of ANNS systems.}
Table~\ref{tab:system-indexing} illustrates the index size and indexing time of the systems compared in Fig.~\ref{fig:vdb-compare}.
Across eight datasets, LSSG incurs moderate construction costs, with median index size and indexing time of 287~MB and 229~s, respectively.
FAISS and VSAG HNSW variants build quickly with moderate space, whereas IVF variants are compact but substantially slower to construct (median time: 302--6{,}462~s).
Milvus and PGVector provide integrated filtering but often incur larger index size and comparable or higher construction time (median: 380--1{,}008~s).
The two LSSG variants have similar costs, and offer a balanced construction trade-off while achieving stronger LFANNS performance.

\begin{figure}
    \centering
    \includegraphics[width=\linewidth]{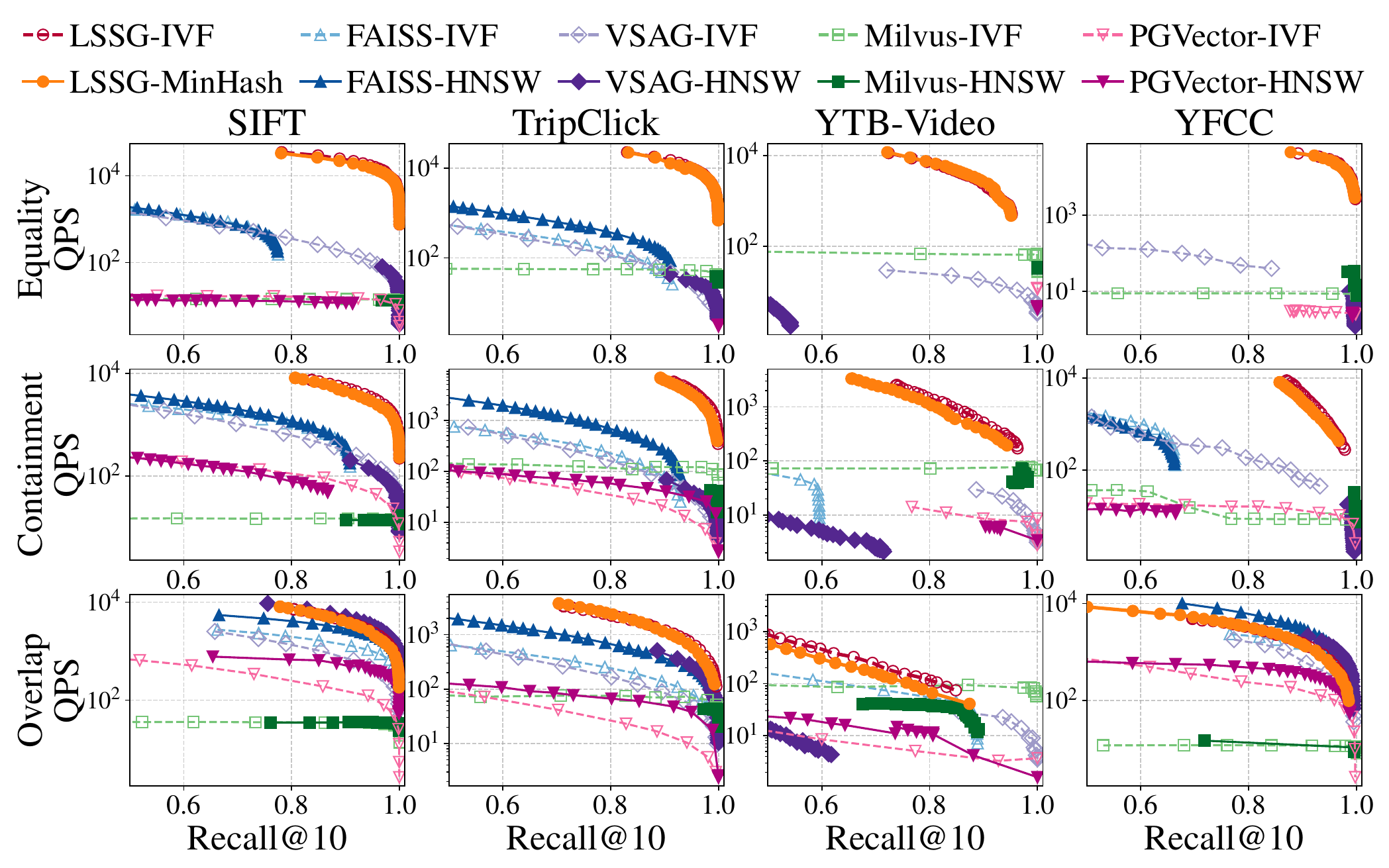}
    \caption{Query performance of ANNS systems on other datasets}
    \Description{Query performance of ANNS systems on other datasets}
    \label{fig:vdb-compare-other}
\end{figure}

\bparagraph{Query performance of ANNS systems on other datasets.}
Fig.~\ref{fig:vdb-compare-other} shows the system results on SIFT, TripClick, YTB-Video, and YFCC-1M.
FAISS remains recall-limited, attaining at most 0.13--0.91 for equality and 0.60--0.93 for containment across these datasets.
On YTB-Video overlap, it reaches only 0.89 recall.
At 0.95 recall for containment, the best VSAG, Milvus, and PGVector variants achieve 18--136, 17--122, and 9--44~QPS, respectively, while LSSG remains on the upper QPS--recall frontier.
Although library implementations can be competitive for overlap on SIFT, TripClick, and YFCC-1M, their performance varies substantially across workloads and degrades on YTB-Video.
Overall, LSSG provides more consistent high-recall performance across datasets and filter semantics.

\begin{figure}
    \centering
    \includegraphics[width=\linewidth]{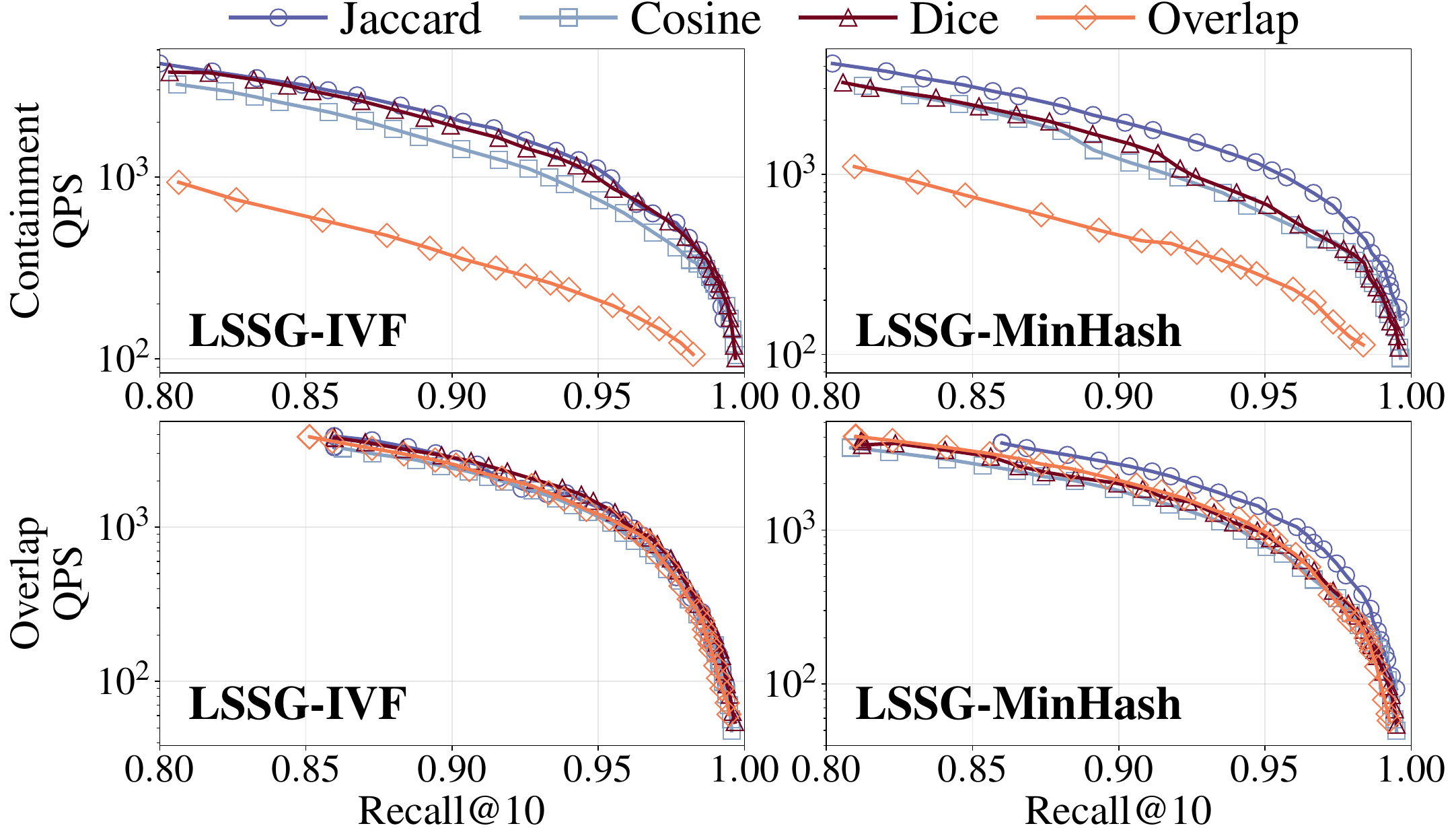}
    \caption{Comparison of different label set distances}
    \Description{Comparison of different label set distances}
    \label{fig:setdistance}
\end{figure}

\bparagraph{Impact of different label set distances.}
Fig.~\ref{fig:setdistance} compares tiering on LAION using distances induced by alternative label set similarities, including cosine ($\frac{|L_1 \cap L_2|}{\sqrt{|L_1|\times|L_2|}}$), Dice ($\frac{2|L_1\cap L_2|}{|L_1|+|L_2|}$), and overlap ($\frac{|L_1\cap L_2|}{\min(|L_1|,|L_2|)}$).
\begin{enumerate*}[left=0pt]
    \item Jaccard is the best fit for tiering.
    Stratification by overlap distance degrades containment because it provides weaker granularity for separating tiers, underutilizing higher tiers and weakening progressive navigation.
    
    \item MinHash benefits from metric alignment.
    Non-Jaccard distances hurt LSSG-MinHash more than LSSG-IVF, since MinHash is an unbiased estimator for Jaccard similarity but not for the other distance measures.
\end{enumerate*}

\begin{figure}
    \centering
    \includegraphics[width=0.96\linewidth]{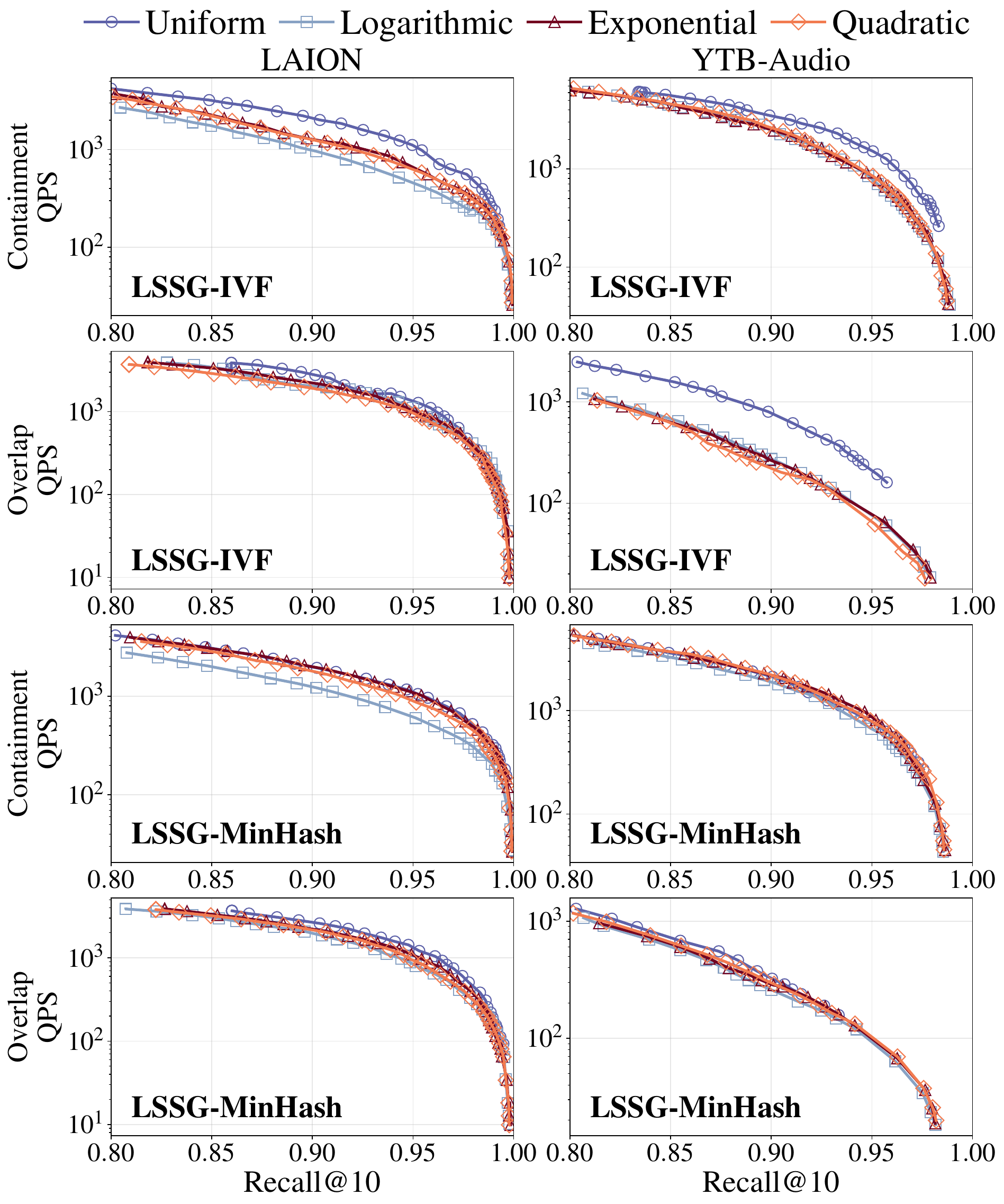}
    \caption{Impact of Jaccard distance threshold distribution}
    \Description{Impact of Jaccard distance threshold distribution}
    \label{fig:jc_dist}
\end{figure}

\bparagraph{Impact of the Jaccard threshold distribution.}
In Def.~\ref{def:lssg}, we use uniformly spaced Jaccard-distance thresholds across tiers.
We additionally evaluate logarithmic ($\theta_t=1-\frac{\log(t)}{\log(T)}$), exponential ($\theta_t=1-\frac{\exp(\frac{t-1}{T-1})-1}{e-1}$), and quadratic ($\theta_t=1-(\frac{t-1}{T-1})^2$) schedules.
As shown in Fig.~\ref{fig:jc_dist}, LSSG-MinHash is robust across all threshold schedules, suggesting that its probing step tolerates moderate changes in tier granularity.
LSSG-IVF is more sensitive.
Non-uniform schedules can over-emphasize a narrow similarity range, while the uniform schedule covers a broader spectrum of label set similarity and yields better containment and overlap performance in our experiments.

\begin{figure}
\centering
\begin{subfigure}[t]{\linewidth}
\centering
\includegraphics[width=\linewidth]
{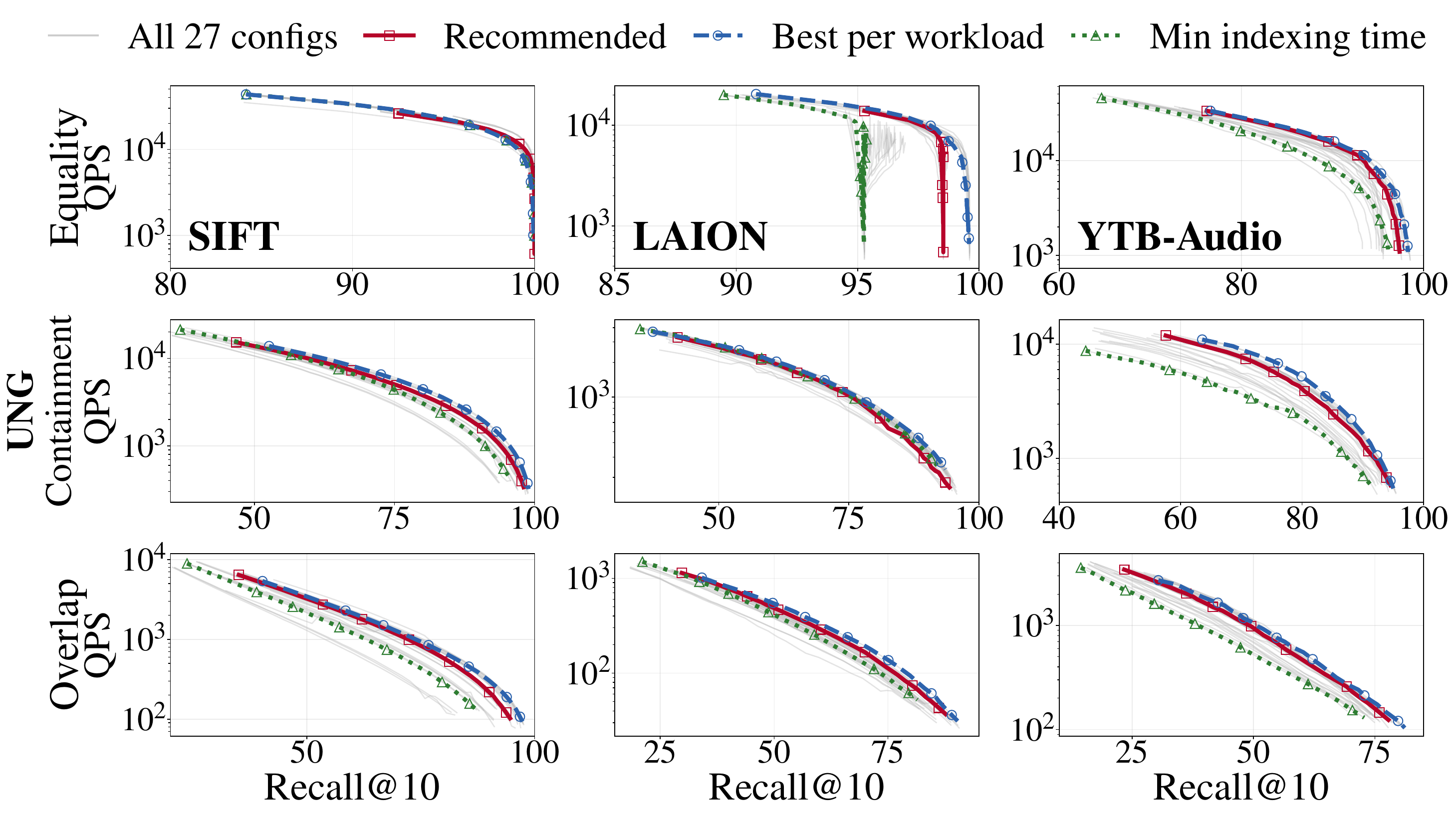}
\label{fig:ung_param_audit}
\end{subfigure}

\par\vspace{-4pt}

\begin{subfigure}[t]{\linewidth}
\centering
\includegraphics[width=\linewidth]
{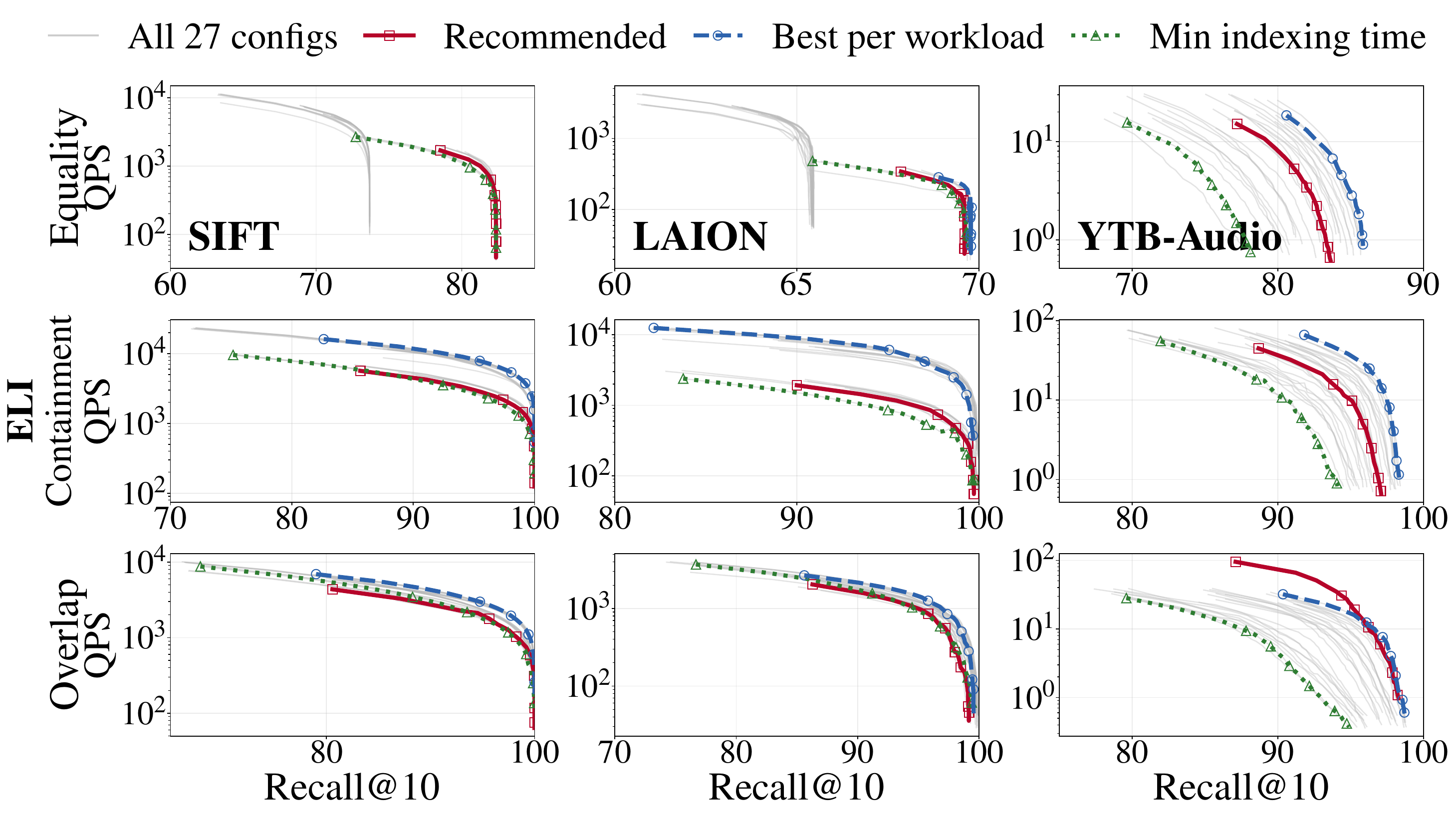}
\label{fig:eli_param_audit}
\end{subfigure}
\vspace{-4pt}

\caption{Query-side parameter audits}
\Description{Query-side parameter audits}
\label{fig:baseline_param_audit}
\end{figure}

\bparagraph{Baseline parameter audit.}
To ensure fair baseline tuning, we evaluate all 27 construction configurations of UNG and ELI on SIFT, LAION, and YTB-Audio.
For UNG, we sweep $M\in\{16,32,64\}$, $L_{\mathrm{build}}\in\{50,100,200\}$, and $\delta\in\{3,6,12\}$. 
For ELI, we sweep $M\in\{8,16,32\}$, $\mathrm{efc}\in\{128,200,256\}$, and $e\in\{0.2,0.4,0.6\}$.
Fig.~\ref{fig:baseline_param_audit} shows all swept results and highlights the recommended, query-best, and minimum indexing time settings.
Table~\ref{tab:baseline_build_param_audit} reports the construction costs of the recommended and query-best settings.

For containment, tuning UNG improves QPS by 1.23--1.33x, but the gains are accompanied by construction or cross-semantic penalties.
The query-best settings require 1.12--2.25x indexing time of the recommended setting.
Index size increases by 83\% on SIFT and 37\% on YTB-Audio.
They also degrade other semantics: for equality, QPS drops by 13\% on SIFT and 14\% on YTB-Audio, while on LAION the maximum recall decreases from 0.877 to 0.832.
For overlap, the best settings show the same trade-off.
On SIFT, the 1.40x QPS gain reduces that for equality and containment by 21\% and 23\%, while increasing index size and indexing time by 55\% and 18\%.
On LAION and YTB-Audio, the maximum recall for overlap increases from 0.877/0.778 to 0.903/0.812, but QPS for equality decreases by 13\%/14\% and index size increases by 28\%/63\%; YTB-Audio also requires 30\% longer indexing.
Thus, UNG's query-best settings improve the target workloads by trading off construction cost or performance under other filter semantics, rather than providing a uniformly stronger replacement.

ELI exhibits a clearer dependence on both dataset and filter semantics.
The recommended $e=0.2$ gives the best accuracy for equality on SIFT and LAION, whereas containment favors $e=0.6$ and overlap favors $e=0.4$ on these datasets.
Tuning improves QPS by 2.08--3.95x for containment and by at most 1.64x for overlap, but reduces maximum recall for equality by 10.6\% on SIFT and 6.3\% on LAION.
These settings are therefore not uniformly stronger replacements.
Even $(16,128,0.2)$, which retains $e=0.2$ and improves several SIFT/LAION workloads, loses on YTB-Audio overlap, where the recommended setting achieves the best result.
The best setting for containment on YTB-Audio instead increases indexing time by 23\%.
Thus, no alternative among the 27 configurations consistently improves the recommended setting across all datasets and semantics.

In summary, using the query-best audited configurations changes some point-wise winners but not the overall cross-semantics comparison.
LSSG-MinHash remains 1.33--5.24x faster than query-best UNG on all three containment workloads, while no audited UNG configuration reaches 0.95 recall for overlap on LAION or YTB-Audio.
Query-best ELI is 2.63x and 4.49x faster than LSSG-MinHash on SIFT and LAION for containment and matches LSSG-MinHash on SIFT for overlap.
Conversely, LSSG-MinHash is 1.20x faster on LAION for overlap and 32.0x/10.1x faster on YTB-Audio for containment/overlap.
Thus, tuning changes individual winners but yields no baseline with consistent dominance across the audited datasets and semantics compared.
LSSG retains the advantage of consistently strong performance across all semantics with one index.

\begin{table}
    \centering
    \caption{Parameter audits of UNG and ELI.
    ``Rec.'' is the index size and indexing time on the recommended settings. 
    The recommend settings are $(32,100,6)$ for UNG and $(16,200,0.2)$ for ELI.
    ``Query-best'' is the setting with the best query performance per workload (blue curves in Fig.~\ref{fig:baseline_param_audit}).}
    \label{tab:baseline_build_param_audit}
    \small
    \setlength{\tabcolsep}{2.5pt}
    \resizebox{\columnwidth}{!}{%
    \begin{tabular}{lllccc}
        \toprule
        \textbf{Indices} & \textbf{Datasets} & \textbf{Sem.}
        & \textbf{\makecell{Rec.\\(MB / s)}}
        & \textbf{\makecell{Query-best\\setting}}
        & \textbf{\makecell{Query-best\\(MB / s)}} \\
        \midrule
        \multirow{9}{*}{\textbf{UNG}}
        & \multirow{3}{*}{SIFT}
        & Eq.   & 228 / 76   & $(16,50,3)$   & 169.6 / 33.7 \\
        & & Cont. & 228 / 76   & $(64,200,6)$  & 417.6 / 170.8 \\
        & & Ovlp. & 228 / 76   & $(64,100,3)$  & 353.1 / 89.5 \\
        \cmidrule(l){2-6}
        & \multirow{3}{*}{LAION}
        & Eq.   & 271 / 189  & $(16,50,12)$  & 234.7 / 87.9 \\
        & & Cont. & 271 / 189  & $(16,200,6)$  & 193.3 / 211.2 \\
        & & Ovlp. & 271 / 189  & $(64,50,12)$  & 346.3 / 126.5 \\
        \cmidrule(l){2-6}
        & \multirow{3}{*}{YTB-Audio}
        & Eq.   & 954 / 1,422 & $(32,100,12)$ & 1,220.5 / 1,716.4 \\
        & & Cont. & 954 / 1,422 & $(64,200,6)$  & 1,303.5 / 1,870.6 \\
        & & Ovlp. & 954 / 1,422 & $(64,200,12)$ & 1,552.8 / 1,843.1 \\
        \midrule
        \multirow{9}{*}{\textbf{ELI}}
        & \multirow{3}{*}{SIFT}
        & Eq.   & 419 / 98   & $(16,200,0.2)$ & 419 / 98 \\
        & & Cont. & 419 / 98   & $(16,128,0.6)$ & 615.0 / 88.8 \\
        & & Ovlp. & 419 / 98   & $(16,128,0.4)$ & 615.0 / 88.8 \\
        \cmidrule(l){2-6}
        & \multirow{3}{*}{LAION}
        & Eq.   & 745 / 257  & $(32,256,0.2)$ & 1,331.2 / 399.1 \\
        & & Cont. & 745 / 257  & $(8,128,0.6)$  & 597.0 / 182.7 \\
        & & Ovlp. & 745 / 257  & $(16,128,0.4)$ & 968.0 / 245.0 \\
        \cmidrule(l){2-6}
        & \multirow{3}{*}{YTB-Audio}
        & Eq.   & 92,693 / 807 & $(32,256,0.4)$ & 94,141.4 / 912.1 \\
        & & Cont. & 92,693 / 807 & $(32,256,0.6)$ & 94,915.3 / 989.3 \\
        & & Ovlp. & 92,693 / 807 & $(32,256,0.6)$ & 94,915.3 / 989.3 \\
        \bottomrule
    \end{tabular}
    }
\end{table}

\end{document}